\pgfplotsset{compat=newest}
\newtheorem{theorem}{Theorem}
\newtheorem{lemma}{Lemma}
\newtheorem{remark}{Remark}
\newtheorem{assumption}{Assumption}
\newlength\figureheight
\newlength\figurewidth
\newlength\fwidth
\newcommand{\Rmnum}[1]{\expandafter\@slowromancap\romannumeral #1@}
\newcommand{\executeiffilenewer}[3]{%
	\ifnum\pdfstrcmp{\pdffilemoddate{#1}}%
	{\pdffilemoddate{#2}}>0%
	{\immediate\write18{#3}}\fi%
}
\newcommand{%
	\executeiffilenewer{.svg}{.pdf}%
	{inkscape -z -D --file=.svg %
		--export-pdf=.pdf --export-latex}%
	\input{.pdf_tex}%
}[1]{%
	\executeiffilenewer{#1.svg}{#1.pdf}%
	{inkscape -z -D --file=#1.svg %
		--export-pdf=#1.pdf --export-latex}%
	\input{#1.pdf_tex}%
}
\DeclareMathOperator*{\argmin}{arg\; min}     
\DeclareMathOperator*{\argmax}{arg\; max}     
\DeclareMathOperator*{\tr}{tr}     
\DeclareMathOperator{\Cov}{Cov}
\DeclareMathOperator{\rank}{rank}
\DeclareMathOperator{\diag}{diag}
\DeclareMathOperator{\logdet}{log\;det}
\title{An On-line Design of Physical Watermarks}
\author{Hanxiao Liu, Jiaqi Yan, Yilin Mo, Karl Henrik Johansson
	\thanks{
		H. Liu, J. Yan and Y. Mo are with the School of Electrical and Electronic Engineering, Nanyang Technological University, Singapore. Email: {\{hanxiao001, jyan004, ylmo\}@ntu.edu.sg}.
	}
	\thanks{
		K.H. Johansson is with the ACCESS and the Department of Automatic	Control, the School of Electrical Engineering, KTH Royal Institute of Technology, Sweden. Email: {kallej@kth.se}.
	}
	\thanks{This work is supported by Grant No. SERC A1788a0023 from ASTAR*, Singapore.}
}
\begin{document} \maketitle
\begin{abstract}
This paper considers the problem to design physical watermark signals to protect a control system against replay attacks. We first define the replay attack model, where an adversary replays the previous sensory data in order to fool the system. The physical watermarking scheme, which leverages a random control input as a watermark, to detect the replay attack is introduced. The optimal watermark signal design problem is then proposed as an optimization problem, which achieves the optimal trade-off between the control performance and attack detection performance. For the system with unknown parameters, we provide a procedure to asymptotically derive the optimal watermarking signal. Numerical examples are provided to illustrate the effectiveness of the proposed strategy.
\end{abstract}

\begin{IEEEkeywords}
	Control, Cyber-Physical Systems(CPS), secure, replay attack, physical watermark, on-line ``learning''.
\end{IEEEkeywords}

\section{Introduction}
\IEEEPARstart{C}{yber}-Physical Systems (CPS) are the systems which offer close integration and combination between computational elements and physical processes \cite{lee2016introduction}. CPS are also defined as the system where \textquotedblleft \textit{physical and software components are deeply intertwined, each operating on different spatial and temporal scales, exhibiting multiple and distinct behavioral modalities, and interacting with each other in a myriad of ways that change with context}\textquotedblright\ \cite{bworld}. CPS play an important role in a large variety of fields, such as manufacturing, health care, environment control, transportation, military and infrastructure construction, and so on. 

Due to the wide applications and critical functions of the CPS, increasing importance has been attached to the security of CPS \cite{humayed2017cyber,sandberg2015cyberphysical}. A successful attack can jeopardize critical infrastructure and people's lives and properties, even threaten national security. In 2010, the Stuxnet malware made a devastating attack on Iranian uranium enrichment centrifuges~\cite{langner2011stuxnet}, which motivates the research community to pay more attention to the secure CPS design and defend mechanisms\cite{ani2017review} \cite{doi:10.1080/23742917.2016.1252211}. 

However, CPS security faces a wide variety of challenges. Cardenas \textit{et al.} \cite{cardenas2009challenges} discussed three main challenges that the security of CPS faces and identify the unique properties of CPS security when compared with traditional IT security. Besides, security mechanisms capable of CPS are also analyzed and some new challenges based on the physical model of process control. Similar discussion can be found in \cite{neuman2009challenges}. Gollmann and Krotofil \cite{gollmann2016cyber} pointed out that people performing the security analysis of CPS is also a key challenge. The authors argue that it is difficult for people to expertise in cyber and physical safety and able to appreciate their limitations in their own domain. 

\subsection{Previous Work}
The research community has made significant efforts in intrusion, failure and anomaly detection to enhance CPS security in recent years. Zimmer \textit{et al.} \cite{zimmer2010time} presented three mechanisms  for time-based intrusion detection. The techniques, through bounds checking, are developed in a self-checking manner by the application and through the operating system scheduler. Mitchell and Chen \cite{mitchell2011hierarchical} proposed a hierarchical performance model and techniques for intrusion detection in CPS. They classified the modern CPS intrusion detection system techniques into two classes: detection technique and audit material and summarized advantages and disadvantages in \cite{mitchell2014survey}. Kwon \textit{et al.} \cite{kwon2013security} discussed the necessary and sufficient conditions where the attacker could implement attack without being detected, which can be employed to evaluate vulnerability degree of certain CPS. And corresponding detection and defense methodologies against stealthy deception attacks also can be developed. In \cite{pasqualetti2013attack}, the authors proposed a mathematical framework for CPS and investigated limitations of the fundamental monitoring system. Besides, centralized and distributed attack detection and identification monitors were also provided.

In this paper, we consider the detection problem of replay attack, which is motivated by the Stuxnet malware mentioned previously. In \cite{mo2009secure,mo2015physical, mo2014detecting}, a replay attack model is defined and its effect on a steady-state control system is analyzed. An algebraic condition is provided on the detectability of the replay attack and for those systems that cannot detect replay attack efficiently, a physical watermarking scheme is proposed to enable the detection of the presence of the attack, by injecting a random control signal, namely watermark signal, into the control system. However, the watermark signal will deteriorate the control performance, and therefore it is important to find the optimal trade-off between the control performance loss and the detection performance, which can be casted as an optimization problem. Similar ``watermarking'' schemes are also proposed in the literature \cite{khazraei2017new,Satchidanandan2017, khazraei2017replay}. 

Different from the previous additive watermarking schemes, a multiplicative sensor watermarking scheme is proposed in \cite{Ferrari2017}. In this scheme, each output is respectively fed to a SISO watermark generator and due to the inclusion of a watermark removing functionality, the control performance will not be sacrificed. Miao \textit{et al.} \cite{FeiMiao2013} proposed the use of non-cooperative stochastic games to design a suboptimal switching control policy that balances control performance with the intrusion detection rate for replay attacks.  Hoehn and Zhang \cite{hoehn2016detection} provided a novel technique via injecting non-regular time intervals to the system and checking signal processing to detect the replay attack. Another advantage of the proposed approach is the possibility of elegant implementation into existing control systems.  
Other replay attack detection mechanisms has been proposed in the literature \cite{Shoukry2015}.

It is worth noticing that in the majority of the aforementioned researches, the precise knowledge of the system parameters is assumed in order to design the detector and the watermarking signal. However, acquiring the parameters may be troublesome and costly. Hence, it is beneficial for the system to ``learn'' the parameters during its operation and automatically design the detector and watermarking signal in real-time. Motivated by this idea, in this paper, we propose a ``learning mechanism'' to infer the system parameters as well as a physical watermark and detector design that asymptotically converges to the optimal ones. 

\subsection{Outline}
The goal of this paper is to develop a data-driven approach to design physical watermark signals to protect a cyber-physical system with unknown parameters against replay attack. First, we consider this problem under the condition of known parameters. The physical watermarking scheme leverages a random control input as a watermark to detect replay attack. The watermark signal is designed to achieve the optimal trade-off between the control performance loss and attack detection performance. Subsequently, for the system with unknown parameters, we provide an on-line ``learning'' scheme to asymptotically derive the optimal watermarking signal.   

The main contributions of this paper are as follows:
\begin{enumerate}
	\item
	The detection problem of replay attack via ``physical watermark'' with known system parameters is discussed, and a countermeasure of designing the watermarking signal that achieves the optimal trade-off between the control performance and detection performance is proposed. 
	\item 
	To the best of our knowledge, it is the first time that the detection problem of replay attack via ``physical watermark'' with unknown system parameters is discussed.
	\item 
	An on-line ``learning'' procedure is provided to asymptotically derive the optimal watermarking signal and optimal detector. 
\end{enumerate}

The rest of paper is organized as follows. Section \Rmnum{2} formulates the problem by introducing the system as well as the attack model. The physical watermarking scheme is introduced in Section \Rmnum{3}. In Section \Rmnum{4}, we present an on-line ``learning'' scheme based on the input and output data to infer the parameters of the system and design the watermark signal and the detector based on the estimated parameters. We further prove the almost sure convergence of the watermark signal to the optimal one. In Section \Rmnum{5}, numerical example is provided to verify the effectiveness of the proposed technique. Concluding remarks are given in Section \Rmnum{6}. Some proofs of theorems are included in the appendix.

\subsection*{Notations}
$\|A\|_F$ is the Frobenius norm of an $m\times n$ matrix $A$ defined as $\|A\|_F = \sqrt{\sum_{i=1}^m\sum_{j=1}^n(A_{i,j})^2}$, where $A_{i,j}$ is the $i$th row, $j$th column element of the matrix $A$. $A\otimes B$ is the Kronecker product of matrix $A$ and $B$. $A > 0$ denotes that the matrix $A$ is positive definite.

\section{Problem Formulation}
In this section, we present the problem formulation by introducing cyber-physical system model and the replay attack model, which will be employed for the rest of this paper.

We consider an linear time-invariant system described by the following equations:
\begin{align}
  x_{k+1} &= A x_k + w_{k}, \label{eq:systemdynamic}
\end{align}	
where $x_k\in \mathbb R^{n}$ is the state vector at time $k$, and $w_{k}\in \mathbb{R}^{n}$ is the zero mean Gaussian process noise with covariance $Q > 0$.

A sensor network monitors the above system. The observation equation is given by
\begin{align}
y_{k}  &= C x_k + v_k \label{eq:sensor}, 
\end{align}	
where $y_{k}\in \mathbb{R}^{m}$ is the sensor's measurement at time $k$. $v_{k}\in \mathbb{R}^{n}$ is the zero mean Gaussian measurement noise with covariance $R > 0$. 

We assume that $w_0,w_1,\cdots$ and $v_0,v_1,\cdots$ are independent of each other. Furthermore, since Cyber Physical Systems usually operate for an extended period of time, it is assumed that the system is already in the steady state, which means that $x_0$ is a zero mean Gaussian random vector independent of the process noise and the measurement noise and with covariance $\Sigma$, where $\Sigma$ satisfies:
\begin{align}
\Sigma = A\Sigma A^T+Q.
\label{eq:Sigmadef}
\end{align}

We further make the following assumptions regarding the above system:
\begin{assumption}
  The system is strictly stable. Furthermore, $(A,C)$ is observable and if system equation \eqref{eq:systemdynamic} has input matrix $B$, $(A,B)$ is controllable, which will be used after the Section \Rmnum{3}.
\end{assumption}

\begin{remark}
	The observability and controllability assumption is without loss of generality as we can perform a Kalman decomposition \cite{chen1998linear} and only work with the observable and controllable subspace.
\end{remark}

\begin{remark}
	At first glance it appears that the stability requirement regarding the system matrix $A$ could be strictly. However, we will show that our formulation can be extended to closed-loop systems. As a result, the techniques developed in this work can be used on a closed-loop system equipped with a stabilizing controller. 
\end{remark}

Next we introduce the replay attack model. We assume that the adversary have the following capabilities:
\begin{enumerate}	
\item The attacker has access to all the real-time sensory data. In other words, it knows the sensor's measurement $y_0,\cdots, y_k$ at time $k$.
\item The attacker can modify the real sensor signals $y_k$ to arbitrary sensor signals $ y_k' $.
\end{enumerate}

Given these capabilities, the adversary can employ the following  replay attack strategy:
\begin{enumerate}
\item 
  The attacker records a sequence of sensor measurements $y_k$s from time $k_1$ to $k_1+T$, where $T$ is large enough to guarantee that the attacker can replay the sequence for an extended period of time during the attack. 
\item The attacker modifies the sensor measurements $y_k$ to the recorded signals from time $k_2$ to $k_2+T$, i.e.,
  \begin{align*}
    y_{k}' = y_{k-\Delta k},\ \forall\; k_2\leq k\leq (k_2+T), 
  \end{align*}
  where $\Delta k = k_2 - k_1$.
\end{enumerate}

Notice that since the system is already in the steady state, both the replayed signal $y_k'$ and the real signal $y_k$ from the sensors will have exactly the same statistics. As a result, for a large class of linear systems, the replayed signal and the real signal become indistinguishable after a short transient time period. In other words, it is useless for the control system to use $\chi^2$ detector to detect anomalies. For more detailed discussion, please refer to \cite{mo2009secure}.

On the other hand, the harm of the replay attack can be devastating for CPS, as is shown by the Stuxnet worm. Hence, it is crucial to detect the presence of the attack. In the next section, we introduce a detection mechanism using ``physical watermarking''.

\section{Physical Watermarking Scheme}

This section is devoted to the detection of replay attack via physical watermarking. The main idea of physical watermarking is to inject a random noise $\phi_k$, which is called the watermark signal, to excite the system and check whether the system responds to the watermark signal in accordance to the dynamical model of the system. Specifically, it is assumed that the system equation \eqref{eq:systemdynamic} is modified to
\begin{align}
  x_{k+1} = Ax_k + B \phi_k + w_k,
\end{align}
where $\phi_k\in\mathbb R^p$ is the watermark signal applied to the system at time $k$, which is usually assumed to be i.i.d. zero mean Gaussian with covariance $U$. 

In the absence of the attack, $y_{k}$ can be represented as:
\begin{align} 
  y_{k}&=\sum_{t=0}^{k-1} CA^{t}B  \phi_{k-1-t} + \sum_{t=0}^{k-1}  CA^{t} w_{k-1-t}+v_{k} + CA^k x_0.\nonumber
\end{align}
For simplicity, we define
\begin{align*}
\gamma_{k} &\triangleq \sum_{t=0}^{k} CA^{t}B  \phi_{k-t},\\
\vartheta_k &\triangleq \sum_{t=0}^{k}  CA^{t} w_{k-t}+v_{k+1} + CA^{k+1} x_0,
\end{align*}
hence, $y_{k}$ can be rewritten in the following form:
\begin{align} 
y_{k}=\gamma_{k-1} + \vartheta_{k-1}. \label{eq6}
\end{align}
Furthermore, we define 
\begin{align*}
H_\tau \triangleq CA^\tau B,
\end{align*}
hence, it is easy to know that $\gamma_{k-1}$ is a zero mean Gaussian whose covariance converges to $\mathscr U$, where
\begin{align*}
 \mathscr U = \sum_{\tau=0}^\infty H_\tau UH_\tau^T.
\end{align*}
Similarly, we can know that $\vartheta_{k-1}$ is a zero mean Gaussian noise whose covariance converges to $\mathscr W = C\Sigma C^T+R$, where $\Sigma$ is defined in \eqref{eq:Sigmadef}.

As a result, given $\phi_0, \cdots, \phi_{k-1}$, the conditional distribution of $y_k$ converges to a Gaussian distribution with mean $\gamma_{k-1}$ and covariance $\mathscr W$.

Then let us consider the scenario where the replay attack exists, the replayed $y_k'$ can be written as
\begin{align*}
  y_{k}' = y_{k-\Delta k}= \gamma_{k-1-\Delta k} + \vartheta_{k-1-\Delta k}
\end{align*}

Now since $\Delta k$ is unknown to the system operator, it is safe to assume that given $\phi_0,\cdots, \phi_{k-1}$, $y_k'$ is zero mean Gaussian with covariance $\mathscr U+ \mathscr W$.

As a result, we can design a detector to differentiate the distribution of $y_k$ under the following two hypotheses:
\begin{enumerate}
	\item[$\mathcal H_0$:] The sensor measurement $y_{k}$ follows a Gaussian distribution $ \mathcal{N}_0(\gamma_{k-1}, \mathscr W) $.
	\item[$\mathcal H_1$:] The sensor measurement $ y_{k} $ follows a Gaussian distribution $ \mathcal{N}_1(0, \mathscr U + \mathscr W) $.
\end{enumerate}

By the Neyman-Pearson lemma \cite{scharf1991statistical}, the Neyman-Pearson detector for hypothesis $\mathcal H_0$ versus hypothesis $\mathcal H_1$ takes the following form:
\begin{theorem}\label{theorem 3_1}
	The Neyman-Pearson detector rejects $\mathcal H_0$ in favor of $\mathcal H_1$ if
	\begin{equation}\label{eq11}
		\begin{split}
		  &g(y_{k},\phi_{k-1},\phi_{k-2},\cdots)\\
		  =& \big(y_{k}- \gamma_{k-1}\big)^{{T}}  \mathscr W^{-1}\big(y_{k}- \gamma_{k-1}\big)- y_{k}^{{T}}  \left(\mathscr W+\mathscr U\right)^{-1}y_{k}\\
		  \geq& \zeta,
		\end{split}
	\end{equation}
	where $\zeta$ is a predetermined threshold which depends on the desired false alarm rate. Otherwise, hypothesis $ \mathcal H_{0} $ is accepted.
\end{theorem}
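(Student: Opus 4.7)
The plan is to apply the Neyman-Pearson lemma directly: the optimal test rejects $\mathcal{H}_0$ whenever the likelihood ratio $L(y_k)=p_1(y_k)/p_0(y_k)$ exceeds some threshold $\lambda$, and I need only show that taking logarithms and simplifying yields the scalar statistic $g$ displayed in \eqref{eq11}, with all deterministic constants absorbed into the threshold $\zeta$.

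First, I would write down the two Gaussian densities explicitly. Under $\mathcal{H}_0$, conditional on $\phi_0,\ldots,\phi_{k-1}$ (which determine $\gamma_{k-1}$), the density of $y_k$ is
\begin{equation*}
p_0(y_k)=\frac{1}{(2\pi)^{m/2}|\mathscr{W}|^{1/2}}\exp\!\left(-\tfrac{1}{2}(y_k-\gamma_{k-1})^T\mathscr{W}^{-1}(y_k-\gamma_{k-1})\right),
\end{equation*}
while under $\mathcal{H}_1$,
\begin{equation*}
p_1(y_k)=\frac{1}{(2\pi)^{m/2}|\mathscr{W}+\mathscr{U}|^{1/2}}\exp\!\left(-\tfrac{1}{2}y_k^T(\mathscr{W}+\mathscr{U})^{-1}y_k\right).
\end{equation*}
Both are well defined since $\mathscr{W}>0$ and $\mathscr{U}\geq 0$, and the conditioning on past watermark values is exactly what makes the mean under $\mathcal{H}_0$ deterministic.

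Next I would form the log-likelihood ratio $\log L(y_k)=\log p_1(y_k)-\log p_0(y_k)$. The $(2\pi)^{m/2}$ factors cancel, the determinants contribute an additive constant $\tfrac{1}{2}\log\bigl(|\mathscr{W}|/|\mathscr{W}+\mathscr{U}|\bigr)$, and the quadratic exponents combine into precisely $\tfrac{1}{2}\bigl[(y_k-\gamma_{k-1})^T\mathscr{W}^{-1}(y_k-\gamma_{k-1})-y_k^T(\mathscr{W}+\mathscr{U})^{-1}y_k\bigr]=\tfrac{1}{2}g(y_k,\phi_{k-1},\phi_{k-2},\ldots)$. The Neyman-Pearson criterion $L(y_k)\geq\lambda$ is therefore equivalent to $g(y_k,\phi_{k-1},\phi_{k-2},\ldots)\geq 2\log\lambda-\log\bigl(|\mathscr{W}|/|\mathscr{W}+\mathscr{U}|\bigr)$, and defining $\zeta$ to be this right-hand side gives exactly \eqref{eq11}.

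Finally I would note that the threshold $\zeta$ is set from the desired false-alarm probability $\alpha$ by solving $\mathbb{P}_{\mathcal{H}_0}(g\geq\zeta)=\alpha$; existence of such a $\zeta$ follows from the continuity of the distribution of $g$ under the Gaussian hypothesis $\mathcal{H}_0$, so no randomization is required. There is no real obstacle here: the argument is essentially bookkeeping around the Neyman-Pearson lemma, and the only point requiring any care is the sign convention (the log-likelihood-ratio grows with $g$ because subtracting the smaller $\mathcal{H}_1$ exponent from the larger $\mathcal{H}_0$ exponent preserves the direction of the inequality once constants are moved to the threshold side).
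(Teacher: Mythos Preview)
Your proposal is correct and is essentially the same approach the paper takes: the paper simply invokes the Neyman--Pearson lemma from \cite{scharf1991statistical} and states the resulting test without spelling out the log-likelihood-ratio computation, which you have supplied in full. Your bookkeeping of the quadratic forms and the absorption of the determinant constant into $\zeta$ is exactly the standard derivation the paper leaves implicit.
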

\begin{remark}
  It is worth noticing that one may take a moving horizon approach to design an detector, by considering the distribution of $y_k,\,y_{k-1},\cdots, y_{k-T}$. However, the proposed methodology in this paper can be easily extended to multiple $y_k$s case by stacking the state vector.
\end{remark}

To characterize the performance of the detector, we adopt similar expected KL-divergence metrics, as is discussed in \cite{mo2015physical}, which is given by the following theorem: 
\begin{theorem}\label{theorem 2_1}
  The expected KL divergence of distribution $ \mathcal{N}_{0} $ and $ \mathcal{N}_{1} $ is 
  \begin{equation}\label{eq28}
    \mathbb{E}\ D_{KL}\left(\mathcal{N}_{1}\|\mathcal{N}_{0} \right)=\tr\left(\mathscr U \mathscr W^{-1}  \right) - \frac{1}{2}\logdet\left( I+\mathscr U \mathscr W^{-1} \right).
  \end{equation}
  Furthermore, the expected KL divergence satisfies the inequality
  \begin{equation}\label{eq29}
    \begin{split}
      \frac{1}{2}\tr\left(\mathscr U \mathscr W^{-1}  \right)&\leq \mathbb{E}\ D_{KL}\left(\mathcal{N}_{1}\|\mathcal{N}_{0} \right)\\
      & \leq \tr\left(\mathscr U \mathscr W^{-1}  \right) - \frac{1}{2}\log\left[ 1+\tr\left(\mathscr U \mathscr W^{-1}\right) \right].
    \end{split}
  \end{equation}
\end{theorem}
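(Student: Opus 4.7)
The plan is to apply the closed-form expression for the KL divergence between two multivariate Gaussians, then take expectation over the random mean $\gamma_{k-1}$, and finally establish the two inequality bounds by a simultaneous-diagonalization argument combined with elementary scalar inequalities.

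First I would expand $D_{KL}(\mathcal{N}_1 \Vert \mathcal{N}_0)$ using the standard formula. With $\mathcal{N}_0 = \mathcal{N}(\gamma_{k-1}, \mathscr W)$ and $\mathcal{N}_1 = \mathcal{N}(0, \mathscr U + \mathscr W)$ of dimension $m$, this gives
\begin{align*}
  D_{KL}(\mathcal{N}_1\Vert\mathcal{N}_0) = \tfrac{1}{2}\bigl[\tr(\mathscr W^{-1}(\mathscr U+\mathscr W)) - m + \gamma_{k-1}^{T}\mathscr W^{-1}\gamma_{k-1} - \logdet(I+\mathscr U\mathscr W^{-1})\bigr].
\end{align*}
The trace term simplifies to $\tr(\mathscr U\mathscr W^{-1}) + m$, so the $m$'s cancel. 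Next I would take expectations. Under $\mathcal{H}_0$ the mean $\gamma_{k-1}$ is (in the stationary limit) zero-mean Gaussian with covariance $\mathscr U$, so the usual Gaussian quadratic-form identity yields $\mathbb{E}[\gamma_{k-1}^{T}\mathscr W^{-1}\gamma_{k-1}] = \tr(\mathscr W^{-1}\mathscr U) = \tr(\mathscr U\mathscr W^{-1})$. Combining these terms gives exactly \eqref{eq28}.

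For the inequality \eqref{eq29}, I would work with the eigenvalues $\lambda_1,\dots,\lambda_m \geq 0$ of $\mathscr U\mathscr W^{-1}$ (or equivalently of the symmetric $\mathscr W^{-1/2}\mathscr U\mathscr W^{-1/2}$, which has the same spectrum). In these coordinates, $\tr(\mathscr U\mathscr W^{-1}) = \sum_i \lambda_i$ and $\logdet(I+\mathscr U\mathscr W^{-1}) = \sum_i \log(1+\lambda_i)$. The lower bound in \eqref{eq29} then reduces to the scalar inequality $\log(1+\lambda) \leq \lambda$ applied to each eigenvalue, while the upper bound reduces to $\prod_i(1+\lambda_i) \geq 1 + \sum_i \lambda_i$, which follows by expanding the product and discarding the nonnegative cross-terms; in matrix form this is the standard Minkowski-type inequality $\det(I+M) \geq 1+\tr(M)$ for positive semidefinite $M$.

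I do not anticipate any substantial obstacle: all steps are routine once the KL formula is applied and one recognizes that the randomness sits only in $\gamma_{k-1}$. The one step that deserves slight care is the stationarity claim used to equate $\Cov(\gamma_{k-1})$ with $\mathscr U$, but this was already established in the paragraph defining $\mathscr U$ before the theorem statement, so I would simply invoke it. The eigenvalue/simultaneous-diagonalization step is where notation needs to be set up cleanly, but the underlying scalar inequalities $\log(1+x)\leq x$ and $\prod(1+\lambda_i) \geq 1 + \sum \lambda_i$ do all the real work.
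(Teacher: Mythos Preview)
Your proposal is correct. The paper itself omits the proof entirely, deferring to \cite{mo2015physical}; your argument---apply the closed-form Gaussian KL formula, take expectation over the random mean $\gamma_{k-1}$ using $\Cov(\gamma_{k-1})=\mathscr U$, then reduce the bounds to the scalar inequalities $\log(1+\lambda)\le\lambda$ and $\prod_i(1+\lambda_i)\ge 1+\sum_i\lambda_i$ via the spectrum of $\mathscr W^{-1/2}\mathscr U\mathscr W^{-1/2}$---is exactly the standard route and presumably matches the cited reference.
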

\begin{proof}
  The proof is similar to the proof in \cite{mo2015physical} and hence is omitted due to space limits.
\end{proof}
\begin{remark}
	It is worth noticing that the expected KL-divergence is a convex function of $\mathscr U$ and hence $U$. However, both the upper and lower bounds of it are increasing functions of $\tr(\mathscr U\mathscr W^{-1})$. Hence, in order to optimize the detection performance, we could instead maximizing $\tr(\mathscr U\mathscr W^{-1})$, which is linear with respect to $U$.
\end{remark}

Note that although the watermark signal can enable the detection of replay attack, it also deteriorates the performance of the system to some degree. As a result, it is important to find the optimal trade-off between the control performance loss and the detection performance. In this paper, to quantify the performance loss, we use the following LQG metric:
\begin{equation}\label{eq3}
  J = \lim_{T \to +\infty} \mathbb E    \left(\frac{1}{T}\sum_{k=0}^{T-1} \begin{bmatrix}
      y_k\\
      \phi_k
    \end{bmatrix}^TX \begin{bmatrix}
      y_k\\
      \phi_k
    \end{bmatrix} \right),
\end{equation}
where 
\begin{align*}
  X  = \begin{bmatrix}
    X_{yy}&X_{y\phi}\\
    X_{\phi y}&X_{\phi\phi}\\
  \end{bmatrix} > 0.
\end{align*}

Since $y_k$ and $\phi_k$ converge to a stationary process, $J$ can be written in analytical form as
\begin{align*}
  J = \lim_{k\rightarrow}\tr\left(X\Cov\left(\begin{bmatrix}
        y_k\\
        \phi_k 
      \end{bmatrix}
  \right) \right) =\tr\left(X \begin{bmatrix}
    \mathscr W + \mathscr U & H_0U\\
    UH_0^T & U
  \end{bmatrix} \right).
\end{align*}

Therefore, $J$ is an affine function of $U$, which can be written as
\begin{align*}
  J = J_0 + \Delta J = \tr(X_{yy}\mathscr W) + \tr(XS),
\end{align*}
where $S$ as a linear function of $U$ and 
\begin{align*}
  S = \begin{bmatrix}
    \mathscr U & H_0U\\
    UH_0^T & U
  \end{bmatrix}.
\end{align*}

Therefore, in order the achieve the optimal trade-off between the control performance and detection performance, we can formulate the following optimization problem: 
\begin{align}
  U =  & \argmax_{U\geq 0} & & \tr (\mathscr U\mathscr W^{-1})\nonumber\\
       & \text{subject to} & &  \tr(XS) \leq \delta,\label{eq:opt}
\end{align}
where $\delta$ is a design parameter depending on how much control performance loss is tolerable.

An important property of the optimization problem~\eqref{eq:opt} is that the optimal solution is usually a rank-$1$ matrix, which is formalized by the following theorem:
\begin{theorem}
\label{theorem:rankone}
  The optimization problem~\eqref{eq:opt} is equivalent to
  \begin{align}
    U =  & \argmax_{U\geq 0} & & \tr (U\mathcal P)\nonumber\\
         & \text{subject to} & &  \tr(U\mathcal X) \leq \delta,\label{eq:opt2}
  \end{align}
  where 
  \begin{align}
    \mathcal P &\triangleq\sum_{\tau=0}^\infty H_\tau^T\mathscr W^{-1}H_\tau,\\
    \mathcal X &\triangleq\left(\sum_{\tau=0}^\infty H_\tau^T X_{yy}H_\tau \right)+ H_0^TX_{y\phi} + X_{\phi y}H_0+ X_{\phi \phi}.
  \end{align}
  The optimal solution to \eqref{eq:opt2} is 
	\begin{align*}
	  U = zz^T,
	\end{align*}
	where $z$ is the eigenvector corresponding to the maximum eigenvalue of the matrix $\mathcal X^{-1}\mathcal P$ and $z^T\mathcal Xz = \delta$. Furthermore, the solution is unique if $\mathcal X^{-1}\mathcal P$ has only one maximum eigenvalue.
\end{theorem}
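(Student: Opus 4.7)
The plan is to first establish the equivalence of the two optimization problems by rewriting both objective and constraint of \eqref{eq:opt} in the form $\tr(U\cdot)$, and then to reduce \eqref{eq:opt2} to a standard trace maximization over the positive semidefinite cone via a congruence transformation.

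For the equivalence, I would expand $\tr(\mathscr U\mathscr W^{-1}) = \sum_{\tau=0}^\infty \tr(H_\tau U H_\tau^T \mathscr W^{-1})$ and use the cyclic property of trace to get $\tr(U\mathcal P)$. For the constraint, I would block-expand
\begin{equation*}
\tr(XS) = \tr(X_{yy}\mathscr U) + \tr(X_{y\phi}UH_0^T) + \tr(X_{\phi y}H_0 U) + \tr(X_{\phi\phi}U),
\end{equation*}
again push everything inside a single trace against $U$, and collect terms to obtain $\tr(U\mathcal X)$. A small side remark I would include: $\mathcal X$ is positive definite, since
\begin{equation*}
\mathcal X = \begin{bmatrix} H_0^T & I \end{bmatrix} X \begin{bmatrix} H_0 \\ I \end{bmatrix} + \sum_{\tau\geq 1} H_\tau^T X_{yy} H_\tau,
\end{equation*}
the first summand being positive definite because $X>0$ and $[H_0^T\;I]$ has full row rank, and the remaining summands being positive semidefinite because $X_{yy}>0$.

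Given $\mathcal X>0$, I would perform the change of variable $V = \mathcal X^{1/2} U \mathcal X^{1/2}$, which is a bijection on the cone of positive semidefinite matrices. Under this substitution \eqref{eq:opt2} becomes
\begin{equation*}
\max_{V\geq 0,\;\tr(V)\leq\delta}\;\tr(VM),\qquad M \triangleq \mathcal X^{-1/2}\mathcal P\mathcal X^{-1/2}.
\end{equation*}
This is a textbook problem: writing $V = \sum_i \lambda_i u_i u_i^T$ in the eigenbasis of $M$ and using $\tr(VM)\leq \lambda_{\max}(M)\tr(V)\leq \delta\lambda_{\max}(M)$, the bound is attained by $V^\star = \delta\, uu^T$ with $u$ any unit eigenvector of $M$ associated with $\lambda_{\max}(M)$.

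Finally, I would translate back: $U^\star = \mathcal X^{-1/2} V^\star \mathcal X^{-1/2} = zz^T$ where $z \triangleq \sqrt\delta\,\mathcal X^{-1/2}u$, and verify the two claims of the theorem. The identity $M u = \lambda_{\max}u$ multiplied by $\mathcal X^{-1/2}$ rearranges to $\mathcal X^{-1}\mathcal P z = \lambda_{\max}z$, so $z$ is indeed the advertised eigenvector of $\mathcal X^{-1}\mathcal P$; and $z^T\mathcal X z = \delta\, u^T u = \delta$, so the constraint is tight. Uniqueness follows because the bijectivity of the congruence $V\mapsto U$ transfers any uniqueness on the $V$ side to $U$, and on the $V$ side $V^\star$ is unique whenever the top eigenspace of $M$ (equivalently, of $\mathcal X^{-1}\mathcal P$) is one-dimensional. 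The main obstacle I anticipate is purely bookkeeping rather than conceptual: carefully tracking the correspondence between eigenvectors of the symmetric matrix $M$ and eigenvectors of the generally non-symmetric matrix $\mathcal X^{-1}\mathcal P$, and checking that the positive definiteness of $\mathcal X$ (needed to define $\mathcal X^{1/2}$) follows from the standing assumption $X>0$ rather than requiring any extra observability-type hypothesis on the Markov parameters $H_\tau$.
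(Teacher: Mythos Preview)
Your proof is correct and takes a genuinely different route from the paper's. For the equivalence step you and the paper do essentially the same thing (cyclic trace), though your verification that $\mathcal X>0$ via the block identity $\mathcal X=\begin{bmatrix}H_0^T & I\end{bmatrix}X\begin{bmatrix}H_0\\ I\end{bmatrix}+\sum_{\tau\ge 1}H_\tau^T X_{yy}H_\tau$ is more explicit than the paper's one-line remark. The main divergence is in solving \eqref{eq:opt2}. The paper argues by a convex decomposition (citing an external result) that any optimal $U$ of rank $>1$ can be written as a convex combination of rank-one feasible matrices with equal objective value, hence a rank-one optimizer exists; it then restricts to $U=zz^T$ and invokes Lagrange multipliers to obtain the generalized eigenvalue condition $\mathcal P z=\lambda\mathcal X z$. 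You instead symmetrize via the congruence $V=\mathcal X^{1/2}U\mathcal X^{1/2}$, reducing to the textbook problem $\max_{V\ge 0,\,\tr V\le\delta}\tr(VM)$ with $M=\mathcal X^{-1/2}\mathcal P\mathcal X^{-1/2}$, whose optimum is immediately $\delta\lambda_{\max}(M)$ attained at a rank-one $V$. Your route is self-contained (no external decomposition lemma needed), makes the rank-one conclusion and the uniqueness claim fall out simultaneously from the simplicity of the top eigenspace of $M$, and the translation between eigenvectors of the symmetric $M$ and of the non-symmetric $\mathcal X^{-1}\mathcal P$ is handled cleanly. The paper's route, by contrast, stays closer to the original variable $U$ and phrases the optimality condition directly as a generalized eigenvalue problem, which some readers may find more natural. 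Both arrive at the same characterization of $z$.
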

\begin{proof}
	From the definition of $\mathscr U$, we know that
	\begin{align*}
		\tr(\mathscr U\mathscr W^{-1})  &= \sum_{k=0}^\infty\tr\left(H_kUH_k^T\mathscr W^{-1}\right)\\
		&= \sum_{\tau=0}^\infty\tr\left(UH_\tau^T\mathscr W^{-1}H_\tau\right)= \tr\left(U\mathcal P\right).
	\end{align*}
	Following similar steps as in the above proof, we have that $\tr(XS) = \tr(U\mathcal X)$. Moreover, since $X>0$, we have that $\mathcal{X}>0$.
	
	If the optimal $U$ has rank greater than $1$, then follow the same line of argument in the proof of Theorem~7 in \cite{mo2014detecting}, $U$ can be decomposed as $U = \alpha_1 U_1+\cdots \alpha_l U_l$, where the following holds
	\begin{enumerate}
		\item $\alpha_i > 0$, $\sum_{i=1}^l \alpha_i = 1$.
		\item $U_i \geq 0$ is of rank $1$ and $\tr(U_i\mathcal X) = \delta$.
	\end{enumerate}
	
	Therefore, by the optimality of $U$, we can conclude that
	\begin{align*}
		\tr(U\mathcal P) = \tr(U_1\mathcal P) = \cdots = \tr(U_l\mathcal P),
	\end{align*}
	which shows that the rank one matrix $U_i$ is also optimal.
	
	For optimal rank one $U$, it can be written as $ zz^T$ for some $z \neq 0$. Hence, the optimization problem~\eqref{eq:opt2} is converted to
	\begin{align*}
		z =&  \argmax_{z\neq 0} & & z^T\mathcal P z\\
		&\text{subject to} & &z^T\mathcal Xz \leq \delta.
	\end{align*}
	Using the Lagrangian multipliers, one can prove that $\mathcal Pz = \lambda \mathcal X z$, which shows that $z$ is the eigenvector of $\mathcal X^{-1}\mathcal P$. If we enumerate all eigenvectors of $\mathcal X^{-1}\mathcal P$, it is not difficult to prove that the maximum is achieved when $z$ is the eigenvector corresponds to the largest eigenvalue of $\mathcal X^{-1}\mathcal P$ and $z^T\mathcal X z = \delta$. 
\end{proof}

Then we would like to discuss how to generalize the problem formulation for a closed-loop system with a stabilizing controller. Consider the following system discussed in \cite{mo2009secure}: 
\begin{align*}
	x_{k+1} &= A x_k + B (u_k + \phi_k) + w_k,\\
	y_k &= Cx_k + v_k, 
\end{align*}
with the following controller:
\begin{align*}
  \hat x_{k+1} &= A\hat x_k + K(y_{k+1}-CA\hat x_k),\\
  u_k &= L \hat x_k,
\end{align*}
and Linear Quadratic Gaussian (LQG) cost of
\begin{align*}
  J = \lim_{T\to\infty} \frac{1}{T}\mathbb E \left [\sum_{k=0}^{T-1} y_k^T X_{yy} y_k + (u_k+\phi_k)^TX_{\phi\phi}(u_k+\phi_k)\right],
\end{align*}
where $u_k$ denotes the optimal LQG control signal.

We can redefine the state $\tilde x_k$ and output $\tilde y_k$ as
\begin{align*}
  \tilde x_k = \begin{bmatrix}
                x_k\\
                \hat x_k
                 \end{bmatrix} \text{and}\, \tilde y_k = \begin{bmatrix}
    y_k\\
    u_k 
  \end{bmatrix},
\end{align*}
and the design of watermarking signal in a closed-loop system can be converted to the open-loop formulation.

It is worth noticing that in order to design the detector and the optimal watermarking signal, precise knowledge of the system parameters is needed. However, acquiring the parameters may be troublesome and costly. Furthermore, there may be unforeseen changes in the model of the system, such as topological changes in the power systems. Therefore, it is beneficial for the system to ``learn'' the parameters during its operation and design the detector and watermarking signal in real-time, which will be our focus in the next section.

\section{Main Results}
This section is devoted to developing an on-line ``learning'' procedure to infer the system parameters, based on which, we show how to design watermarking signals and the optimal detector and prove that the physical watermark and the detector asymptotically converges to the optimal ones. 

Throughout the section, we make the following assumptions:
\begin{assumption}\label{as2}
	\begin{enumerate}
		\item $A$ is diagonalizable and has distinct eigenvalues.
		\item The maximum eigenvalue of $\mathcal X^{-1}\mathcal P$ is unique.
		\item The system is not under attack during the ``learning'' phase.
		\item The system output $y_k$ and the dimension of the $A$ matrix $n$ is known. Furthermore, the matrix $X$ and $\delta$ are also known.
	\end{enumerate}
\end{assumption}
\begin{remark}
  Notice that the mechanism discussed in this section is similar to reinforcement learning for Markov Decision Process (MDP). However, one important distinction is that only the sensor data $y_k$ is observable and the state $x_k$ is hidden. Furthermore, we assume that we need to solve a constrained optimization problem \eqref{eq:optcharpoly} instead of an unconstrained discounted problem usually assumed in MDP.
\end{remark}

For the sake of legibility, we first introduce how to infer the necessary parameters of the system. Subsequently, we move to the design of the watermark signal and the detector based on the estimated parameters.

\subsection{Inference on the Parameters}
In this subsection, we describe the ``learning'' procedure. At each time $k$, the watermarking signal is chosen to be $\phi_k = U_k^{1/2}\zeta_k $, where $\zeta_k$s are i.i.d. Gaussian random vectors with covariance $I$. The $U_k$ is computed as a function of $y_0,\cdots, y_k,\phi_0,\cdots,\phi_{k-1}$, the procedure of which will be described in details in the next subsection. 

Consider the optimization problem~\eqref{eq:opt2}, it is easy to see that we need to infer the parameter $H_{\tau}$ and $\mathscr W$. 
Then we will show how to infer these two parameters. 
\subsection*{Inference on $H_\tau$}
This subsection is devoted to showing how to infer $H_k$. 

First, let us define the following quantity $H_{k,\tau}\ (0\leq \tau\leq 3n-2)$ as  
\begin{align*}
H_{k,\tau} &\triangleq \frac{1}{k+1} \sum_{t=0}^k y_t \phi_{t-\tau-1}^T U_{t-\tau-1}^{-1}.
\end{align*}
We assume that $\phi_{t-\tau-1} = 0$ if $t-\tau-1 <0$. One can think $H_{k,\tau}$ is an estimate of $H_\tau$.

Then, we start to prove that the $H_{k,\tau}$ converges to $H_\tau$.
\begin{theorem}
Suppose that there exists positive definite matrices $\overline M$ and $\underline M$, such that 
\begin{align}
  \frac{1}{(k+1)^\beta}\underline M \leq U_k \leq \overline M,
  \label{eq:Ubound}
\end{align}
where $0\leq \beta<1$, then $H_{k,\tau}$ converges to $H_\tau$ almost surely.
\label{theorem:convergence}
\end{theorem}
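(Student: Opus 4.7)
The plan is to substitute the system dynamics $y_t=\sum_{s=0}^{t-1}H_s\phi_{t-1-s}+\vartheta_{t-1}$ into the definition of $H_{k,\tau}$ and split the resulting average into four pieces: a diagonal term $\frac{H_\tau}{k+1}\sum_t \phi_{t-\tau-1}\phi_{t-\tau-1}^T U_{t-\tau-1}^{-1}$; a ``past'' cross term $\frac{1}{k+1}\sum_t \bar\gamma_{t-\tau-2}\phi_{t-\tau-1}^T U_{t-\tau-1}^{-1}$, where $\bar\gamma_{t-\tau-2}\triangleq\sum_{s=\tau+1}^{t-1}H_s\phi_{t-1-s}$ aggregates every term with index $s>\tau$; a ``future'' cross term $\frac{1}{k+1}\sum_t\sum_{s=0}^{\tau-1}H_s\phi_{t-1-s}\phi_{t-\tau-1}^T U_{t-\tau-1}^{-1}$ containing only the $\tau$ indices $s<\tau$; and a noise term $\frac{1}{k+1}\sum_t\vartheta_{t-1}\phi_{t-\tau-1}^T U_{t-\tau-1}^{-1}$. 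The objective is to show the diagonal converges to $H_\tau$ while the other three pieces vanish almost surely.

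I would fix the filtration $\mathcal F_j\triangleq\sigma(\zeta_0,\ldots,\zeta_{j-1})\vee\sigma(\{w_k\},\{v_k\},x_0)$, noting that $\phi_j=U_j^{1/2}\zeta_j$ with $\zeta_j\sim\mathcal N(0,I)$ i.i.d., that all process and measurement noise is $\mathcal F_0$-measurable, and that the adaptively chosen $U_j$ is $\mathcal F_j$-measurable since it depends only on $y_0,\ldots,y_j,\phi_0,\ldots,\phi_{j-1}$. Each of the four pieces is then a normalized sum of matrix-valued martingale differences: $M_j\triangleq\phi_j\phi_j^T U_j^{-1}-I$ is a difference at level $\mathcal F_{j+1}$; the past cross term is a difference at level $\mathcal F_{t-\tau}$ because $\bar\gamma_{t-\tau-2}$ and $U_{t-\tau-1}^{-1}$ are both $\mathcal F_{t-\tau-1}$-measurable, with $\mathbb E[\phi_{t-\tau-1}\mid\mathcal F_{t-\tau-1}]=0$; each fixed-$s$ future term is a difference at level $\mathcal F_{t-s}$ since $\phi_{t-1-s}$ is fresh relative to $\mathcal F_{t-s-1}$, which already contains $\phi_{t-\tau-1}$ and $U_{t-\tau-1}^{-1}$; and the noise term is a difference at level $\mathcal F_{t-\tau}$ because $\vartheta_{t-1}$ lies in $\mathcal F_0$.

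Using $M_j=U_j^{1/2}(\zeta_j\zeta_j^T-I)U_j^{-1/2}$ together with the hypothesis $U_j\leq\overline M$ and $U_j^{-1}\leq(j+1)^\beta\underline M^{-1}$, one obtains $\mathbb E\|M_j\|_F^2\leq\|U_j\|\|U_j^{-1}\|\,\mathbb E\|\zeta_j\zeta_j^T-I\|_F^2=O((j+1)^\beta)$, and the same $O((t+1)^\beta)$ scaling for the other three pieces. The crucial bound for the past cross term is $\mathbb E\|\bar\gamma_{t-\tau-2}\|^2\leq\tr(\overline M)\sum_{s\geq 0}\|H_s\|_F^2$, which is finite and independent of $t$ because strict stability of $A$ makes $\|H_s\|=\|CA^s B\|$ decay geometrically. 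Chow's strong law for martingale differences then applies whenever $\sum_j(j+1)^\beta/j^2<\infty$, which holds exactly for $\beta<1$; it delivers both the almost-sure vanishing of the past, future, and noise pieces and, after absorbing the $(k-\tau)/(k+1)\to 1$ factor, the convergence of the diagonal average to $H_\tau$.

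The main obstacle is the infinite family of cross terms indexed by $s>\tau$. A term-by-term argument, applying SLLN for each fixed $s$ and then summing, would require exchanging an almost-sure limit in $k$ with an infinite series in $s$, for which an $L^2$-only bound is not enough. The remedy is the aggregation above: lumping every $s>\tau$ contribution at time $t$ into the single vector $\bar\gamma_{t-\tau-2}$ reduces the object at each $t$ to one martingale difference whose second moment is uniformly bounded in $t$, thanks to the summability $\sum_s\|H_s\|^2<\infty$ that the stability assumption provides.
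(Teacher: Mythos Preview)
Your argument is correct and rests on the same tool as the paper's---a matrix version of Chow's strong law for martingale differences---but the organisation is genuinely different. The paper (which spells out only $\tau=0$) does not decompose: it takes the \emph{entire} increment $y_{k+1}\phi_k^TU_k^{-1}-H_0$ as a single martingale difference with respect to the natural filtration $\mathcal F_k=\sigma(x_0,\phi_0,\ldots,\phi_{k-1},w_0,\ldots,w_{k-1},v_0,\ldots,v_k)$, and then bounds its second moment in one shot via an Isserlis-type identity for products of four jointly Gaussian vectors. That route is more economical---one martingale, one variance computation, no aggregation trick---but it is specific to $\tau=0$: once $\tau>0$, the quantity $y_t\phi_{t-\tau-1}^TU_{t-\tau-1}^{-1}-H_\tau$ is no longer a martingale increment with respect to $\{\mathcal F_t\}$ (since $\phi_{t-\tau-1}$ is already measurable) nor with respect to $\{\mathcal F_{t-\tau-1}\}$ (since $y_t$ is not), and the paper simply says the general case follows ``by similar arguments''. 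Your four-piece split, together with the enriched filtration that places all process and measurement noise in $\mathcal F_0$, supplies exactly those missing arguments: each piece becomes a genuine martingale difference with respect to its own shifted filtration, and your aggregation of the $s>\tau$ contributions into $\bar\gamma_{t-\tau-2}$ correctly circumvents the interchange-of-limits problem that a term-by-term treatment would face. So the paper's proof is shorter where it applies, while yours is self-contained for every $\tau$.
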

\begin{proof}
  Please see the appendix.
\end{proof}

\begin{remark}
It is worth noticing that the only requirement on $U_k$ to ensure the convergence is that $U_k$ is a function of $y_0, y_1, \cdots, y_k$ and $\phi_0, \phi_1, \cdots, \phi_{k-1}$ and satisfies \eqref{eq:Ubound}. The detailed mechanism to generate $U_k$ is not assumed by Theorem~\ref{theorem:convergence}.
\end{remark}

It is worth noticing that we only keep a record of finitely many $H_{k,\tau}$s. However, to infer matrices $\mathscr U,\,\mathscr W,\,\mathcal P$ and $\mathcal X$, we needs to estimate $H_\tau$ for all $\tau \geq 0$. 

\begin{lemma}
  Assuming the matrix $A$ has distinct eigenvalues $\lambda_1,\cdots,\lambda_n$, then there exists unique $\Omega_1,\cdots,\Omega_n$, such that
  \begin{align}
    H_\tau = \sum_{i=1}^n \lambda_i^\tau \Omega_i.
  \end{align}
  \label{lemma:finitetoinf}
\end{lemma}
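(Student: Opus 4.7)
The plan is to exploit diagonalizability of $A$ to expand $A^\tau$ as a spectral sum, which immediately yields a decomposition of $H_\tau = CA^\tau B$ of the stated form. Uniqueness will follow from a Vandermonde argument since the eigenvalues are assumed distinct.

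First I would write $A = V\Lambda V^{-1}$, where $\Lambda = \diag(\lambda_1,\ldots,\lambda_n)$ and the columns $v_i$ of $V$ are the right eigenvectors of $A$. Denote by $u_i^T$ the $i$th row of $V^{-1}$, so that $\{u_i\}$ are the corresponding left eigenvectors (satisfying $u_i^T v_j = \delta_{ij}$). Then for every $\tau\ge 0$,
\begin{align*}
A^\tau \;=\; V\Lambda^\tau V^{-1} \;=\; \sum_{i=1}^n \lambda_i^\tau\, v_i u_i^T.
\end{align*}
Multiplying on the left by $C$ and on the right by $B$ gives
\begin{align*}
H_\tau \;=\; CA^\tau B \;=\; \sum_{i=1}^n \lambda_i^\tau\, (Cv_i)(u_i^T B) \;=\; \sum_{i=1}^n \lambda_i^\tau\, \Omega_i,
\end{align*}
with the explicit choice $\Omega_i \triangleq (Cv_i)(u_i^T B)$. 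This proves existence.

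For uniqueness, suppose two families $\{\Omega_i\}$ and $\{\Omega_i'\}$ both satisfy the identity. Setting $\Delta_i \triangleq \Omega_i - \Omega_i'$, we obtain $\sum_{i=1}^n \lambda_i^\tau \Delta_i = 0$ for every $\tau \ge 0$. Restricting to $\tau = 0,1,\ldots,n-1$ and reading the identity entrywise, for each fixed matrix entry we get a homogeneous linear system whose coefficient matrix is the Vandermonde matrix with nodes $\lambda_1,\ldots,\lambda_n$. Since the $\lambda_i$ are distinct, this Vandermonde matrix is nonsingular, forcing every entry of each $\Delta_i$ to vanish, hence $\Omega_i = \Omega_i'$.

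I do not anticipate a real obstacle here: the argument is essentially the classical spectral expansion of Markov parameters for a diagonalizable state matrix, and uniqueness is a one-line Vandermonde observation. The only point that deserves care is emphasizing that distinctness of eigenvalues is used twice — once to guarantee diagonalizability (so that the spectral sum exists as a sum of $n$ rank-one terms) and once to invoke invertibility of the Vandermonde matrix for uniqueness.
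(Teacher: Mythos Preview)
Your proposal is correct and the existence argument is essentially identical to the paper's: both diagonalize $A$ and write $A^\tau$ as $\sum_i \lambda_i^\tau A_i$ with $A_i$ the spectral projector onto the $i$th eigenspace; your $\Omega_i = (Cv_i)(u_i^T B)$ is exactly the paper's $CA_iB$ with $A_i = P\,\diag(0,\ldots,1_i,\ldots,0)\,P^{-1}$. You actually go further than the paper, since its proof only constructs the $\Omega_i$ and does not address the uniqueness claimed in the statement, whereas your Vandermonde argument fills that gap.
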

\begin{proof}
  From Assumption \ref{as2}, $A$ can be represented in the following form:
  \begin{align*}
  A^{\tau} ={}& P\, \diag(\lambda_1^{\tau}, \lambda_2^{\tau}, \cdots, \lambda_n^{\tau})\, P^{-1} \\
    ={}& \lambda_1^{\tau}P\, \diag(1, 0, \cdots, 0)\, P^{-1}+\lambda_2^{\tau}P\ \diag(0, 1, \cdots, 0)\, P^{-1}\\
    &+\cdots+\lambda_n^{\tau}P\, \diag(0, 0, \cdots, 1)\ P^{-1}\\
    ={}&\sum_{i=1}^n \lambda_i^{\tau} A_i,
  \end{align*}
  where $\lambda_i$ denotes the $i$th eigenvalue of $A$, and $A_i = P\, \diag(0,\cdots, 1_{i}, \cdots, 0)P^{-1}\, (1\leq i\leq n)$, i.e., only the $i$th diagonal element of $A_i$ is equal to $1$, other elements are equal to 0. 
  Hence, we have
  \begin{align*}
    H_\tau &= CA^\tau B = C(\sum_{i=1}^n \lambda_i^\tau A_i)B
     = \sum_{i=1}^n \lambda_i^\tau (C A_iB)\\
     &=\sum_{i=1}^n \lambda_i^\tau\Omega_i,
  \end{align*}
  which completes the proof.
\end{proof}


By Lemma~\ref{lemma:finitetoinf} and Cayley-Hamilton theorem\footnote[1]{For more details about the Cayley-Hamilton theorem, please refer to Chen \cite{chen1998linear}.}, we could use finitely many $H_0, H_1, \cdots,H_{3n-2}$ to estimate both $\lambda_i$s and $\Omega_i$s and thus $H_\tau$ for any $\tau$. To this end, let us consider the following optimization problem:
\begin{align}
  \label{eq:optcharpoly}
  \min_{\alpha_{k,0},\cdots,\alpha_{k,n-1}}  \left\|\mathscr H_k\begin{bmatrix}
      \alpha_{k,0}I\\
      \alpha_{k,1}I\\
      \cdots\\
      \alpha_{k,n-1}I
    \end{bmatrix}
 + \begin{bmatrix}
      H_{k,n}\\
      H_{k,n+1}\\
      \cdots\\
      H_{k,3n-2}
    \end{bmatrix}
 \right\|_F,
\end{align}
where $\mathscr H_k$ is a Hankel matrix defined as
\begin{align*}
\mathscr H_k\triangleq\begin{bmatrix}
H_{k,0} & H_{k,1} &\cdots & H_{k,n-1}\\
H_{k,1} & H_{k,2} &\cdots & H_{k,n}\\
\vdots & \vdots &\ddots & \vdots \\
H_{k,2n-2} & H_{k,2n-1} &\cdots & H_{k,3n-3}\\
    \end{bmatrix}.
\end{align*}

 Let us denote the roots of the polynomial $p_k(x) = x^n+\alpha_{k,n-1}x^{n-1}+\cdots+\alpha_{k,0}$ to be $\lambda_{k,1},\cdots,\lambda_{k,n}$. Define a Vandermonde like matrix $V_k$ to be
\begin{align*}
  V_k\triangleq \begin{bmatrix}
    1 & 1 &\cdots&1\\
    \lambda_{k,1} & \lambda_{k,2} &\cdots&\lambda_{k,n}\\
    \vdots & \vdots &\ddots&\vdots\\
    \lambda^{3n-2}_{k,1} & \lambda^{3n-2}_{k,2} &\cdots&\lambda^{3n-2}_{k,n}\\
  \end{bmatrix},
\end{align*}
and
\begin{align*}
 \begin{bmatrix}
   \Omega_{k,1}\\
   \vdots\\
   \Omega_{k,n}
 \end{bmatrix} = \argmax_{\Omega_{k,i}}\left\|\left(V_k\otimes I_m \right)  \begin{bmatrix}
   \Omega_{k,1}\\
   \vdots\\
   \Omega_{k,n}
 \end{bmatrix} - \begin{bmatrix}
     H_{k,0}\\
     \cdots\\
     H_{k,3n-2}
   \end{bmatrix}\right\|.
\end{align*}

The following theorem further establishes the convergence of $\lambda_{k,i}$ (and $\Omega_{k,i}$) to $\lambda_i$ (and $\Omega_{i}$):
\begin{theorem}
  \label{theorem:continuous}
  Suppose that $A$ has distinct eigenvalues. If $H_{k,\tau}$ converges to $H_\tau$ for $0\leq \tau \leq 3n-2$, then $\lambda_{k,i}$ converges $\lambda_i$ and $\Omega_{k,i}$ converges to $\Omega_i$.
\end{theorem}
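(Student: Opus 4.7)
The plan is to decompose the proof into three stages mirroring the construction of $\lambda_{k,i}$ and $\Omega_{k,i}$: first I would show that the coefficients $\alpha_{k,j}$ obtained from \eqref{eq:optcharpoly} converge to the coefficients $\alpha_j$ of the characteristic polynomial of $A$; then I would conclude $\lambda_{k,i}\to\lambda_i$ by continuity of polynomial roots; and finally I would deduce $\Omega_{k,i}\to\Omega_i$ via a second least-squares continuity argument.

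For the first stage, let $\alpha_0,\dots,\alpha_{n-1}$ denote the coefficients of the characteristic polynomial of $A$. By Cayley--Hamilton, $A^n+\alpha_{n-1}A^{n-1}+\cdots+\alpha_0 I = 0$, and premultiplying by $C$ and postmultiplying by $A^i B$ shows that $\sum_{j=0}^{n-1}\alpha_j H_{i+j}+H_{i+n}=0$ for every $i\geq 0$. Hence $(\alpha_0,\dots,\alpha_{n-1})$ drives the objective of the limit problem, obtained from \eqref{eq:optcharpoly} by substituting $H_\tau$ for $H_{k,\tau}$, down to zero. The key step is to show this limit minimizer is unique: if $\sum_{j=0}^{n-1} a_j H_{i+j}=0$ for $i=0,\dots,2n-2$, substituting Lemma~\ref{lemma:finitetoinf} yields $\sum_{l=1}^n c_l \lambda_l^i \Omega_l=0$ with $c_l\triangleq\sum_j a_j \lambda_l^j$. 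A Vandermonde argument on the $n$ distinct $\lambda_l$ gives $c_l \Omega_l=0$ for each $l$; the PBH test combined with $(A,B)$ controllable and $(A,C)$ observable forces $\Omega_l\neq 0$ for every $l$, so $c_l=0$, and a polynomial of degree at most $n-1$ vanishing at $n$ distinct points must be identically zero, giving $a=0$. The limit Gram matrix of the least-squares problem is therefore invertible, and standard continuous dependence of the least-squares solution on its data yields $\alpha_{k,j}\to\alpha_j$.

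For the second stage, I would invoke continuity of the roots of a monic polynomial with respect to its coefficients; since the $\lambda_i$ are distinct simple roots of the limit polynomial by Assumption~\ref{as2}, the roots of $p_k$ can be relabeled so that $\lambda_{k,i}\to\lambda_i$. For the third stage, this relabeling together with $\lambda_{k,i}\to\lambda_i$ implies $V_k\to V$, the Vandermonde-like matrix built from the $\lambda_i$'s, which has full column rank because the $\lambda_i$ are distinct; hence $V\otimes I_m$ has full column rank as well. By Lemma~\ref{lemma:finitetoinf} the true $(\Omega_1,\dots,\Omega_n)$ achieve exact fit in the limit problem, so they are the unique minimizer there, and continuity of the least-squares solution once more delivers $\Omega_{k,i}\to\Omega_i$.

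The main obstacle is the uniqueness argument in the first stage. Both distinctness of the eigenvalues (to rule out repeated-root pathologies in the Vandermonde step) and the non-vanishing of every residue $\Omega_l$ (which requires controllability and observability through the PBH test) are essential: without either, the limit least-squares problem could admit a continuum of optima, and the coefficients produced by \eqref{eq:optcharpoly} could converge to a proper divisor of the characteristic polynomial rather than to the characteristic polynomial itself, breaking the identification scheme.
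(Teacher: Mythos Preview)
Your proposal is correct and follows the same three-stage architecture as the paper: convergence of the characteristic-polynomial coefficients via uniqueness of the limit least-squares problem and Lemma~\ref{lemma:continuous}, then continuity of polynomial roots, then a second least-squares continuity argument for the $\Omega_{k,i}$ using full column rank of $V\otimes I_m$.

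The one substantive difference is how you establish uniqueness in the first stage. The paper argues that any alternative optimal $\alpha'$ yields $CA^i p'(A) A^j B = 0$ for all $0\le i+j\le 2n-2$, assembles these into $\mathcal{O}\,p'(A)\,\mathcal{C}=0$ with $\mathcal{O},\mathcal{C}$ the observability and controllability matrices, and uses their full rank to force $p'(A)=0$; distinct eigenvalues then pin $p'$ to the characteristic polynomial. You instead pass through the spectral representation of Lemma~\ref{lemma:finitetoinf}, use a Vandermonde argument to isolate each $c_l\Omega_l$, and invoke the PBH test to show each residue $\Omega_l\neq 0$. Both routes encode the same content---controllability and observability together ensure that no eigenmode is annihilated on either side---so the difference is tactical rather than structural; your version has the minor advantage of making explicit exactly where each hypothesis (distinct eigenvalues, controllability, observability) enters.
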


Before proving Theorem~\ref{theorem:continuous}, we need to introduce the following lemma:
\begin{lemma}
	Suppose that the vector $\varphi$ is the solution of the optimization problem
	\begin{align*}
	\varphi = \argmin_{\varphi} \|A(\theta)\varphi - b(\theta)\|_2,
	\end{align*}
	where $A(\theta)$ and $b(\theta)$ are continuous functions of $\theta$. If $A(\theta_0)$ is of full column rank at $\theta_0$, then $\varphi$ is unique and a continuous function of $\theta$ in a neighborhood of $\theta_0$.
	\label{lemma:continuous}
\end{lemma}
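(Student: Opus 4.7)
The plan is to convert the least-squares problem into its normal equations and then exploit continuity of matrix inversion together with continuity of the smallest eigenvalue. First, I would observe that for any fixed $\theta$ the cost $\|A(\theta)\varphi - b(\theta)\|_2^2$ is a convex quadratic in $\varphi$, so every minimizer satisfies the normal equation $A(\theta)^T A(\theta)\,\varphi = A(\theta)^T b(\theta)$. When $A(\theta)$ has full column rank, the Gram matrix $G(\theta) \triangleq A(\theta)^T A(\theta)$ is positive definite, hence invertible, and the minimizer is unique with the closed form $\varphi(\theta) = G(\theta)^{-1} A(\theta)^T b(\theta)$. Applied at $\theta = \theta_0$ this already settles the uniqueness claim.

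Next, I would promote the full-column-rank hypothesis at $\theta_0$ to an entire neighborhood. Because $\theta \mapsto A(\theta)$ is continuous, so is $\theta \mapsto G(\theta)$, and the smallest eigenvalue of a symmetric matrix depends continuously on its entries (for instance, via Weyl's inequality). Since $\lambda_{\min}(G(\theta_0)) > 0$, continuity yields an open neighborhood $\mathcal{U}$ of $\theta_0$ on which $\lambda_{\min}(G(\theta)) > 0$. Throughout $\mathcal{U}$, then, $G(\theta)$ is invertible, $A(\theta)$ retains full column rank, and the closed-form expression for $\varphi(\theta)$ remains valid.

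Finally, I would invoke the standard fact that matrix inversion is a continuous operation on the open set of invertible matrices. The map $\theta \mapsto \varphi(\theta) = G(\theta)^{-1} A(\theta)^T b(\theta)$ is then a composition of continuous maps on $\mathcal{U}$, and therefore continuous at $\theta_0$, as required. The only mildly delicate step in this chain is the persistence of full column rank under small perturbations of $\theta$; once this is secured by the continuity of $\lambda_{\min}(G(\theta))$, the remainder of the argument is essentially bookkeeping with continuous functions.
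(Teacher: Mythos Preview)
Your proposal is correct and follows essentially the same approach as the paper: both derive the closed-form least-squares solution $\varphi(\theta) = (A(\theta)^T A(\theta))^{-1} A(\theta)^T b(\theta)$ and conclude continuity from the continuity of matrix inversion. Your version is simply more detailed, explicitly justifying via $\lambda_{\min}(G(\theta))$ why full column rank persists in a neighborhood, whereas the paper asserts this in passing.
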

\begin{proof}
	If $A(\theta_0)$ has full column rank, $A^TA$ is non-singular at $\theta_0$. Therefore, in a small enough neighborhood of $\theta_0$, the solution of the optimization problem can be written as
	\begin{align*}
	\varphi = (A^TA)^{-1}A^Tb, 
	\end{align*}
	which is continuous.
\end{proof}

Now we start to prove Theorem~\ref{theorem:continuous}.
\begin{proof}
  Let us denote $\alpha_i$s as the coefficients of the characteristic polynomial $p(x) = x^n+\alpha_{n-1}x^{n-1}+\alpha_0$ of $A$. By the Cayley-Hamilton theorem, if $H_{k,\tau} = H_\tau$, and $\alpha_{k,i} = \alpha_i$, then the objective function is $0$, which proves that $\alpha_i$ is a solution to \eqref{eq:optcharpoly} when $H_{k,\tau} = H_\tau$. Furthermore, the optimal value is $0$. Now suppose that $\alpha_i'$ is solution to \eqref{eq:optcharpoly}, denote the polynomial
	\begin{align*}
      p'(A) = A^n + \alpha_{n-1}'A^{n-1} + \cdots+\alpha_0'I.
	\end{align*}
  Therefore, \eqref{eq:optcharpoly} implies that $CA^ip'(A)A^jB = 0$ for all $0\leq i+j \leq 2n-2$. As a result, one can prove that
	\begin{align*}
	  \begin{bmatrix}
		C\\
		CA\\
	    \cdots\\
     	CA^{n-1}
	  \end{bmatrix}p'(A)\begin{bmatrix}
    	B&AB&\cdots&A^{n-1}B
	  \end{bmatrix}=0.
	\end{align*}
  Since we assume the system is both observable and controllable, the observability matrix and controllability matrix are full column and row rank, respectively, which implies that $p'(A) = 0$. Therefore, $p'(A)$ coincides with the characteristic polynomial since we assume that $A$ has distinct eigenvalues. Hence, the optimal solution to \eqref{eq:optcharpoly} is unique. As a result, $\mathcal H_k$ must have full column rank, which implies that $\alpha_{k,i}$s are continuous function of $H_{k,0},\cdots, H_{k,3n-2}$ in a neighborhood of $(H_0,\cdots, H_{3n-2})$. Hence, if $H_{k,\tau}$ converges to $H_\tau$, $\alpha_{k,i}$ converges to $\alpha_i$, which further implies that $\lambda_{k,i}$ converges to $\lambda_i$. As a result, $\rank(V_k) = n$ if $\lambda_{k,i} =\lambda_i$, and
  \begin{align*}
	\rank(V_k\otimes I_m) = \rank(V_k)\times \rank(I_m) = nm,
  \end{align*}
  which implies that $V_k\otimes I_m$ is full column rank. Therefore, $\Omega_{k,i}$ are continuous function of $\lambda_{k,1},\cdots,\lambda_{k,n}$ at a neighborhood of $\lambda_1,\cdots,\lambda_n$, which implies that $\Omega_{k,i}$ converges to $\Omega_i$.
\end{proof}

\subsection*{Inference on $\mathscr W$}
This subsection is devoted showing how to infer $\mathscr W$.  

First, define $Y_k$ as
\begin{align*}
Y_k  \triangleq \frac{1}{k+1}\sum_{t=0}^k y_ty_t^T,
\end{align*}
one can think $Y_k$ is an estimate of $\mathscr W + \mathscr U$. 

Then, let us define $\mathscr U_{k,ij}$, which satisfies the following recursive equation:
\begin{align*}
  \mathscr U_{k+1,ij} = \lambda_{k,i}\lambda_{k,j} \mathscr U_{k,ij} + \Omega_i U_k\Omega_j^T,
\end{align*}
and
\begin{align*}
  \mathscr U_{k} \triangleq \sum_{i=1}^n\sum_{j=1}^n \mathscr U_{k,ij}.
\end{align*}
Moreover, define
\begin{align*}
  \mathscr W_k = Y_k -\frac{1}{k+1}\sum_{t=0}^k\mathscr U_t.
\end{align*}

The following theorem establishes the convergence of $\mathscr W_k$.
\begin{theorem}
Suppose that \eqref{eq:Ubound} holds, then $\mathscr W_k$ converges to $\mathscr W$ almost surely.
\label{theorem:Pconverge}
\end{theorem}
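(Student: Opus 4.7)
The plan is to decompose $\mathscr W_k-\mathscr W$ into a stochastic average that concentrates on a conditional mean and a deterministic term that captures the running estimate of $\mathscr U$, and to show that each error contribution vanishes. Let $\mathcal F_{t-1}=\sigma(y_0,\dots,y_{t-1},\phi_0,\dots,\phi_{t-1})$, so that $U_t$ is $\mathcal F_t$-measurable by construction. Using the representation \eqref{eq6} in steady state, conditionally on $\mathcal F_{t-1}$ one has $y_t\mid\mathcal F_{t-1}\sim\mathcal N(\gamma_{t-1},\mathrm{Cov}(\vartheta_{t-1}))$ with $\mathrm{Cov}(\vartheta_{t-1})\to\mathscr W$. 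Hence
\begin{align*}
\mathbb E[y_ty_t^T\mid\mathcal F_{t-1}] = \gamma_{t-1}\gamma_{t-1}^T+\mathrm{Cov}(\vartheta_{t-1}).
\end{align*}

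First I would show the martingale-difference average
$Y_k-\frac{1}{k+1}\sum_{t=0}^k\mathbb E[y_ty_t^T\mid\mathcal F_{t-1}]\to 0$ almost surely. Using that $y_t$ is conditionally Gaussian and that $U_t\leq\overline M$ implies a uniform bound on $\mathrm{Var}(y_ty_t^T\mid\mathcal F_{t-1})$, one applies a strong law for martingale differences (Chow/Kronecker) to conclude the entrywise convergence. Next, a second martingale argument on $\phi_t\phi_t^T-U_t$, which is an $\mathcal F_t$-martingale difference with uniformly bounded conditional variance, yields
\begin{align*}
\frac{1}{k+1}\sum_{t=0}^k\gamma_{t-1}\gamma_{t-1}^T-\frac{1}{k+1}\sum_{t=0}^k\mathscr U_t^{\mathrm{true}}\to 0\quad\text{a.s.},
\end{align*}
where $\mathscr U_t^{\mathrm{true}}=\sum_{i,j}\mathscr U_{t,ij}^{\mathrm{true}}$ is generated by the same recursion as $\mathscr U_{t,ij}$ but with the \emph{true} parameters $\lambda_i,\Omega_i$ in place of $\lambda_{t,i},\Omega_{t,i}$. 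Together with Ces\`aro averaging of $\mathrm{Cov}(\vartheta_{t-1})\to\mathscr W$, these two steps reduce the problem to showing that the plug-in estimate $\mathscr U_t$ is asymptotically close to $\mathscr U_t^{\mathrm{true}}$.

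The main obstacle is this last step, because the recursion for $\mathscr U_{t,ij}$ has infinite memory and uses time-varying estimated parameters at every stage. The argument I would use is a contraction/perturbation analysis: setting $\Delta_{t,ij}=\mathscr U_{t,ij}-\mathscr U_{t,ij}^{\mathrm{true}}$, subtracting the two recursions gives
\begin{align*}
\Delta_{t+1,ij}=\lambda_i\lambda_j\Delta_{t,ij}+(\lambda_{t,i}\lambda_{t,j}-\lambda_i\lambda_j)\mathscr U_{t,ij}+\bigl(\Omega_{t,i}U_t\Omega_{t,j}^T-\Omega_iU_t\Omega_j^T\bigr).
\end{align*}
Since the system is strictly stable, $|\lambda_i\lambda_j|<1$, so the homogeneous part is a contraction. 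One first shows $\mathscr U_{t,ij}$ stays uniformly bounded, using $U_t\leq\overline M$ and the eventual contractivity of its own recursion (because $\lambda_{t,i}\to\lambda_i$ with $|\lambda_i|<1$). Then Theorem~\ref{theorem:continuous} gives $\lambda_{t,i}\to\lambda_i$ and $\Omega_{t,i}\to\Omega_i$, so both forcing terms above tend to zero. A standard input-to-state argument for the contraction then yields $\Delta_{t,ij}\to 0$, and hence by Ces\`aro's lemma $\frac{1}{k+1}\sum_{t=0}^k(\mathscr U_t-\mathscr U_t^{\mathrm{true}})\to 0$.

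Combining the three contributions,
\begin{align*}
\mathscr W_k-\mathscr W &= \Bigl[Y_k-\tfrac{1}{k+1}\textstyle\sum_{t=0}^k\mathbb E[y_ty_t^T\mid\mathcal F_{t-1}]\Bigr] \\
&\quad +\Bigl[\tfrac{1}{k+1}\textstyle\sum_{t=0}^k\gamma_{t-1}\gamma_{t-1}^T-\tfrac{1}{k+1}\textstyle\sum_{t=0}^k\mathscr U_t^{\mathrm{true}}\Bigr] \\
&\quad +\Bigl[\tfrac{1}{k+1}\textstyle\sum_{t=0}^k\mathrm{Cov}(\vartheta_{t-1})-\mathscr W\Bigr] \\
&\quad +\Bigl[\tfrac{1}{k+1}\textstyle\sum_{t=0}^k(\mathscr U_t^{\mathrm{true}}-\mathscr U_t)\Bigr],
\end{align*}
and each bracket tends to zero almost surely by the four arguments above, giving $\mathscr W_k\to\mathscr W$ a.s.
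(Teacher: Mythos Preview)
Your approach is essentially the same as the paper's. Your contraction/perturbation argument for $\mathscr U_t-\mathscr U_t^{\mathrm{true}}$ (step~4) is exactly the content of the paper's Lemma~\ref{lemma:convergeexponential}, applied componentwise to the recursion for $\mathscr U_{t,ij}$; the paper states this lemma separately and then invokes it to conclude $\mathscr U_k-\sum_{t=0}^{k-1}H_tU_{k-t}H_t^T\to 0$. For the stochastic part, the paper lumps your steps~1--3 into a single statement,
\[
Y_k-\frac{1}{k+1}\sum_{\tau=0}^k\sum_{t=0}^{\tau-1}H_tU_{\tau-t}H_t^T\to\mathscr W\ \text{a.s.},
\]
and dispatches it by citing a strong law for weakly correlated variables (Lyons 1988) together with a second-moment calculation, whereas you split it into three martingale/Ces\`aro pieces.

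One point to tighten: in step~2, the reduction to ``$\phi_t\phi_t^T-U_t$'' only covers the diagonal terms in the expansion of $\gamma_{t-1}\gamma_{t-1}^T$; the cross-terms $H_\tau\phi_{t-1-\tau}\phi_{t-1-\tau'}^TH_{\tau'}^T$ for $\tau\neq\tau'$ also need to be shown to average to zero. This is not hard---for each fixed lag they form a martingale difference sequence with bounded second moment---but it is a separate argument, and is presumably what the paper's appeal to the weak-correlation SLLN is absorbing. Similarly, in step~1 the conditional variance of $y_ty_t^T$ given $\mathcal F_{t-1}$ depends on $\gamma_{t-1}$ and is not uniformly bounded; what you actually need (and have) is a uniform bound on the \emph{unconditional} second moment, which suffices for the Chow--Kronecker criterion.
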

\begin{proof}
  Please see the appendix.
\end{proof}

Before concluding this section, let us reconsider the optimization problem \eqref{eq:opt2}. This problem involves two key parameters $\mathcal P$ and $\mathcal X$. Accordingly, we further define
\begin{align*}
  \mathcal P_k &=\sum_{\tau=0}^\infty\left(\sum_{i=1}^n\lambda_{k,i}^\tau \Omega_{k,i}\right)^T\mathscr W_k^{-1} \left(\sum_{i=1}^n\lambda_{k,i}^\tau \Omega_{k,i}\right) \\
&= \sum_{i=1}^n\sum_{j=1}^n \frac{1}{1-\lambda_{k,i}\lambda_{k,j}} \Omega_{k,i}^T \mathscr W_k^{-1}\Omega_{k,j},
\end{align*}
and
\begin{align*}
	\mathcal X_k ={}& \sum_{\tau=0}^\infty\left(\sum_{i=1}^n\lambda_{k,i}^\tau \Omega_{k,i}\right)^TX_{yy} \left(\sum_{i=1}^n\lambda_{k,i}^\tau \Omega_{k,i}\right)\\
	&+ \sum_{i=1}^n \Omega_i^T X_{y\phi}+ X_{\phi y} \sum_{i=1}^n \Omega_i+X_{\phi\phi}\\
	={}& \sum_{i=1}^n\sum_{j=1}^n \frac{1}{1-\lambda_{k,i}\lambda_{k,j}} \Omega_{k,i}^T X_{yy} \Omega_{k,j} + \sum_{i=1}^n \Omega_i^T X_{y\phi}\\
	&+ X_{\phi y} \sum_{i=1}^n \Omega_i+X_{\phi\phi}.
\end{align*}
One can think  $\mathcal P_k$ and $\mathcal X_k$ are the estimation of $P$ and $X$ at time $k$, respectively. By the convergence of the parameters $H_{k,\tau}$, $\mathscr W_k$, $\lambda_{k,i}$ and $\Omega_{k,i}$, it is easy to prove that $\mathcal P_k$ and $\mathcal X_k$ converges to $\mathcal P$ and $\mathcal X$ almost surely. 

~\\
\indent
As a result, we have successfully estimated all the parameters necessary to design the detector and watermarking signal, as long as the only condition \eqref{eq:Ubound} is met.

\subsection{Watermarking Signal and Detector Design}
In this subsection, we shall show how to generate $U_k$, such that \eqref{eq:Ubound} holds and how to design a detector to detect the replay attack. Moreover, we provide a countermeasure for solving the watermarking signal and detector design problem for high dimension system. 

Here, we provide the update equation of $U_k$:
\begin{align}
  U_{k+1} = U_{k,*} + \frac{\delta}{(k+1)^\beta} I,
  \label{eq:Ukdef}
\end{align}
where $\delta$ is defined in \eqref{eq:opt} and $U_{k,*}$ is the solution of the following optimization problem
\begin{align*}
    U_{k,*} =  & \argmax_{U\geq 0} & & \tr (U\mathcal P_k)\\
         & \text{subject to} & &  \tr(U\mathcal X_k) \leq \delta,
\end{align*}
and $0\leq\beta<1$. Then we use the following theorem to establish the boundedness and convergence of $U_k$.
\begin{theorem}\label{theorem U_k}
$U_k$ is bounded by
\begin{align}
  (k+1)^{-\beta} I\leq U_k \leq \delta (X_{\phi\phi} - X_{\phi y}X_{yy}^{-1}X_{y\phi})^{-1}.
  \label{eq:Ubound2}
\end{align}
Furthermore, if $\mathcal P_k$ converges to $\mathcal P$ and $\mathcal X_k$ converges to $\mathcal X$, then
\begin{align*}
  \lim_{k\rightarrow \infty} U_k = U,
\end{align*}
where $U$ is the solution of \eqref{eq:opt2}.
\end{theorem}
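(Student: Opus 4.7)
The plan is to treat the two claims of the theorem, uniform boundedness and asymptotic convergence, separately, with Theorem~\ref{theorem:rankone} as the common tool. The lower bound is immediate: by construction $U_{k,*} \geq 0$ since it is feasible for its defining optimization, so the explicit additive perturbation in \eqref{eq:Ukdef} already gives $U_k \geq \delta(k+1)^{-\beta} I$, which implies the stated lower bound (absorbing the constant $\delta$).

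For the upper bound I would first establish that $\mathcal{X}_k$ is uniformly bounded below by the positive definite Schur complement $\Delta := X_{\phi\phi} - X_{\phi y} X_{yy}^{-1} X_{y\phi}$, where $\Delta > 0$ follows from $X > 0$. Denoting $H_{k,\tau} := \sum_i \lambda_{k,i}^{\tau}\Omega_{k,i}$ and completing the square gives
\begin{align*}
\mathcal{X}_k - \Delta &= \sum_{\tau \geq 1} H_{k,\tau}^T X_{yy} H_{k,\tau}\\
&\quad+ (H_{k,0} + X_{yy}^{-1} X_{y\phi})^T X_{yy} (H_{k,0} + X_{yy}^{-1} X_{y\phi}) \geq 0.
\end{align*}
Since the optimization defining $U_{k,*}$ has the same form as \eqref{eq:opt2}, Theorem~\ref{theorem:rankone} lets me write $U_{k,*} = z_k z_k^T$ with $z_k^T \mathcal{X}_k z_k = \delta$, so $z_k^T \Delta z_k \leq \delta$. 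A Cauchy--Schwarz estimate in the $\Delta$-weighted inner product shows $(v^T z_k)^2 \leq (v^T \Delta^{-1} v)(z_k^T \Delta z_k) \leq \delta\, v^T \Delta^{-1} v$ for every $v$, so $U_{k,*} \leq \delta \Delta^{-1}$, yielding the claimed upper bound on $U_k$ up to the vanishing additive term.

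For convergence, once $\mathcal{P}_k \to \mathcal{P}$ and $\mathcal{X}_k \to \mathcal{X}$ with both limits positive definite, $\mathcal{X}_k^{-1}\mathcal{P}_k \to \mathcal{X}^{-1}\mathcal{P}$. Under Assumption~\ref{as2} the top eigenvalue of $\mathcal{X}^{-1}\mathcal{P}$ is simple, and simplicity persists for all sufficiently large $k$ by continuity of the spectrum. Analytic perturbation theory then provides a continuous selection of the top eigenvector (up to a sign), and the scaling condition $z_k^T \mathcal{X}_k z_k = \delta$ depends continuously on $(\mathcal{X}_k, z_k)$ and removes the magnitude ambiguity. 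The remaining sign ambiguity cancels in the rank-one product, so $U_{k,*} = z_k z_k^T \to zz^T = U$, and since $\delta(k+1)^{-\beta} I \to 0$, we conclude $U_k \to U$.

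The main obstacle is the eigenvector continuity step: while perturbation of a simple eigenvalue is classical, one must handle both the sign ambiguity of the eigenvector and the non-Euclidean normalization imposed by $\mathcal{X}_k$, which forces the convergence argument to be phrased at the level of the outer product $z_k z_k^T$ rather than $z_k$ itself. Once this is set up correctly, the rest of the argument reduces to routine continuity and the two-sided bound on $U_{k,*}$ derived above.
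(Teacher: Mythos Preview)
Your proposal is correct and follows essentially the same route as the paper: establish $\mathcal X_k \geq \Delta := X_{\phi\phi} - X_{\phi y}X_{yy}^{-1}X_{y\phi}$ (you do this by completing the square, the paper by the same Schur-complement observation written slightly differently), deduce an upper bound on $U_{k,*}$, and then argue continuity of the rank-one optimizer in $(\mathcal P_k,\mathcal X_k)$ for the convergence step. Your observation that the upper bound on $U_k$ holds only ``up to the vanishing additive term'' is in fact sharper than the paper's statement, which glosses over this point; for the purpose of the surrounding theorems only a uniform upper bound $U_k \leq \overline M$ is needed, and that you certainly obtain.

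The one place where you and the paper diverge slightly is the mechanism for the upper bound on $U_{k,*}$. You invoke the rank-one structure $U_{k,*}=z_kz_k^T$ and apply Cauchy--Schwarz in the $\Delta$-inner product. The paper instead uses only the feasibility constraint $\tr(U_{k,*}\mathcal X_k)\leq\delta$, which together with $\mathcal X_k\geq\Delta$ gives $\tr(\Delta^{1/2}U_{k,*}\Delta^{1/2})\leq\delta$ and hence $\Delta^{1/2}U_{k,*}\Delta^{1/2}\leq\delta I$; this avoids appealing to rank-one and works for any feasible $U_{k,*}$. Both arguments are short and yield the same bound. On the convergence side, you actually supply more detail than the paper, which simply asserts continuity of $U_{k,*}$ near $(\mathcal P,\mathcal X)$ and omits the eigenvector-perturbation reasoning; your remark that the sign ambiguity disappears at the level of $z_kz_k^T$ is exactly the point that makes this go through.
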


\begin{proof}
 Notice that
\begin{align*}
  \mathcal X_k  &\geq \left(\sum_{i=1}^n \Omega_i\right)^T X_{yy}\left(\sum_{i=1}^n \Omega_i\right)\\
	&+ \sum_{i=1}^n \Omega_i^T X_{y\phi}+X_{\phi y} \sum_{i=1}^n \Omega_i+X_{\phi\phi}.
\end{align*}
Hence, $\mathcal X_k \geq X_{\phi\phi} - X_{\phi y}X_{yy}^{-1}X_{y\phi}$, which implies that
\begin{align}
  \label{eq:uxlessthandelta}
 \tr(U_{k,*}\left( X_{\phi\phi} - X_{\phi y}X_{yy}^{-1}X_{y\phi}\right)) \leq \delta.
\end{align}

Note that for a positive semidefinite $X$, if $\tr(X) \leq \delta$, $X\leq \delta I$. Hence, \eqref{eq:uxlessthandelta} implies that
\begin{align*}
 U_{k,*}  \leq \delta \left( X_{\phi\phi} - X_{\phi y}X_{yy}^{-1}X_{y\phi}\right)^{-1},
\end{align*}
which proves the first inequality in \eqref{eq:Ubound2}. The second inequality can be easily proved by \eqref{eq:Ukdef}.

The convergence can be proved by noticing that $U_{k,*}$ is a continuous function of $\mathcal P_k,\,\mathcal X_k$ at a neighborhood of $\mathcal P,\mathcal X$. The detailed proof is omitted due to space limit.

Now we can establish that $U_k$ converges to the optimal $U$. Notice that there is no circular logic in our proof, as \eqref{eq:Ubound2} holds regardless of the inferred value $Y_k$ and $H_{k,\tau}$. Therefore, the convergence of $\mathcal X_k$ and $\mathcal P_k$ is guaranteed by Theorem~\ref{theorem:convergence}, \ref{theorem:continuous} and \ref{theorem:Pconverge}, which further implies the convergence of $U_k$.
\end{proof}

\begin{remark}
  It is worth noticing that during the inference on the parameter and watermark signal and detector design, as long as $U_k$ satisfies the condition \eqref{eq:Ubound}, Theorem~\ref{theorem:convergence} holds, i.e., $H_{k,\tau}$ converges to $H_\tau$. Hence, Theorem~\ref{theorem:continuous} and \ref{theorem:Pconverge} holds, i.e.,  $\lambda_{k,i}$, $\Omega_{k,i}$ and $\mathscr W_k$ converge to $\lambda_{i}$, $\Omega_{i}$ and $\mathscr W$, respectively. The convergence of these three parameters guarantees the convergence of $\mathcal P_k$ and $\mathcal X_k$, which can be used to derive that $U_k$ converges to $U$, the latter of Theorem~\ref{theorem U_k}.
\end{remark}
Notice that the second term $(k+1)^{-\beta} I$ on the RHS of \eqref{eq:Ukdef} is crucial for the ``learning'' mechanism. The reason is that $U_{k,*}$ is in general a rank $1$ matrix and hence it does not provide sufficient excitation to the system for us to identify the necessary parameters. Conceptually, the problem we discussed here is similar to the multi-arm bandit problem, where one must balance exploration and exploitation. The $(k+1)^{-\beta}I$ term can be interpreted as an ``exploration'' term, as it provide necessary excitation to the system in order for us to infer the parameters. The $U_{k,*}$ is the exploitation term, as it is optimal under our current knowledge of the system parameters.

Consider the Neyman-Pearson detector in Theorem \ref{theorem 3_1}, the detector can be designed as
\begin{align}\label{eq:g_k}
  g_k = & \left(y_{k}- \tilde{\varphi}_{k-1}\right)^{{T}}  \mathscr W_k^{-1}\left(y_{k}- \tilde{\varphi}_{k-1}\right)- y_{k}^{{T}}  \left(\mathscr W_k+\mathscr U_k\right)^{-1}y_{k},
\end{align}
where $\tilde \varphi_k = \sum_{i=1}^n \tilde \varphi_{k,i}$, with
\begin{align*}
  \tilde \varphi_{k,i} = \lambda_{k,i }\tilde \varphi_{k-1,i} + \Omega_{k,i} \phi_{k}.
\end{align*}

\subsection*{Design Technique For High Dimension Systems}
This subsection is devoted to solving the watermarking signal and detector design problem for high dimension system. It is worth noticing that as complexity of the system, such as the dimension, increases, time that ``on-line'' learning process need to spend will be longer and the required resource will correspondingly becomes more. Motivated by this idea, we propose a novel technique to design the watermarking signal and detector. 

Here, we use low-dimensional system to estimate high-dimensional system and to design corresponding physical watermark and detector. For example, for an $n$-dimension system (when the dimension of the system state is equal to $n$, we use $n$-dimension for the rest of this paper), we employ $n'(n' < n)$ to estimate this system. In other words, for the real system, $A\in \mathbb{R}^{n\times n}$, $B\in \mathbb{R}^{n\times p}$, and $C\in \mathbb{R}^{m\times n}$. For the virtual system, we assume that the system is $n'$-dimension, i.e., , $A', B', C'$ can be considered as $A'\in \mathbb{R}^{n'\times n'}$, $B'\in \mathbb{R}^{n'\times p}$, $C'\in \mathbb{R}^{m\times n'}$, respectively. Correspondingly, the update equation can be written as:
\begin{align}
	U_{k+1}' = U_{k,*}' + \frac{\delta}{(k+1)^\beta} I,
\end{align}
where $U_{k,*}'$ is the solution of the following optimization problem
\begin{align*}
	U_{k,*}' =  & \argmax_{U\geq 0} & & \tr (U\mathcal P_k')\\
	& \text{subject to} & &  \tr(U\mathcal X_k') \leq \delta.
\end{align*}
Here we write several parameters as $U_{k+1}'$, $U_{k,*}'$, $\mathcal P_k'$ and $\mathcal X_k'$ since that are different from those of the real $n$-dimension system due to the choice of technique.

\section{Simulation Result}
In this section, the performance of the proposed ``learning'' procedure is evaluated. We would like to show the effectiveness from two aspects of watermarking signal design and detection performance. At the end of this section, we verify the effectiveness of the technique proposed at the end of Section \Rmnum{4}. 

Without loss of generality, we choose $n = m = p = 2$ and $A,\,B,\,C$ are all randomly generated, with $A$ stable. It is assumed that $X$ in \eqref{eq3}, the covariance matrices  $Q$ and $R$ are equal to the identity matrix with proper dimensions. We assume that $\delta$ in \eqref{eq:opt2} is equal to $10$ and $\beta = 1/3$. 

\subsection{The watermarking signal design}
This subsection is devoted to showing the effectiveness of the developed technique from the perspective of watermarking signal designed. Fig. \ref{fig:erru} shows $\|U_k-U\|_F$ v.s. time $k$, where $U$ is the solution of the optimization problem of \eqref{eq:opt}, and $U_k$, generated through updating equation \eqref{eq:Ukdef}, is an estimation of $U$. 
\begin{figure}[h!]

	
	\caption{$\|U_k-U\|_F$}
	\label{fig:erru}
\end{figure}
From Fig. \ref{fig:erru}, it can be seen that $U_k$ converges to the optimal covariance of the watermark signal $\phi_{k} $, $U$, as time goes to infinity. 
\subsection{The detection performance}
This subsection shows the performance of the developed technique from the perspective of detection performance. $g_k$ is a function of time, and it has different forms in different scenarios: 
\begin{enumerate}
	\item For the system with known parameters:
	\begin{enumerate}
		\item During the operation without replay attack, $g_k$ is equal to \eqref{eq11}.
		\item During the operation when the replay attack occurs, $g_k$ is 
		\begin{align*}
			g_k =& \big(y_{k}'- \gamma_{k-1}\big)^{{T}}  \mathscr W^{-1}\big(y_{k}'- \gamma_{k-1}\big)\\
			&- y_{k}'^{{T}}  \left(\mathscr W+\mathscr U\right)^{-1}y_{k}'.
		\end{align*} 
	\end{enumerate} 
	\item For the system with unknown parameters, during the operation with replay attack, $g_k$ is equal to \eqref{eq:g_k}. 
\end{enumerate}

It is assumed that the adversary implements the replay attack starting from time $k = 101$. Fig.~\ref{fig:noattackvsnoraml},~\ref{fig:learningvsnoraml},~\ref{fig:learningvsnoattackvsnoraml} show $g_k$ versus time $k$ in different scenarios. Here, $g_n$ denotes the value of $g_k$ during the normal operation when the replay attack is present from time $k = 101$, $g_z$ denotes the value of $g_k$ during normal operation without replay attack, and $g$ denotes the value of $g_k$ during the operation when the parameters of the system are not available to the operator, in other words, these parameters are inferred through the ``on-line'' learning technique. It is worth noticing that time in these figures is just a piece of whole operation process, which is chosen after running long enough time.
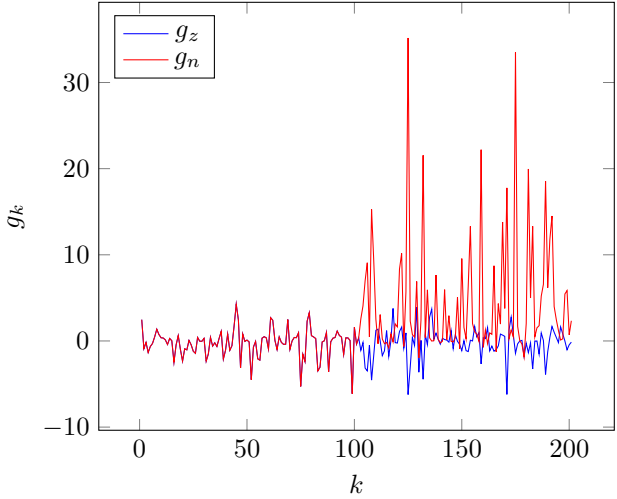
\begin{figure}[t!]
	\begin{tikzpicture}[]
	\begin{axis}[legend pos = {north west}, ylabel = {$g_k$}, xlabel = {$k$}]\addplot+ [mark = {none}, blue]coordinates {
		(1.0, 2.4709906807512216)
		(2.0, -0.8922343536710678)
		(3.0, -0.17765968701168888)
		(4.0, -1.331967248010123)
		(5.0, -0.6393628338896493)
		(6.0, -0.2973681004578772)
		(7.0, 0.45624696086663974)
		(8.0, 1.3621353707752908)
		(9.0, 0.7496459501468244)
		(10.0, 0.3716921706299623)
		(11.0, 0.3419307718702078)
		(12.0, 0.12175363976038939)
		(13.0, -0.41069237055017227)
		(14.0, 0.28901450869082185)
		(15.0, 0.02826900038914619)
		(16.0, -2.466303718682876)
		(17.0, -0.4559519834172263)
		(18.0, 0.6171183052533942)
		(19.0, -0.9771093974293655)
		(20.0, -2.34331697285408)
		(21.0, -0.9098828561742631)
		(22.0, -1.0334368253531308)
		(23.0, 0.05790827272875765)
		(24.0, -0.44008566455174764)
		(25.0, -1.196154083038347)
		(26.0, -1.4326801189548988)
		(27.0, 0.40544254351718134)
		(28.0, -0.02915170625244423)
		(29.0, -0.037830870580357834)
		(30.0, 0.297275088346006)
		(31.0, -2.3442613765191114)
		(32.0, -1.5192481041823112)
		(33.0, 0.4189845688498894)
		(34.0, -0.5944564140674142)
		(35.0, -0.21615474417533387)
		(36.0, -0.6382665518635591)
		(37.0, 0.28147106622915197)
		(38.0, 1.1041341067500774)
		(39.0, -2.106080135627428)
		(40.0, -1.0853638502200667)
		(41.0, 0.8069856609171393)
		(42.0, -1.0820761226591888)
		(43.0, -0.5517736927941654)
		(44.0, 1.7367599694456226)
		(45.0, 4.258370738497038)
		(46.0, 2.434728386350719)
		(47.0, -3.119606139529576)
		(48.0, 0.8109744731041952)
		(49.0, -0.05165201027752664)
		(50.0, 0.10204402145420399)
		(51.0, -0.11624438037349304)
		(52.0, -4.498600177902599)
		(53.0, -0.7056976584370087)
		(54.0, -0.05855191520013037)
		(55.0, -2.0761564167168034)
		(56.0, -2.2272319390465887)
		(57.0, 0.32207531062569006)
		(58.0, 0.4992534078111107)
		(59.0, 0.3572714821472727)
		(60.0, -0.930985544202261)
		(61.0, 2.700884586987166)
		(62.0, 2.399994761764147)
		(63.0, 0.06983193825434642)
		(64.0, -0.9869342880310699)
		(65.0, 0.4717313915411059)
		(66.0, -0.16256466535481895)
		(67.0, -0.3926513969111606)
		(68.0, -0.3774337173208835)
		(69.0, 2.4831927411427186)
		(70.0, -0.9653418668468609)
		(71.0, -0.0012296817107393665)
		(72.0, 0.405347895478558)
		(73.0, 0.37929573056658306)
		(74.0, 0.9513347583426148)
		(75.0, -5.273825726047544)
		(76.0, -1.5514556216203574)
		(77.0, -2.3810612956961466)
		(78.0, 2.295917173420206)
		(79.0, 3.269804014666472)
		(80.0, 0.6160604223549808)
		(81.0, 0.4684994421427058)
		(82.0, 0.27872476877332397)
		(83.0, -3.4703266090501366)
		(84.0, -3.041133130211695)
		(85.0, -0.11680403150558422)
		(86.0, 0.017680151077750537)
		(87.0, 0.982445898294394)
		(88.0, -3.58099173313707)
		(89.0, -0.09172528112412559)
		(90.0, 0.3208470866148381)
		(91.0, 0.39881469517977863)
		(92.0, 1.1470019927162292)
		(93.0, 0.6375879601724239)
		(94.0, 0.4442048300500165)
		(95.0, -1.568893802869976)
		(96.0, 0.3622254717558726)
		(97.0, 0.35428686276954147)
		(98.0, 0.05220020937548625)
		(99.0, -6.128348133933698)
		(100.0, 1.5799096903525185)
		(101.0, -0.3664514956427354)
		(102.0, 0.35688122273570555)
		(103.0, -1.087538827437406)
		(104.0, -0.19030823665679952)
		(105.0, -3.1458048098722147)
		(106.0, -3.4737447152115184)
		(107.0, -0.48914647556384083)
		(108.0, -4.545271044148997)
		(109.0, -1.6936569542637425)
		(110.0, 1.2316873994383357)
		(111.0, 1.369273217783081)
		(112.0, 0.03599975967383906)
		(113.0, -1.7012011904860662)
		(114.0, -1.1492890007633196)
		(115.0, 1.1993609568865704)
		(116.0, -1.7768726793524399)
		(117.0, 0.01839664889434811)
		(118.0, 3.7771321726988125)
		(119.0, -0.19699921418087518)
		(120.0, -0.24440979414373476)
		(121.0, 1.0928780461980332)
		(122.0, 1.6078030232595206)
		(123.0, -0.8389988238248547)
		(124.0, 0.9345571164299289)
		(125.0, -6.226159650109859)
		(126.0, -3.0199307252723444)
		(127.0, 0.38989441079236276)
		(128.0, -0.5120943765240455)
		(129.0, 3.918532579249926)
		(130.0, -3.627596744139887)
		(131.0, 0.07134778593559032)
		(132.0, -4.443124515451492)
		(133.0, 0.7253338155242757)
		(134.0, -0.4243263369469221)
		(135.0, 2.7621178513256788)
		(136.0, 3.6960226040959694)
		(137.0, 0.18984556448557488)
		(138.0, 0.9018296977393643)
		(139.0, 0.1929380924645056)
		(140.0, -0.37394205672584757)
		(141.0, 0.31182782726303904)
		(142.0, 0.16463538366978006)
		(143.0, 0.09992667960952019)
		(144.0, -0.13675757864670618)
		(145.0, 1.2231383360543333)
		(146.0, -0.8710253863669426)
		(147.0, 0.7273298710930607)
		(148.0, -0.05184505818056229)
		(149.0, -0.38933552593480186)
		(150.0, -1.1229615564137905)
		(151.0, 0.024694175708849242)
		(152.0, -1.1274679351241061)
		(153.0, -1.2248502784570119)
		(154.0, 0.08119122080717817)
		(155.0, 0.0014417897785676814)
		(156.0, 1.6739976205477998)
		(157.0, 0.4861517762980543)
		(158.0, 0.9606518091016035)
		(159.0, -2.6668344829828072)
		(160.0, 0.371579100940423)
		(161.0, 0.27189842337740017)
		(162.0, 1.530658421580421)
		(163.0, -1.1553994219212278)
		(164.0, -0.5950119518583874)
		(165.0, -1.1913338511899767)
		(166.0, -1.014058668620749)
		(167.0, -0.5711284686797788)
		(168.0, 0.7956411076599544)
		(169.0, 0.6782205954001688)
		(170.0, 0.535985387610125)
		(171.0, -6.206668049557086)
		(172.0, 1.2954978556507486)
		(173.0, 2.7609373663399674)
		(174.0, 0.7079625007032585)
		(175.0, -1.4349726664559697)
		(176.0, -0.33738639884108845)
		(177.0, -0.000767794925603249)
		(178.0, 0.06581700742389962)
		(179.0, -1.7863171912716507)
		(180.0, -0.2746071590378827)
		(181.0, -1.3484775338448172)
		(182.0, -0.2271216131416376)
		(183.0, -3.2534001707456635)
		(184.0, -0.010570005718474995)
		(185.0, -0.006983415466626619)
		(186.0, -1.5252630697274787)
		(187.0, 0.9265104787907088)
		(188.0, 0.05565623133022912)
		(189.0, -3.8853453149515413)
		(190.0, -1.2578800879318894)
		(191.0, 0.40984916551132455)
		(192.0, 1.678774249246758)
		(193.0, 1.0147558287581193)
		(194.0, 0.4112819635962852)
		(195.0, -0.19060725198047432)
		(196.0, 1.6348463644655673)
		(197.0, 0.832594243406126)
		(198.0, -0.006209906516706415)
		(199.0, -1.0214131179990484)
		(200.0, -0.45365067227065525)
		(201.0, -0.13788179686137125)
	};
	\addlegendentry{$g_z$}
	\addplot+ [mark = {none}, red]coordinates {
		(1.0, 2.4709906807512216)
		(2.0, -0.8922343536710678)
		(3.0, -0.17765968701168888)
		(4.0, -1.331967248010123)
		(5.0, -0.6393628338896493)
		(6.0, -0.2973681004578772)
		(7.0, 0.45624696086663974)
		(8.0, 1.3621353707752908)
		(9.0, 0.7496459501468244)
		(10.0, 0.3716921706299623)
		(11.0, 0.3419307718702078)
		(12.0, 0.12175363976038939)
		(13.0, -0.41069237055017227)
		(14.0, 0.28901450869082185)
		(15.0, 0.02826900038914619)
		(16.0, -2.466303718682876)
		(17.0, -0.4559519834172263)
		(18.0, 0.6171183052533942)
		(19.0, -0.9771093974293655)
		(20.0, -2.34331697285408)
		(21.0, -0.9098828561742631)
		(22.0, -1.0334368253531308)
		(23.0, 0.05790827272875765)
		(24.0, -0.44008566455174764)
		(25.0, -1.196154083038347)
		(26.0, -1.4326801189548988)
		(27.0, 0.40544254351718134)
		(28.0, -0.02915170625244423)
		(29.0, -0.037830870580357834)
		(30.0, 0.297275088346006)
		(31.0, -2.3442613765191114)
		(32.0, -1.5192481041823112)
		(33.0, 0.4189845688498894)
		(34.0, -0.5944564140674142)
		(35.0, -0.21615474417533387)
		(36.0, -0.6382665518635591)
		(37.0, 0.28147106622915197)
		(38.0, 1.1041341067500774)
		(39.0, -2.106080135627428)
		(40.0, -1.0853638502200667)
		(41.0, 0.8069856609171393)
		(42.0, -1.0820761226591888)
		(43.0, -0.5517736927941654)
		(44.0, 1.7367599694456226)
		(45.0, 4.258370738497038)
		(46.0, 2.434728386350719)
		(47.0, -3.119606139529576)
		(48.0, 0.8109744731041952)
		(49.0, -0.05165201027752664)
		(50.0, 0.10204402145420399)
		(51.0, -0.11624438037349304)
		(52.0, -4.498600177902599)
		(53.0, -0.7056976584370087)
		(54.0, -0.05855191520013037)
		(55.0, -2.0761564167168034)
		(56.0, -2.2272319390465887)
		(57.0, 0.32207531062569006)
		(58.0, 0.4992534078111107)
		(59.0, 0.3572714821472727)
		(60.0, -0.930985544202261)
		(61.0, 2.700884586987166)
		(62.0, 2.399994761764147)
		(63.0, 0.06983193825434642)
		(64.0, -0.9869342880310699)
		(65.0, 0.4717313915411059)
		(66.0, -0.16256466535481895)
		(67.0, -0.3926513969111606)
		(68.0, -0.3774337173208835)
		(69.0, 2.4831927411427186)
		(70.0, -0.9653418668468609)
		(71.0, -0.0012296817107393665)
		(72.0, 0.405347895478558)
		(73.0, 0.37929573056658306)
		(74.0, 0.9513347583426148)
		(75.0, -5.273825726047544)
		(76.0, -1.5514556216203574)
		(77.0, -2.3810612956961466)
		(78.0, 2.295917173420206)
		(79.0, 3.269804014666472)
		(80.0, 0.6160604223549808)
		(81.0, 0.4684994421427058)
		(82.0, 0.27872476877332397)
		(83.0, -3.4703266090501366)
		(84.0, -3.041133130211695)
		(85.0, -0.11680403150558422)
		(86.0, 0.017680151077750537)
		(87.0, 0.982445898294394)
		(88.0, -3.58099173313707)
		(89.0, -0.09172528112412559)
		(90.0, 0.3208470866148381)
		(91.0, 0.39881469517977863)
		(92.0, 1.1470019927162292)
		(93.0, 0.6375879601724239)
		(94.0, 0.4442048300500165)
		(95.0, -1.568893802869976)
		(96.0, 0.3622254717558726)
		(97.0, 0.35428686276954147)
		(98.0, 0.05220020937548625)
		(99.0, -6.128348133933698)
		(100.0, 1.5799096903525185)
		(101.0, -0.3664514956427354)
		(102.0, 0.35688122273570555)
		(103.0, 2.562930937133325)
		(104.0, 4.0645629058882955)
		(105.0, 6.9929981395590755)
		(106.0, 9.07907280378995)
		(107.0, 0.4629954083218122)
		(108.0, 15.26303224757036)
		(109.0, 9.386684147594146)
		(110.0, 2.5217424230042083)
		(111.0, -0.32688261739986324)
		(112.0, 3.061386549642547)
		(113.0, 0.22792563918583775)
		(114.0, -0.31827082776229276)
		(115.0, -0.16672032915112633)
		(116.0, -0.8262674909125489)
		(117.0, 0.8075298869976982)
		(118.0, -0.09957929639041829)
		(119.0, 1.9816295984005732)
		(120.0, 1.6352892854118966)
		(121.0, 8.34564063490152)
		(122.0, 10.158313453971456)
		(123.0, -0.6386711187136664)
		(124.0, 5.721700473927549)
		(125.0, 35.16867833566076)
		(126.0, 2.30613332866936)
		(127.0, 0.7434513871682737)
		(128.0, 0.45104437490047244)
		(129.0, 6.906107146425516)
		(130.0, -1.9655502810205032)
		(131.0, 2.40340530675169)
		(132.0, 21.552625947931475)
		(133.0, 0.38357692360732154)
		(134.0, 5.931905997063833)
		(135.0, 0.362016859984081)
		(136.0, -0.01335637133822365)
		(137.0, 0.03507993089462916)
		(138.0, 7.654715846134024)
		(139.0, 0.19771194011298576)
		(140.0, -0.22053865415735685)
		(141.0, -0.025266924691459858)
		(142.0, 5.9570010671000055)
		(143.0, 0.11100848835276844)
		(144.0, 2.9349485001200875)
		(145.0, -0.03632357686400589)
		(146.0, -0.11428942806503373)
		(147.0, 0.3992161928643668)
		(148.0, 5.0642984203412755)
		(149.0, -0.904874153170429)
		(150.0, 9.56459967771929)
		(151.0, 1.5792057284193457)
		(152.0, 0.24964148159798683)
		(153.0, 6.382091033834094)
		(154.0, 13.30839232833004)
		(155.0, 2.0103983705149666)
		(156.0, 1.055794486762605)
		(157.0, 0.7923595389590119)
		(158.0, -0.04675179673427221)
		(159.0, 22.20311054617408)
		(160.0, -0.7397916428401312)
		(161.0, 1.281074727415288)
		(162.0, -0.32328481611680626)
		(163.0, 0.9358072383800016)
		(164.0, 0.7728988571773685)
		(165.0, 8.73085601258397)
		(166.0, -1.261230526967518)
		(167.0, 4.338038370816348)
		(168.0, 1.959723523828791)
		(169.0, 13.807658633147382)
		(170.0, 3.777409319235236)
		(171.0, 17.76918933773433)
		(172.0, 0.2704995203648355)
		(173.0, 1.2811990406838625)
		(174.0, 0.4752083587428384)
		(175.0, 33.484187981582416)
		(176.0, 1.703226860440888)
		(177.0, 0.03884751052706084)
		(178.0, -0.3677634319969656)
		(179.0, -1.8478471563064756)
		(180.0, 2.0255731677974573)
		(181.0, 19.96280355484249)
		(182.0, 4.983698344274437)
		(183.0, 13.319564032035714)
		(184.0, 0.47147730281855105)
		(185.0, 1.5242127375511694)
		(186.0, 1.7756435729796454)
		(187.0, 5.212174162624482)
		(188.0, 6.573068729967816)
		(189.0, 18.542798076246267)
		(190.0, 6.175441044452382)
		(191.0, 11.782489952193288)
		(192.0, 14.507384915419733)
		(193.0, 4.002771937248532)
		(194.0, 2.4802420770820435)
		(195.0, 1.2162886855924238)
		(196.0, 0.10909613738349862)
		(197.0, 0.24525687861272105)
		(198.0, 5.447216011612864)
		(199.0, 5.848875511629954)
		(200.0, 0.710942827087715)
		(201.0, 2.3431985209491035)
	};
	\addlegendentry{$g_n$}
	\end{axis}
	
	\end{tikzpicture}
	
	\caption{$g_n$ and $g_z$}
	\label{fig:noattackvsnoraml} 
\end{figure}

Fig.~\ref{fig:noattackvsnoraml} shows the value of $g_n$ and $g_z$, the attacker records the sensor readings from time $1$ to $100$ and replays them to the controller from time $101$ to $200$. During the operation when replay attack is present and absent. It can be observed that there is a significant difference in the statistical distribution of $g_n$ and $g_z$ when replay attack occurs. In other words, the detector is effective.

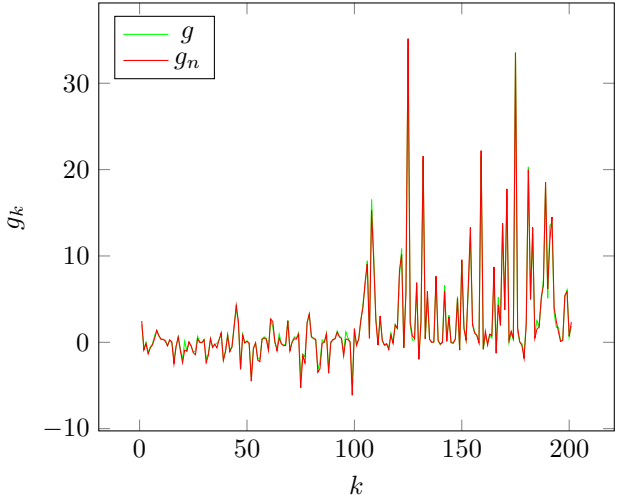
\begin{figure}[t!]
	\begin{tikzpicture}[]
	\begin{axis}[legend pos = {north west}, ylabel = {$g_k$}, xlabel = {$k$}]\addplot+ [mark = {none}, green]coordinates {
		(1.0, 2.070489596583272)
		(2.0, -0.7483949457625977)
		(3.0, 0.004364495370064558)
		(4.0, -1.188990205678348)
		(5.0, -0.5380394361645278)
		(6.0, -0.17279013395564213)
		(7.0, 0.7679894555283426)
		(8.0, 1.3985758570502902)
		(9.0, 0.7342683000158425)
		(10.0, 0.2992493083268106)
		(11.0, 0.34282574543158395)
		(12.0, 0.222042362612334)
		(13.0, -0.3690294180046747)
		(14.0, 0.2075874673414087)
		(15.0, 0.06660766355161056)
		(16.0, -1.9978282882757954)
		(17.0, -0.4838234584047668)
		(18.0, 0.6118454244179808)
		(19.0, -0.8162915058400361)
		(20.0, -1.9843831041610074)
		(21.0, 0.1304063082631317)
		(22.0, -0.9983474680338236)
		(23.0, -0.026780926648165337)
		(24.0, -0.3550343932696138)
		(25.0, -1.1576416693182)
		(26.0, -1.0084992341638626)
		(27.0, 0.6922496424539435)
		(28.0, 0.06464464075695053)
		(29.0, -0.04682188358059284)
		(30.0, 0.31432605423618876)
		(31.0, -1.899739006109694)
		(32.0, -1.3414089728271381)
		(33.0, 0.4262624423380744)
		(34.0, -0.5402743978800026)
		(35.0, -0.25469145450739306)
		(36.0, -0.13729981272926484)
		(37.0, 0.3063705788930524)
		(38.0, 0.8880703910085469)
		(39.0, -1.938325437517892)
		(40.0, -0.9813126018083862)
		(41.0, 0.9031182287772194)
		(42.0, -0.8590695933434908)
		(43.0, -0.5370599575625972)
		(44.0, 1.9774839653445504)
		(45.0, 4.014667304725821)
		(46.0, 2.070046461937201)
		(47.0, -2.9225950103859164)
		(48.0, 0.9185272573944635)
		(49.0, -0.10383908284834487)
		(50.0, 0.1868410409474779)
		(51.0, -0.10136732963894345)
		(52.0, -4.0203915438996045)
		(53.0, -0.6265382660653367)
		(54.0, -0.13190507011612373)
		(55.0, -1.8495195422969095)
		(56.0, -1.9472566403194427)
		(57.0, 0.40543288608276185)
		(58.0, 0.6294302185193308)
		(59.0, 0.4669148699824551)
		(60.0, -0.8123917760267589)
		(61.0, 2.546731219029831)
		(62.0, 1.8491382104810188)
		(63.0, -0.11991017246718139)
		(64.0, -0.6085730968266754)
		(65.0, 0.9405129305998066)
		(66.0, -0.14498297670146987)
		(67.0, -0.32027544154246146)
		(68.0, -0.22550711494158104)
		(69.0, 2.533482093699733)
		(70.0, -0.9087794555092199)
		(71.0, 0.0705035543148502)
		(72.0, 0.626183042962394)
		(73.0, 0.46299654378281513)
		(74.0, 1.0745126258269122)
		(75.0, -3.9006864222492847)
		(76.0, -1.3746939102329536)
		(77.0, -1.6537244911129285)
		(78.0, 2.260666111650777)
		(79.0, 3.2327888414403203)
		(80.0, 0.692090263451425)
		(81.0, 0.40992642771571486)
		(82.0, 0.38930017986499343)
		(83.0, -3.083353204530093)
		(84.0, -2.2656530370405994)
		(85.0, 0.2585436349791964)
		(86.0, 0.00773531600968419)
		(87.0, 1.034658586834576)
		(88.0, -3.0532364725166916)
		(89.0, 0.06004624122462021)
		(90.0, 0.2808255092011942)
		(91.0, 0.465685691943897)
		(92.0, 1.3073708865023554)
		(93.0, 0.6981883733318259)
		(94.0, 0.46344267920443594)
		(95.0, -1.2520749813768224)
		(96.0, 1.209944505376816)
		(97.0, 0.5121513498600048)
		(98.0, 0.046555390513699446)
		(99.0, -4.888259334799366)
		(100.0, 1.1510359849263776)
		(101.0, -0.22462278516586262)
		(102.0, 0.35465338799462115)
		(103.0, 2.527991258833654)
		(104.0, 4.416273292807032)
		(105.0, 6.6060348854611375)
		(106.0, 9.446947053443449)
		(107.0, 0.9214707115442313)
		(108.0, 16.522971429567693)
		(109.0, 10.001909865289191)
		(110.0, 2.7450617054274797)
		(111.0, -0.20016156102844196)
		(112.0, 2.8666661375994815)
		(113.0, 0.16104661336071013)
		(114.0, -0.37779743660943244)
		(115.0, -0.16553774836467494)
		(116.0, -0.6372572080622372)
		(117.0, 1.0682779504283375)
		(118.0, -0.04355307241181805)
		(119.0, 2.0753701576546995)
		(120.0, 1.5725076018840614)
		(121.0, 7.743508295072731)
		(122.0, 10.849131248068483)
		(123.0, -0.5363154321311714)
		(124.0, 6.027722485270798)
		(125.0, 33.9268937473201)
		(126.0, 1.9992606038924388)
		(127.0, 0.2593323405356116)
		(128.0, 0.2630598000013251)
		(129.0, 6.535101253960714)
		(130.0, -1.941509476050991)
		(131.0, 2.2978535149015316)
		(132.0, 21.11650154991381)
		(133.0, 0.5243024278800825)
		(134.0, 5.149402257631749)
		(135.0, 0.3336168954690349)
		(136.0, 0.005058978396506764)
		(137.0, 0.03682810477043533)
		(138.0, 7.478541409399204)
		(139.0, 0.17775858277215217)
		(140.0, -0.2718443307037317)
		(141.0, 0.1240354644557986)
		(142.0, 6.598341745146629)
		(143.0, 0.17884927705557452)
		(144.0, 3.148674170587451)
		(145.0, 0.001445029149053978)
		(146.0, -0.010667262302406888)
		(147.0, 0.30656966113324935)
		(148.0, 5.281594943431444)
		(149.0, -0.8926663297717965)
		(150.0, 9.488126140561349)
		(151.0, 1.8711496376821362)
		(152.0, 0.33789905778216633)
		(153.0, 5.947561569698579)
		(154.0, 12.25461882804529)
		(155.0, 2.2528400764316268)
		(156.0, 1.1585107370012415)
		(157.0, 0.8365711449889375)
		(158.0, -0.022803811871792735)
		(159.0, 21.55197663442382)
		(160.0, -0.8341554211297697)
		(161.0, 0.8596035856957895)
		(162.0, -0.0948309903201685)
		(163.0, 0.8621999151944457)
		(164.0, 0.5003971168707677)
		(165.0, 7.797608275922023)
		(166.0, -0.9792079864304262)
		(167.0, 5.213332756006685)
		(168.0, 2.0494306707082606)
		(169.0, 13.45883349381709)
		(170.0, 4.251315588786007)
		(171.0, 17.051186256191638)
		(172.0, -0.01387516996222228)
		(173.0, 1.0639202429403718)
		(174.0, 0.3429594893544379)
		(175.0, 33.59698662358039)
		(176.0, 1.2144239725945998)
		(177.0, 0.017096943838017395)
		(178.0, -0.16044994144385494)
		(179.0, -1.564583253085942)
		(180.0, 2.290480228096582)
		(181.0, 20.315010285628844)
		(182.0, 4.99294990629234)
		(183.0, 12.552590351235116)
		(184.0, 1.0096590657761357)
		(185.0, 2.5209031894505074)
		(186.0, 1.8480842504544053)
		(187.0, 4.778810831706)
		(188.0, 7.297158011895245)
		(189.0, 18.13179258971481)
		(190.0, 5.1035619780503385)
		(191.0, 13.549708119595111)
		(192.0, 13.785178208804682)
		(193.0, 3.4291581863758216)
		(194.0, 1.7425299836428287)
		(195.0, 1.7293183795594338)
		(196.0, 0.13782199917489354)
		(197.0, 0.2547916911975505)
		(198.0, 5.291513649186617)
		(199.0, 6.047435339287256)
		(200.0, 0.6243159403562608)
		(201.0, 1.9038006139765482)
	};
	\addlegendentry{$g$}
	\addplot+ [mark = {none}, red]coordinates {
		(1.0, 2.4709906807512216)
		(2.0, -0.8922343536710678)
		(3.0, -0.17765968701168888)
		(4.0, -1.331967248010123)
		(5.0, -0.6393628338896493)
		(6.0, -0.2973681004578772)
		(7.0, 0.45624696086663974)
		(8.0, 1.3621353707752908)
		(9.0, 0.7496459501468244)
		(10.0, 0.3716921706299623)
		(11.0, 0.3419307718702078)
		(12.0, 0.12175363976038939)
		(13.0, -0.41069237055017227)
		(14.0, 0.28901450869082185)
		(15.0, 0.02826900038914619)
		(16.0, -2.466303718682876)
		(17.0, -0.4559519834172263)
		(18.0, 0.6171183052533942)
		(19.0, -0.9771093974293655)
		(20.0, -2.34331697285408)
		(21.0, -0.9098828561742631)
		(22.0, -1.0334368253531308)
		(23.0, 0.05790827272875765)
		(24.0, -0.44008566455174764)
		(25.0, -1.196154083038347)
		(26.0, -1.4326801189548988)
		(27.0, 0.40544254351718134)
		(28.0, -0.02915170625244423)
		(29.0, -0.037830870580357834)
		(30.0, 0.297275088346006)
		(31.0, -2.3442613765191114)
		(32.0, -1.5192481041823112)
		(33.0, 0.4189845688498894)
		(34.0, -0.5944564140674142)
		(35.0, -0.21615474417533387)
		(36.0, -0.6382665518635591)
		(37.0, 0.28147106622915197)
		(38.0, 1.1041341067500774)
		(39.0, -2.106080135627428)
		(40.0, -1.0853638502200667)
		(41.0, 0.8069856609171393)
		(42.0, -1.0820761226591888)
		(43.0, -0.5517736927941654)
		(44.0, 1.7367599694456226)
		(45.0, 4.258370738497038)
		(46.0, 2.434728386350719)
		(47.0, -3.119606139529576)
		(48.0, 0.8109744731041952)
		(49.0, -0.05165201027752664)
		(50.0, 0.10204402145420399)
		(51.0, -0.11624438037349304)
		(52.0, -4.498600177902599)
		(53.0, -0.7056976584370087)
		(54.0, -0.05855191520013037)
		(55.0, -2.0761564167168034)
		(56.0, -2.2272319390465887)
		(57.0, 0.32207531062569006)
		(58.0, 0.4992534078111107)
		(59.0, 0.3572714821472727)
		(60.0, -0.930985544202261)
		(61.0, 2.700884586987166)
		(62.0, 2.399994761764147)
		(63.0, 0.06983193825434642)
		(64.0, -0.9869342880310699)
		(65.0, 0.4717313915411059)
		(66.0, -0.16256466535481895)
		(67.0, -0.3926513969111606)
		(68.0, -0.3774337173208835)
		(69.0, 2.4831927411427186)
		(70.0, -0.9653418668468609)
		(71.0, -0.0012296817107393665)
		(72.0, 0.405347895478558)
		(73.0, 0.37929573056658306)
		(74.0, 0.9513347583426148)
		(75.0, -5.273825726047544)
		(76.0, -1.5514556216203574)
		(77.0, -2.3810612956961466)
		(78.0, 2.295917173420206)
		(79.0, 3.269804014666472)
		(80.0, 0.6160604223549808)
		(81.0, 0.4684994421427058)
		(82.0, 0.27872476877332397)
		(83.0, -3.4703266090501366)
		(84.0, -3.041133130211695)
		(85.0, -0.11680403150558422)
		(86.0, 0.017680151077750537)
		(87.0, 0.982445898294394)
		(88.0, -3.58099173313707)
		(89.0, -0.09172528112412559)
		(90.0, 0.3208470866148381)
		(91.0, 0.39881469517977863)
		(92.0, 1.1470019927162292)
		(93.0, 0.6375879601724239)
		(94.0, 0.4442048300500165)
		(95.0, -1.568893802869976)
		(96.0, 0.3622254717558726)
		(97.0, 0.35428686276954147)
		(98.0, 0.05220020937548625)
		(99.0, -6.128348133933698)
		(100.0, 1.5799096903525185)
		(101.0, -0.3664514956427354)
		(102.0, 0.35688122273570555)
		(103.0, 2.562930937133325)
		(104.0, 4.0645629058882955)
		(105.0, 6.9929981395590755)
		(106.0, 9.07907280378995)
		(107.0, 0.4629954083218122)
		(108.0, 15.26303224757036)
		(109.0, 9.386684147594146)
		(110.0, 2.5217424230042083)
		(111.0, -0.32688261739986324)
		(112.0, 3.061386549642547)
		(113.0, 0.22792563918583775)
		(114.0, -0.31827082776229276)
		(115.0, -0.16672032915112633)
		(116.0, -0.8262674909125489)
		(117.0, 0.8075298869976982)
		(118.0, -0.09957929639041829)
		(119.0, 1.9816295984005732)
		(120.0, 1.6352892854118966)
		(121.0, 8.34564063490152)
		(122.0, 10.158313453971456)
		(123.0, -0.6386711187136664)
		(124.0, 5.721700473927549)
		(125.0, 35.16867833566076)
		(126.0, 2.30613332866936)
		(127.0, 0.7434513871682737)
		(128.0, 0.45104437490047244)
		(129.0, 6.906107146425516)
		(130.0, -1.9655502810205032)
		(131.0, 2.40340530675169)
		(132.0, 21.552625947931475)
		(133.0, 0.38357692360732154)
		(134.0, 5.931905997063833)
		(135.0, 0.362016859984081)
		(136.0, -0.01335637133822365)
		(137.0, 0.03507993089462916)
		(138.0, 7.654715846134024)
		(139.0, 0.19771194011298576)
		(140.0, -0.22053865415735685)
		(141.0, -0.025266924691459858)
		(142.0, 5.9570010671000055)
		(143.0, 0.11100848835276844)
		(144.0, 2.9349485001200875)
		(145.0, -0.03632357686400589)
		(146.0, -0.11428942806503373)
		(147.0, 0.3992161928643668)
		(148.0, 5.0642984203412755)
		(149.0, -0.904874153170429)
		(150.0, 9.56459967771929)
		(151.0, 1.5792057284193457)
		(152.0, 0.24964148159798683)
		(153.0, 6.382091033834094)
		(154.0, 13.30839232833004)
		(155.0, 2.0103983705149666)
		(156.0, 1.055794486762605)
		(157.0, 0.7923595389590119)
		(158.0, -0.04675179673427221)
		(159.0, 22.20311054617408)
		(160.0, -0.7397916428401312)
		(161.0, 1.281074727415288)
		(162.0, -0.32328481611680626)
		(163.0, 0.9358072383800016)
		(164.0, 0.7728988571773685)
		(165.0, 8.73085601258397)
		(166.0, -1.261230526967518)
		(167.0, 4.338038370816348)
		(168.0, 1.959723523828791)
		(169.0, 13.807658633147382)
		(170.0, 3.777409319235236)
		(171.0, 17.76918933773433)
		(172.0, 0.2704995203648355)
		(173.0, 1.2811990406838625)
		(174.0, 0.4752083587428384)
		(175.0, 33.484187981582416)
		(176.0, 1.703226860440888)
		(177.0, 0.03884751052706084)
		(178.0, -0.3677634319969656)
		(179.0, -1.8478471563064756)
		(180.0, 2.0255731677974573)
		(181.0, 19.96280355484249)
		(182.0, 4.983698344274437)
		(183.0, 13.319564032035714)
		(184.0, 0.47147730281855105)
		(185.0, 1.5242127375511694)
		(186.0, 1.7756435729796454)
		(187.0, 5.212174162624482)
		(188.0, 6.573068729967816)
		(189.0, 18.542798076246267)
		(190.0, 6.175441044452382)
		(191.0, 11.782489952193288)
		(192.0, 14.507384915419733)
		(193.0, 4.002771937248532)
		(194.0, 2.4802420770820435)
		(195.0, 1.2162886855924238)
		(196.0, 0.10909613738349862)
		(197.0, 0.24525687861272105)
		(198.0, 5.447216011612864)
		(199.0, 5.848875511629954)
		(200.0, 0.710942827087715)
		(201.0, 2.3431985209491035)
	};
	\addlegendentry{$g_n$}
	\end{axis}
	
	\end{tikzpicture}
	
	\caption{$g_n$ and $g $}
	\label{fig:learningvsnoraml} 
\end{figure}

Fig.~\ref{fig:learningvsnoraml} shows the value of $g_n$ and $g$. Similar to Fig.~\ref{fig:noattackvsnoraml}, the attacker records the sensor readings and replays them. It is easy to see that there is not an appreciable difference in these two scenarios. In other words, the performance of ``on-line'' learning technique is close to that of the scenario when the system parameters are available.

\begin{figure}[h!]
	\begin{tikzpicture}[]
	\begin{axis}[legend pos = {north west}, ylabel = {$g_k$}, xlabel = {$k$}]\addplot+ [mark = {none}, green]coordinates {
		(1.0, 2.070489596583272)
		(2.0, -0.7483949457625977)
		(3.0, 0.004364495370064558)
		(4.0, -1.188990205678348)
		(5.0, -0.5380394361645278)
		(6.0, -0.17279013395564213)
		(7.0, 0.7679894555283426)
		(8.0, 1.3985758570502902)
		(9.0, 0.7342683000158425)
		(10.0, 0.2992493083268106)
		(11.0, 0.34282574543158395)
		(12.0, 0.222042362612334)
		(13.0, -0.3690294180046747)
		(14.0, 0.2075874673414087)
		(15.0, 0.06660766355161056)
		(16.0, -1.9978282882757954)
		(17.0, -0.4838234584047668)
		(18.0, 0.6118454244179808)
		(19.0, -0.8162915058400361)
		(20.0, -1.9843831041610074)
		(21.0, 0.1304063082631317)
		(22.0, -0.9983474680338236)
		(23.0, -0.026780926648165337)
		(24.0, -0.3550343932696138)
		(25.0, -1.1576416693182)
		(26.0, -1.0084992341638626)
		(27.0, 0.6922496424539435)
		(28.0, 0.06464464075695053)
		(29.0, -0.04682188358059284)
		(30.0, 0.31432605423618876)
		(31.0, -1.899739006109694)
		(32.0, -1.3414089728271381)
		(33.0, 0.4262624423380744)
		(34.0, -0.5402743978800026)
		(35.0, -0.25469145450739306)
		(36.0, -0.13729981272926484)
		(37.0, 0.3063705788930524)
		(38.0, 0.8880703910085469)
		(39.0, -1.938325437517892)
		(40.0, -0.9813126018083862)
		(41.0, 0.9031182287772194)
		(42.0, -0.8590695933434908)
		(43.0, -0.5370599575625972)
		(44.0, 1.9774839653445504)
		(45.0, 4.014667304725821)
		(46.0, 2.070046461937201)
		(47.0, -2.9225950103859164)
		(48.0, 0.9185272573944635)
		(49.0, -0.10383908284834487)
		(50.0, 0.1868410409474779)
		(51.0, -0.10136732963894345)
		(52.0, -4.0203915438996045)
		(53.0, -0.6265382660653367)
		(54.0, -0.13190507011612373)
		(55.0, -1.8495195422969095)
		(56.0, -1.9472566403194427)
		(57.0, 0.40543288608276185)
		(58.0, 0.6294302185193308)
		(59.0, 0.4669148699824551)
		(60.0, -0.8123917760267589)
		(61.0, 2.546731219029831)
		(62.0, 1.8491382104810188)
		(63.0, -0.11991017246718139)
		(64.0, -0.6085730968266754)
		(65.0, 0.9405129305998066)
		(66.0, -0.14498297670146987)
		(67.0, -0.32027544154246146)
		(68.0, -0.22550711494158104)
		(69.0, 2.533482093699733)
		(70.0, -0.9087794555092199)
		(71.0, 0.0705035543148502)
		(72.0, 0.626183042962394)
		(73.0, 0.46299654378281513)
		(74.0, 1.0745126258269122)
		(75.0, -3.9006864222492847)
		(76.0, -1.3746939102329536)
		(77.0, -1.6537244911129285)
		(78.0, 2.260666111650777)
		(79.0, 3.2327888414403203)
		(80.0, 0.692090263451425)
		(81.0, 0.40992642771571486)
		(82.0, 0.38930017986499343)
		(83.0, -3.083353204530093)
		(84.0, -2.2656530370405994)
		(85.0, 0.2585436349791964)
		(86.0, 0.00773531600968419)
		(87.0, 1.034658586834576)
		(88.0, -3.0532364725166916)
		(89.0, 0.06004624122462021)
		(90.0, 0.2808255092011942)
		(91.0, 0.465685691943897)
		(92.0, 1.3073708865023554)
		(93.0, 0.6981883733318259)
		(94.0, 0.46344267920443594)
		(95.0, -1.2520749813768224)
		(96.0, 1.209944505376816)
		(97.0, 0.5121513498600048)
		(98.0, 0.046555390513699446)
		(99.0, -4.888259334799366)
		(100.0, 1.1510359849263776)
		(101.0, -0.22462278516586262)
		(102.0, 0.35465338799462115)
		(103.0, 2.527991258833654)
		(104.0, 4.416273292807032)
		(105.0, 6.6060348854611375)
		(106.0, 9.446947053443449)
		(107.0, 0.9214707115442313)
		(108.0, 16.522971429567693)
		(109.0, 10.001909865289191)
		(110.0, 2.7450617054274797)
		(111.0, -0.20016156102844196)
		(112.0, 2.8666661375994815)
		(113.0, 0.16104661336071013)
		(114.0, -0.37779743660943244)
		(115.0, -0.16553774836467494)
		(116.0, -0.6372572080622372)
		(117.0, 1.0682779504283375)
		(118.0, -0.04355307241181805)
		(119.0, 2.0753701576546995)
		(120.0, 1.5725076018840614)
		(121.0, 7.743508295072731)
		(122.0, 10.849131248068483)
		(123.0, -0.5363154321311714)
		(124.0, 6.027722485270798)
		(125.0, 33.9268937473201)
		(126.0, 1.9992606038924388)
		(127.0, 0.2593323405356116)
		(128.0, 0.2630598000013251)
		(129.0, 6.535101253960714)
		(130.0, -1.941509476050991)
		(131.0, 2.2978535149015316)
		(132.0, 21.11650154991381)
		(133.0, 0.5243024278800825)
		(134.0, 5.149402257631749)
		(135.0, 0.3336168954690349)
		(136.0, 0.005058978396506764)
		(137.0, 0.03682810477043533)
		(138.0, 7.478541409399204)
		(139.0, 0.17775858277215217)
		(140.0, -0.2718443307037317)
		(141.0, 0.1240354644557986)
		(142.0, 6.598341745146629)
		(143.0, 0.17884927705557452)
		(144.0, 3.148674170587451)
		(145.0, 0.001445029149053978)
		(146.0, -0.010667262302406888)
		(147.0, 0.30656966113324935)
		(148.0, 5.281594943431444)
		(149.0, -0.8926663297717965)
		(150.0, 9.488126140561349)
		(151.0, 1.8711496376821362)
		(152.0, 0.33789905778216633)
		(153.0, 5.947561569698579)
		(154.0, 12.25461882804529)
		(155.0, 2.2528400764316268)
		(156.0, 1.1585107370012415)
		(157.0, 0.8365711449889375)
		(158.0, -0.022803811871792735)
		(159.0, 21.55197663442382)
		(160.0, -0.8341554211297697)
		(161.0, 0.8596035856957895)
		(162.0, -0.0948309903201685)
		(163.0, 0.8621999151944457)
		(164.0, 0.5003971168707677)
		(165.0, 7.797608275922023)
		(166.0, -0.9792079864304262)
		(167.0, 5.213332756006685)
		(168.0, 2.0494306707082606)
		(169.0, 13.45883349381709)
		(170.0, 4.251315588786007)
		(171.0, 17.051186256191638)
		(172.0, -0.01387516996222228)
		(173.0, 1.0639202429403718)
		(174.0, 0.3429594893544379)
		(175.0, 33.59698662358039)
		(176.0, 1.2144239725945998)
		(177.0, 0.017096943838017395)
		(178.0, -0.16044994144385494)
		(179.0, -1.564583253085942)
		(180.0, 2.290480228096582)
		(181.0, 20.315010285628844)
		(182.0, 4.99294990629234)
		(183.0, 12.552590351235116)
		(184.0, 1.0096590657761357)
		(185.0, 2.5209031894505074)
		(186.0, 1.8480842504544053)
		(187.0, 4.778810831706)
		(188.0, 7.297158011895245)
		(189.0, 18.13179258971481)
		(190.0, 5.1035619780503385)
		(191.0, 13.549708119595111)
		(192.0, 13.785178208804682)
		(193.0, 3.4291581863758216)
		(194.0, 1.7425299836428287)
		(195.0, 1.7293183795594338)
		(196.0, 0.13782199917489354)
		(197.0, 0.2547916911975505)
		(198.0, 5.291513649186617)
		(199.0, 6.047435339287256)
		(200.0, 0.6243159403562608)
		(201.0, 1.9038006139765482)
	};
	\addlegendentry{$g$}
	\addplot+ [mark = {none}, red]coordinates {
		(1.0, 2.4709906807512216)
		(2.0, -0.8922343536710678)
		(3.0, -0.17765968701168888)
		(4.0, -1.331967248010123)
		(5.0, -0.6393628338896493)
		(6.0, -0.2973681004578772)
		(7.0, 0.45624696086663974)
		(8.0, 1.3621353707752908)
		(9.0, 0.7496459501468244)
		(10.0, 0.3716921706299623)
		(11.0, 0.3419307718702078)
		(12.0, 0.12175363976038939)
		(13.0, -0.41069237055017227)
		(14.0, 0.28901450869082185)
		(15.0, 0.02826900038914619)
		(16.0, -2.466303718682876)
		(17.0, -0.4559519834172263)
		(18.0, 0.6171183052533942)
		(19.0, -0.9771093974293655)
		(20.0, -2.34331697285408)
		(21.0, -0.9098828561742631)
		(22.0, -1.0334368253531308)
		(23.0, 0.05790827272875765)
		(24.0, -0.44008566455174764)
		(25.0, -1.196154083038347)
		(26.0, -1.4326801189548988)
		(27.0, 0.40544254351718134)
		(28.0, -0.02915170625244423)
		(29.0, -0.037830870580357834)
		(30.0, 0.297275088346006)
		(31.0, -2.3442613765191114)
		(32.0, -1.5192481041823112)
		(33.0, 0.4189845688498894)
		(34.0, -0.5944564140674142)
		(35.0, -0.21615474417533387)
		(36.0, -0.6382665518635591)
		(37.0, 0.28147106622915197)
		(38.0, 1.1041341067500774)
		(39.0, -2.106080135627428)
		(40.0, -1.0853638502200667)
		(41.0, 0.8069856609171393)
		(42.0, -1.0820761226591888)
		(43.0, -0.5517736927941654)
		(44.0, 1.7367599694456226)
		(45.0, 4.258370738497038)
		(46.0, 2.434728386350719)
		(47.0, -3.119606139529576)
		(48.0, 0.8109744731041952)
		(49.0, -0.05165201027752664)
		(50.0, 0.10204402145420399)
		(51.0, -0.11624438037349304)
		(52.0, -4.498600177902599)
		(53.0, -0.7056976584370087)
		(54.0, -0.05855191520013037)
		(55.0, -2.0761564167168034)
		(56.0, -2.2272319390465887)
		(57.0, 0.32207531062569006)
		(58.0, 0.4992534078111107)
		(59.0, 0.3572714821472727)
		(60.0, -0.930985544202261)
		(61.0, 2.700884586987166)
		(62.0, 2.399994761764147)
		(63.0, 0.06983193825434642)
		(64.0, -0.9869342880310699)
		(65.0, 0.4717313915411059)
		(66.0, -0.16256466535481895)
		(67.0, -0.3926513969111606)
		(68.0, -0.3774337173208835)
		(69.0, 2.4831927411427186)
		(70.0, -0.9653418668468609)
		(71.0, -0.0012296817107393665)
		(72.0, 0.405347895478558)
		(73.0, 0.37929573056658306)
		(74.0, 0.9513347583426148)
		(75.0, -5.273825726047544)
		(76.0, -1.5514556216203574)
		(77.0, -2.3810612956961466)
		(78.0, 2.295917173420206)
		(79.0, 3.269804014666472)
		(80.0, 0.6160604223549808)
		(81.0, 0.4684994421427058)
		(82.0, 0.27872476877332397)
		(83.0, -3.4703266090501366)
		(84.0, -3.041133130211695)
		(85.0, -0.11680403150558422)
		(86.0, 0.017680151077750537)
		(87.0, 0.982445898294394)
		(88.0, -3.58099173313707)
		(89.0, -0.09172528112412559)
		(90.0, 0.3208470866148381)
		(91.0, 0.39881469517977863)
		(92.0, 1.1470019927162292)
		(93.0, 0.6375879601724239)
		(94.0, 0.4442048300500165)
		(95.0, -1.568893802869976)
		(96.0, 0.3622254717558726)
		(97.0, 0.35428686276954147)
		(98.0, 0.05220020937548625)
		(99.0, -6.128348133933698)
		(100.0, 1.5799096903525185)
		(101.0, -0.3664514956427354)
		(102.0, 0.35688122273570555)
		(103.0, 2.562930937133325)
		(104.0, 4.0645629058882955)
		(105.0, 6.9929981395590755)
		(106.0, 9.07907280378995)
		(107.0, 0.4629954083218122)
		(108.0, 15.26303224757036)
		(109.0, 9.386684147594146)
		(110.0, 2.5217424230042083)
		(111.0, -0.32688261739986324)
		(112.0, 3.061386549642547)
		(113.0, 0.22792563918583775)
		(114.0, -0.31827082776229276)
		(115.0, -0.16672032915112633)
		(116.0, -0.8262674909125489)
		(117.0, 0.8075298869976982)
		(118.0, -0.09957929639041829)
		(119.0, 1.9816295984005732)
		(120.0, 1.6352892854118966)
		(121.0, 8.34564063490152)
		(122.0, 10.158313453971456)
		(123.0, -0.6386711187136664)
		(124.0, 5.721700473927549)
		(125.0, 35.16867833566076)
		(126.0, 2.30613332866936)
		(127.0, 0.7434513871682737)
		(128.0, 0.45104437490047244)
		(129.0, 6.906107146425516)
		(130.0, -1.9655502810205032)
		(131.0, 2.40340530675169)
		(132.0, 21.552625947931475)
		(133.0, 0.38357692360732154)
		(134.0, 5.931905997063833)
		(135.0, 0.362016859984081)
		(136.0, -0.01335637133822365)
		(137.0, 0.03507993089462916)
		(138.0, 7.654715846134024)
		(139.0, 0.19771194011298576)
		(140.0, -0.22053865415735685)
		(141.0, -0.025266924691459858)
		(142.0, 5.9570010671000055)
		(143.0, 0.11100848835276844)
		(144.0, 2.9349485001200875)
		(145.0, -0.03632357686400589)
		(146.0, -0.11428942806503373)
		(147.0, 0.3992161928643668)
		(148.0, 5.0642984203412755)
		(149.0, -0.904874153170429)
		(150.0, 9.56459967771929)
		(151.0, 1.5792057284193457)
		(152.0, 0.24964148159798683)
		(153.0, 6.382091033834094)
		(154.0, 13.30839232833004)
		(155.0, 2.0103983705149666)
		(156.0, 1.055794486762605)
		(157.0, 0.7923595389590119)
		(158.0, -0.04675179673427221)
		(159.0, 22.20311054617408)
		(160.0, -0.7397916428401312)
		(161.0, 1.281074727415288)
		(162.0, -0.32328481611680626)
		(163.0, 0.9358072383800016)
		(164.0, 0.7728988571773685)
		(165.0, 8.73085601258397)
		(166.0, -1.261230526967518)
		(167.0, 4.338038370816348)
		(168.0, 1.959723523828791)
		(169.0, 13.807658633147382)
		(170.0, 3.777409319235236)
		(171.0, 17.76918933773433)
		(172.0, 0.2704995203648355)
		(173.0, 1.2811990406838625)
		(174.0, 0.4752083587428384)
		(175.0, 33.484187981582416)
		(176.0, 1.703226860440888)
		(177.0, 0.03884751052706084)
		(178.0, -0.3677634319969656)
		(179.0, -1.8478471563064756)
		(180.0, 2.0255731677974573)
		(181.0, 19.96280355484249)
		(182.0, 4.983698344274437)
		(183.0, 13.319564032035714)
		(184.0, 0.47147730281855105)
		(185.0, 1.5242127375511694)
		(186.0, 1.7756435729796454)
		(187.0, 5.212174162624482)
		(188.0, 6.573068729967816)
		(189.0, 18.542798076246267)
		(190.0, 6.175441044452382)
		(191.0, 11.782489952193288)
		(192.0, 14.507384915419733)
		(193.0, 4.002771937248532)
		(194.0, 2.4802420770820435)
		(195.0, 1.2162886855924238)
		(196.0, 0.10909613738349862)
		(197.0, 0.24525687861272105)
		(198.0, 5.447216011612864)
		(199.0, 5.848875511629954)
		(200.0, 0.710942827087715)
		(201.0, 2.3431985209491035)
	};
	\addlegendentry{$g_n$}
	\addplot+ [mark = {none}, blue]coordinates {
		(1.0, 2.4709906807512216)
		(2.0, -0.8922343536710678)
		(3.0, -0.17765968701168888)
		(4.0, -1.331967248010123)
		(5.0, -0.6393628338896493)
		(6.0, -0.2973681004578772)
		(7.0, 0.45624696086663974)
		(8.0, 1.3621353707752908)
		(9.0, 0.7496459501468244)
		(10.0, 0.3716921706299623)
		(11.0, 0.3419307718702078)
		(12.0, 0.12175363976038939)
		(13.0, -0.41069237055017227)
		(14.0, 0.28901450869082185)
		(15.0, 0.02826900038914619)
		(16.0, -2.466303718682876)
		(17.0, -0.4559519834172263)
		(18.0, 0.6171183052533942)
		(19.0, -0.9771093974293655)
		(20.0, -2.34331697285408)
		(21.0, -0.9098828561742631)
		(22.0, -1.0334368253531308)
		(23.0, 0.05790827272875765)
		(24.0, -0.44008566455174764)
		(25.0, -1.196154083038347)
		(26.0, -1.4326801189548988)
		(27.0, 0.40544254351718134)
		(28.0, -0.02915170625244423)
		(29.0, -0.037830870580357834)
		(30.0, 0.297275088346006)
		(31.0, -2.3442613765191114)
		(32.0, -1.5192481041823112)
		(33.0, 0.4189845688498894)
		(34.0, -0.5944564140674142)
		(35.0, -0.21615474417533387)
		(36.0, -0.6382665518635591)
		(37.0, 0.28147106622915197)
		(38.0, 1.1041341067500774)
		(39.0, -2.106080135627428)
		(40.0, -1.0853638502200667)
		(41.0, 0.8069856609171393)
		(42.0, -1.0820761226591888)
		(43.0, -0.5517736927941654)
		(44.0, 1.7367599694456226)
		(45.0, 4.258370738497038)
		(46.0, 2.434728386350719)
		(47.0, -3.119606139529576)
		(48.0, 0.8109744731041952)
		(49.0, -0.05165201027752664)
		(50.0, 0.10204402145420399)
		(51.0, -0.11624438037349304)
		(52.0, -4.498600177902599)
		(53.0, -0.7056976584370087)
		(54.0, -0.05855191520013037)
		(55.0, -2.0761564167168034)
		(56.0, -2.2272319390465887)
		(57.0, 0.32207531062569006)
		(58.0, 0.4992534078111107)
		(59.0, 0.3572714821472727)
		(60.0, -0.930985544202261)
		(61.0, 2.700884586987166)
		(62.0, 2.399994761764147)
		(63.0, 0.06983193825434642)
		(64.0, -0.9869342880310699)
		(65.0, 0.4717313915411059)
		(66.0, -0.16256466535481895)
		(67.0, -0.3926513969111606)
		(68.0, -0.3774337173208835)
		(69.0, 2.4831927411427186)
		(70.0, -0.9653418668468609)
		(71.0, -0.0012296817107393665)
		(72.0, 0.405347895478558)
		(73.0, 0.37929573056658306)
		(74.0, 0.9513347583426148)
		(75.0, -5.273825726047544)
		(76.0, -1.5514556216203574)
		(77.0, -2.3810612956961466)
		(78.0, 2.295917173420206)
		(79.0, 3.269804014666472)
		(80.0, 0.6160604223549808)
		(81.0, 0.4684994421427058)
		(82.0, 0.27872476877332397)
		(83.0, -3.4703266090501366)
		(84.0, -3.041133130211695)
		(85.0, -0.11680403150558422)
		(86.0, 0.017680151077750537)
		(87.0, 0.982445898294394)
		(88.0, -3.58099173313707)
		(89.0, -0.09172528112412559)
		(90.0, 0.3208470866148381)
		(91.0, 0.39881469517977863)
		(92.0, 1.1470019927162292)
		(93.0, 0.6375879601724239)
		(94.0, 0.4442048300500165)
		(95.0, -1.568893802869976)
		(96.0, 0.3622254717558726)
		(97.0, 0.35428686276954147)
		(98.0, 0.05220020937548625)
		(99.0, -6.128348133933698)
		(100.0, 1.5799096903525185)
		(101.0, -0.3664514956427354)
		(102.0, 0.35688122273570555)
		(103.0, -1.087538827437406)
		(104.0, -0.19030823665679952)
		(105.0, -3.1458048098722147)
		(106.0, -3.4737447152115184)
		(107.0, -0.48914647556384083)
		(108.0, -4.545271044148997)
		(109.0, -1.6936569542637425)
		(110.0, 1.2316873994383357)
		(111.0, 1.369273217783081)
		(112.0, 0.03599975967383906)
		(113.0, -1.7012011904860662)
		(114.0, -1.1492890007633196)
		(115.0, 1.1993609568865704)
		(116.0, -1.7768726793524399)
		(117.0, 0.01839664889434811)
		(118.0, 3.7771321726988125)
		(119.0, -0.19699921418087518)
		(120.0, -0.24440979414373476)
		(121.0, 1.0928780461980332)
		(122.0, 1.6078030232595206)
		(123.0, -0.8389988238248547)
		(124.0, 0.9345571164299289)
		(125.0, -6.226159650109859)
		(126.0, -3.0199307252723444)
		(127.0, 0.38989441079236276)
		(128.0, -0.5120943765240455)
		(129.0, 3.918532579249926)
		(130.0, -3.627596744139887)
		(131.0, 0.07134778593559032)
		(132.0, -4.443124515451492)
		(133.0, 0.7253338155242757)
		(134.0, -0.4243263369469221)
		(135.0, 2.7621178513256788)
		(136.0, 3.6960226040959694)
		(137.0, 0.18984556448557488)
		(138.0, 0.9018296977393643)
		(139.0, 0.1929380924645056)
		(140.0, -0.37394205672584757)
		(141.0, 0.31182782726303904)
		(142.0, 0.16463538366978006)
		(143.0, 0.09992667960952019)
		(144.0, -0.13675757864670618)
		(145.0, 1.2231383360543333)
		(146.0, -0.8710253863669426)
		(147.0, 0.7273298710930607)
		(148.0, -0.05184505818056229)
		(149.0, -0.38933552593480186)
		(150.0, -1.1229615564137905)
		(151.0, 0.024694175708849242)
		(152.0, -1.1274679351241061)
		(153.0, -1.2248502784570119)
		(154.0, 0.08119122080717817)
		(155.0, 0.0014417897785676814)
		(156.0, 1.6739976205477998)
		(157.0, 0.4861517762980543)
		(158.0, 0.9606518091016035)
		(159.0, -2.6668344829828072)
		(160.0, 0.371579100940423)
		(161.0, 0.27189842337740017)
		(162.0, 1.530658421580421)
		(163.0, -1.1553994219212278)
		(164.0, -0.5950119518583874)
		(165.0, -1.1913338511899767)
		(166.0, -1.014058668620749)
		(167.0, -0.5711284686797788)
		(168.0, 0.7956411076599544)
		(169.0, 0.6782205954001688)
		(170.0, 0.535985387610125)
		(171.0, -6.206668049557086)
		(172.0, 1.2954978556507486)
		(173.0, 2.7609373663399674)
		(174.0, 0.7079625007032585)
		(175.0, -1.4349726664559697)
		(176.0, -0.33738639884108845)
		(177.0, -0.000767794925603249)
		(178.0, 0.06581700742389962)
		(179.0, -1.7863171912716507)
		(180.0, -0.2746071590378827)
		(181.0, -1.3484775338448172)
		(182.0, -0.2271216131416376)
		(183.0, -3.2534001707456635)
		(184.0, -0.010570005718474995)
		(185.0, -0.006983415466626619)
		(186.0, -1.5252630697274787)
		(187.0, 0.9265104787907088)
		(188.0, 0.05565623133022912)
		(189.0, -3.8853453149515413)
		(190.0, -1.2578800879318894)
		(191.0, 0.40984916551132455)
		(192.0, 1.678774249246758)
		(193.0, 1.0147558287581193)
		(194.0, 0.4112819635962852)
		(195.0, -0.19060725198047432)
		(196.0, 1.6348463644655673)
		(197.0, 0.832594243406126)
		(198.0, -0.006209906516706415)
		(199.0, -1.0214131179990484)
		(200.0, -0.45365067227065525)
		(201.0, -0.13788179686137125)
	};
	\addlegendentry{$g_z$}
	\end{axis}
	
	\end{tikzpicture}
	
	\caption{$g_z$, $g_n$ and $g$ }
	\label{fig:learningvsnoattackvsnoraml} 
\end{figure}
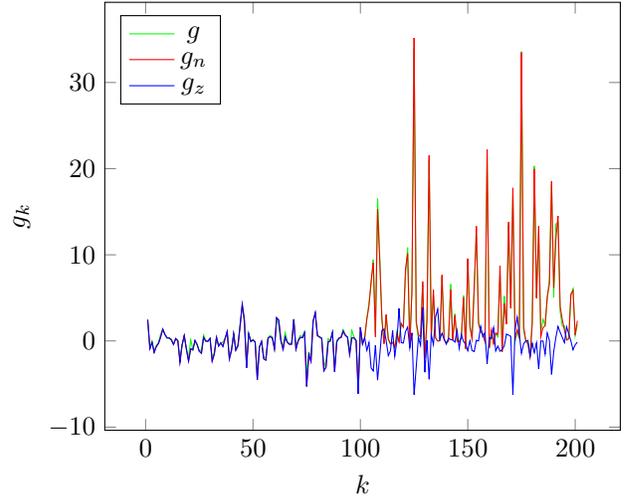

Fig.~\ref{fig:learningvsnoattackvsnoraml} shows the value of $g_n$, $g$ and $g_z$ together. We can observe that there is a significant difference in the distribution of $g$ and $g_z$, which means the ``on-line'' learning technique is effective. From this figure, one can see the performance of the ``on-line'' learning technique more clearly. 

In order to verify the effectiveness of the ``on-line'' learning technique proposed in this paper, we further employ the detection rate metric. Here, we carry out a sample set of 500 simulations to calculate the detection rate of replay attack. Here, we set $\zeta = J/0.9$, where $\zeta$ is defined in \eqref{eq11}, and $J$ is the LQG cost of the real system. Fig.~\ref{fig:detection rate} shows these curves as follows.

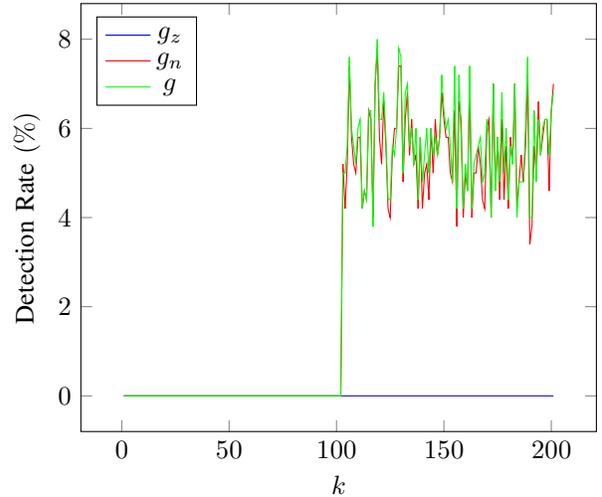
\begin{figure}[h!]
	\begin{tikzpicture}[]
	\begin{axis}[legend pos = {north west}, ylabel = {Detection Rate $(\%)$}, xlabel = {$k$}]\addplot+ [mark = {none}, blue]coordinates {
		(1.0, 0.0)
		(2.0, 0.0)
		(3.0, 0.0)
		(4.0, 0.0)
		(5.0, 0.0)
		(6.0, 0.0)
		(7.0, 0.0)
		(8.0, 0.0)
		(9.0, 0.0)
		(10.0, 0.0)
		(11.0, 0.0)
		(12.0, 0.0)
		(13.0, 0.0)
		(14.0, 0.0)
		(15.0, 0.0)
		(16.0, 0.0)
		(17.0, 0.0)
		(18.0, 0.0)
		(19.0, 0.0)
		(20.0, 0.0)
		(21.0, 0.0)
		(22.0, 0.0)
		(23.0, 0.0)
		(24.0, 0.0)
		(25.0, 0.0)
		(26.0, 0.0)
		(27.0, 0.0)
		(28.0, 0.0)
		(29.0, 0.0)
		(30.0, 0.0)
		(31.0, 0.0)
		(32.0, 0.0)
		(33.0, 0.0)
		(34.0, 0.0)
		(35.0, 0.0)
		(36.0, 0.0)
		(37.0, 0.0)
		(38.0, 0.0)
		(39.0, 0.0)
		(40.0, 0.0)
		(41.0, 0.0)
		(42.0, 0.0)
		(43.0, 0.0)
		(44.0, 0.0)
		(45.0, 0.0)
		(46.0, 0.0)
		(47.0, 0.0)
		(48.0, 0.0)
		(49.0, 0.0)
		(50.0, 0.0)
		(51.0, 0.0)
		(52.0, 0.0)
		(53.0, 0.0)
		(54.0, 0.0)
		(55.0, 0.0)
		(56.0, 0.0)
		(57.0, 0.0)
		(58.0, 0.0)
		(59.0, 0.0)
		(60.0, 0.0)
		(61.0, 0.0)
		(62.0, 0.0)
		(63.0, 0.0)
		(64.0, 0.0)
		(65.0, 0.0)
		(66.0, 0.0)
		(67.0, 0.0)
		(68.0, 0.0)
		(69.0, 0.0)
		(70.0, 0.0)
		(71.0, 0.0)
		(72.0, 0.0)
		(73.0, 0.0)
		(74.0, 0.0)
		(75.0, 0.0)
		(76.0, 0.0)
		(77.0, 0.0)
		(78.0, 0.0)
		(79.0, 0.0)
		(80.0, 0.0)
		(81.0, 0.0)
		(82.0, 0.0)
		(83.0, 0.0)
		(84.0, 0.0)
		(85.0, 0.0)
		(86.0, 0.0)
		(87.0, 0.0)
		(88.0, 0.0)
		(89.0, 0.0)
		(90.0, 0.0)
		(91.0, 0.0)
		(92.0, 0.0)
		(93.0, 0.0)
		(94.0, 0.0)
		(95.0, 0.0)
		(96.0, 0.0)
		(97.0, 0.0)
		(98.0, 0.0)
		(99.0, 0.0)
		(100.0, 0.0)
		(101.0, 0.0)
		(102.0, 0.0)
		(103.0, 0.0)
		(104.0, 0.0)
		(105.0, 0.0)
		(106.0, 0.0)
		(107.0, 0.0)
		(108.0, 0.0)
		(109.0, 0.0)
		(110.0, 0.0)
		(111.0, 0.0)
		(112.0, 0.0)
		(113.0, 0.0)
		(114.0, 0.0)
		(115.0, 0.0)
		(116.0, 0.0)
		(117.0, 0.0)
		(118.0, 0.0)
		(119.0, 0.0)
		(120.0, 0.0)
		(121.0, 0.0)
		(122.0, 0.0)
		(123.0, 0.0)
		(124.0, 0.0)
		(125.0, 0.0)
		(126.0, 0.0)
		(127.0, 0.0)
		(128.0, 0.0)
		(129.0, 0.0)
		(130.0, 0.0)
		(131.0, 0.0)
		(132.0, 0.0)
		(133.0, 0.0)
		(134.0, 0.0)
		(135.0, 0.0)
		(136.0, 0.0)
		(137.0, 0.0)
		(138.0, 0.0)
		(139.0, 0.0)
		(140.0, 0.0)
		(141.0, 0.0)
		(142.0, 0.0)
		(143.0, 0.0)
		(144.0, 0.0)
		(145.0, 0.0)
		(146.0, 0.0)
		(147.0, 0.0)
		(148.0, 0.0)
		(149.0, 0.0)
		(150.0, 0.0)
		(151.0, 0.0)
		(152.0, 0.0)
		(153.0, 0.0)
		(154.0, 0.0)
		(155.0, 0.0)
		(156.0, 0.0)
		(157.0, 0.0)
		(158.0, 0.0)
		(159.0, 0.0)
		(160.0, 0.0)
		(161.0, 0.0)
		(162.0, 0.0)
		(163.0, 0.0)
		(164.0, 0.0)
		(165.0, 0.0)
		(166.0, 0.0)
		(167.0, 0.0)
		(168.0, 0.0)
		(169.0, 0.0)
		(170.0, 0.0)
		(171.0, 0.0)
		(172.0, 0.0)
		(173.0, 0.0)
		(174.0, 0.0)
		(175.0, 0.0)
		(176.0, 0.0)
		(177.0, 0.0)
		(178.0, 0.0)
		(179.0, 0.0)
		(180.0, 0.0)
		(181.0, 0.0)
		(182.0, 0.0)
		(183.0, 0.0)
		(184.0, 0.0)
		(185.0, 0.0)
		(186.0, 0.0)
		(187.0, 0.0)
		(188.0, 0.0)
		(189.0, 0.0)
		(190.0, 0.0)
		(191.0, 0.0)
		(192.0, 0.0)
		(193.0, 0.0)
		(194.0, 0.0)
		(195.0, 0.0)
		(196.0, 0.0)
		(197.0, 0.0)
		(198.0, 0.0)
		(199.0, 0.0)
		(200.0, 0.0)
		(201.0, 0.0)
	};
	\addlegendentry{$g_z$}
	\addplot+ [mark = {none}, red]coordinates {
		(1.0, 0.0)
		(2.0, 0.0)
		(3.0, 0.0)
		(4.0, 0.0)
		(5.0, 0.0)
		(6.0, 0.0)
		(7.0, 0.0)
		(8.0, 0.0)
		(9.0, 0.0)
		(10.0, 0.0)
		(11.0, 0.0)
		(12.0, 0.0)
		(13.0, 0.0)
		(14.0, 0.0)
		(15.0, 0.0)
		(16.0, 0.0)
		(17.0, 0.0)
		(18.0, 0.0)
		(19.0, 0.0)
		(20.0, 0.0)
		(21.0, 0.0)
		(22.0, 0.0)
		(23.0, 0.0)
		(24.0, 0.0)
		(25.0, 0.0)
		(26.0, 0.0)
		(27.0, 0.0)
		(28.0, 0.0)
		(29.0, 0.0)
		(30.0, 0.0)
		(31.0, 0.0)
		(32.0, 0.0)
		(33.0, 0.0)
		(34.0, 0.0)
		(35.0, 0.0)
		(36.0, 0.0)
		(37.0, 0.0)
		(38.0, 0.0)
		(39.0, 0.0)
		(40.0, 0.0)
		(41.0, 0.0)
		(42.0, 0.0)
		(43.0, 0.0)
		(44.0, 0.0)
		(45.0, 0.0)
		(46.0, 0.0)
		(47.0, 0.0)
		(48.0, 0.0)
		(49.0, 0.0)
		(50.0, 0.0)
		(51.0, 0.0)
		(52.0, 0.0)
		(53.0, 0.0)
		(54.0, 0.0)
		(55.0, 0.0)
		(56.0, 0.0)
		(57.0, 0.0)
		(58.0, 0.0)
		(59.0, 0.0)
		(60.0, 0.0)
		(61.0, 0.0)
		(62.0, 0.0)
		(63.0, 0.0)
		(64.0, 0.0)
		(65.0, 0.0)
		(66.0, 0.0)
		(67.0, 0.0)
		(68.0, 0.0)
		(69.0, 0.0)
		(70.0, 0.0)
		(71.0, 0.0)
		(72.0, 0.0)
		(73.0, 0.0)
		(74.0, 0.0)
		(75.0, 0.0)
		(76.0, 0.0)
		(77.0, 0.0)
		(78.0, 0.0)
		(79.0, 0.0)
		(80.0, 0.0)
		(81.0, 0.0)
		(82.0, 0.0)
		(83.0, 0.0)
		(84.0, 0.0)
		(85.0, 0.0)
		(86.0, 0.0)
		(87.0, 0.0)
		(88.0, 0.0)
		(89.0, 0.0)
		(90.0, 0.0)
		(91.0, 0.0)
		(92.0, 0.0)
		(93.0, 0.0)
		(94.0, 0.0)
		(95.0, 0.0)
		(96.0, 0.0)
		(97.0, 0.0)
		(98.0, 0.0)
		(99.0, 0.0)
		(100.0, 0.0)
		(101.0, 0.0)
		(102.0, 0.0)
		(103.0, 5.2)
		(104.0, 4.2)
		(105.0, 5.2)
		(106.0, 7.3999999999999995)
		(107.0, 5.800000000000001)
		(108.0, 5.2)
		(109.0, 5.0)
		(110.0, 5.800000000000001)
		(111.0, 5.800000000000001)
		(112.0, 4.2)
		(113.0, 4.6)
		(114.0, 4.3999999999999995)
		(115.0, 6.4)
		(116.0, 6.2)
		(117.0, 4.0)
		(118.0, 7.000000000000001)
		(119.0, 7.8)
		(120.0, 5.800000000000001)
		(121.0, 5.2)
		(122.0, 6.6000000000000005)
		(123.0, 5.6000000000000005)
		(124.0, 4.2)
		(125.0, 4.0)
		(126.0, 5.4)
		(127.0, 6.0)
		(128.0, 6.0)
		(129.0, 7.3999999999999995)
		(130.0, 7.3999999999999995)
		(131.0, 4.8)
		(132.0, 6.2)
		(133.0, 6.800000000000001)
		(134.0, 5.4)
		(135.0, 6.2)
		(136.0, 5.2)
		(137.0, 5.4)
		(138.0, 4.2)
		(139.0, 5.800000000000001)
		(140.0, 4.2)
		(141.0, 5.0)
		(142.0, 5.2)
		(143.0, 4.3999999999999995)
		(144.0, 6.0)
		(145.0, 5.0)
		(146.0, 6.2)
		(147.0, 5.4)
		(148.0, 5.800000000000001)
		(149.0, 6.800000000000001)
		(150.0, 6.2)
		(151.0, 5.800000000000001)
		(152.0, 5.800000000000001)
		(153.0, 5.0)
		(154.0, 4.8)
		(155.0, 6.4)
		(156.0, 3.8)
		(157.0, 6.6000000000000005)
		(158.0, 6.2)
		(159.0, 4.0)
		(160.0, 5.0)
		(161.0, 4.6)
		(162.0, 7.000000000000001)
		(163.0, 4.0)
		(164.0, 5.0)
		(165.0, 5.0)
		(166.0, 5.6000000000000005)
		(167.0, 5.2)
		(168.0, 4.3999999999999995)
		(169.0, 4.2)
		(170.0, 6.0)
		(171.0, 6.2)
		(172.0, 4.0)
		(173.0, 7.000000000000001)
		(174.0, 4.6)
		(175.0, 5.800000000000001)
		(176.0, 4.3999999999999995)
		(177.0, 6.2)
		(178.0, 4.3999999999999995)
		(179.0, 5.6000000000000005)
		(180.0, 4.2)
		(181.0, 5.800000000000001)
		(182.0, 5.2)
		(183.0, 7.000000000000001)
		(184.0, 4.2)
		(185.0, 4.8)
		(186.0, 5.4)
		(187.0, 4.8)
		(188.0, 5.6000000000000005)
		(189.0, 7.199999999999999)
		(190.0, 3.4000000000000004)
		(191.0, 3.8)
		(192.0, 5.6000000000000005)
		(193.0, 5.0)
		(194.0, 6.6000000000000005)
		(195.0, 5.4)
		(196.0, 5.800000000000001)
		(197.0, 6.2)
		(198.0, 6.2)
		(199.0, 4.6)
		(200.0, 6.4)
		(201.0, 7.000000000000001)
	};
	\addlegendentry{$g_n$}
	\addplot+ [mark = {none}, green]coordinates {
		(1.0, 0.0)
		(2.0, 0.0)
		(3.0, 0.0)
		(4.0, 0.0)
		(5.0, 0.0)
		(6.0, 0.0)
		(7.0, 0.0)
		(8.0, 0.0)
		(9.0, 0.0)
		(10.0, 0.0)
		(11.0, 0.0)
		(12.0, 0.0)
		(13.0, 0.0)
		(14.0, 0.0)
		(15.0, 0.0)
		(16.0, 0.0)
		(17.0, 0.0)
		(18.0, 0.0)
		(19.0, 0.0)
		(20.0, 0.0)
		(21.0, 0.0)
		(22.0, 0.0)
		(23.0, 0.0)
		(24.0, 0.0)
		(25.0, 0.0)
		(26.0, 0.0)
		(27.0, 0.0)
		(28.0, 0.0)
		(29.0, 0.0)
		(30.0, 0.0)
		(31.0, 0.0)
		(32.0, 0.0)
		(33.0, 0.0)
		(34.0, 0.0)
		(35.0, 0.0)
		(36.0, 0.0)
		(37.0, 0.0)
		(38.0, 0.0)
		(39.0, 0.0)
		(40.0, 0.0)
		(41.0, 0.0)
		(42.0, 0.0)
		(43.0, 0.0)
		(44.0, 0.0)
		(45.0, 0.0)
		(46.0, 0.0)
		(47.0, 0.0)
		(48.0, 0.0)
		(49.0, 0.0)
		(50.0, 0.0)
		(51.0, 0.0)
		(52.0, 0.0)
		(53.0, 0.0)
		(54.0, 0.0)
		(55.0, 0.0)
		(56.0, 0.0)
		(57.0, 0.0)
		(58.0, 0.0)
		(59.0, 0.0)
		(60.0, 0.0)
		(61.0, 0.0)
		(62.0, 0.0)
		(63.0, 0.0)
		(64.0, 0.0)
		(65.0, 0.0)
		(66.0, 0.0)
		(67.0, 0.0)
		(68.0, 0.0)
		(69.0, 0.0)
		(70.0, 0.0)
		(71.0, 0.0)
		(72.0, 0.0)
		(73.0, 0.0)
		(74.0, 0.0)
		(75.0, 0.0)
		(76.0, 0.0)
		(77.0, 0.0)
		(78.0, 0.0)
		(79.0, 0.0)
		(80.0, 0.0)
		(81.0, 0.0)
		(82.0, 0.0)
		(83.0, 0.0)
		(84.0, 0.0)
		(85.0, 0.0)
		(86.0, 0.0)
		(87.0, 0.0)
		(88.0, 0.0)
		(89.0, 0.0)
		(90.0, 0.0)
		(91.0, 0.0)
		(92.0, 0.0)
		(93.0, 0.0)
		(94.0, 0.0)
		(95.0, 0.0)
		(96.0, 0.0)
		(97.0, 0.0)
		(98.0, 0.0)
		(99.0, 0.0)
		(100.0, 0.0)
		(101.0, 0.0)
		(102.0, 0.0)
		(103.0, 5.0)
		(104.0, 5.0)
		(105.0, 5.6000000000000005)
		(106.0, 7.6)
		(107.0, 6.0)
		(108.0, 5.6000000000000005)
		(109.0, 5.2)
		(110.0, 6.0)
		(111.0, 6.2)
		(112.0, 4.2)
		(113.0, 4.6)
		(114.0, 4.3999999999999995)
		(115.0, 6.4)
		(116.0, 6.4)
		(117.0, 3.8)
		(118.0, 7.000000000000001)
		(119.0, 8.0)
		(120.0, 6.2)
		(121.0, 6.2)
		(122.0, 6.800000000000001)
		(123.0, 5.800000000000001)
		(124.0, 4.3999999999999995)
		(125.0, 4.3999999999999995)
		(126.0, 5.6000000000000005)
		(127.0, 5.4)
		(128.0, 6.0)
		(129.0, 7.8)
		(130.0, 7.6)
		(131.0, 5.0)
		(132.0, 6.800000000000001)
		(133.0, 7.000000000000001)
		(134.0, 5.6000000000000005)
		(135.0, 6.0)
		(136.0, 5.2)
		(137.0, 6.0)
		(138.0, 4.3999999999999995)
		(139.0, 5.800000000000001)
		(140.0, 4.8)
		(141.0, 5.4)
		(142.0, 6.0)
		(143.0, 5.0)
		(144.0, 6.0)
		(145.0, 5.4)
		(146.0, 5.800000000000001)
		(147.0, 5.4)
		(148.0, 5.800000000000001)
		(149.0, 7.199999999999999)
		(150.0, 6.4)
		(151.0, 6.0)
		(152.0, 6.2)
		(153.0, 5.6000000000000005)
		(154.0, 4.8)
		(155.0, 7.3999999999999995)
		(156.0, 4.2)
		(157.0, 7.199999999999999)
		(158.0, 5.800000000000001)
		(159.0, 4.2)
		(160.0, 5.2)
		(161.0, 4.6)
		(162.0, 7.3999999999999995)
		(163.0, 4.2)
		(164.0, 5.2)
		(165.0, 5.4)
		(166.0, 5.6000000000000005)
		(167.0, 5.800000000000001)
		(168.0, 4.8)
		(169.0, 5.0)
		(170.0, 6.2)
		(171.0, 5.800000000000001)
		(172.0, 4.0)
		(173.0, 7.000000000000001)
		(174.0, 4.6)
		(175.0, 5.800000000000001)
		(176.0, 4.8)
		(177.0, 6.800000000000001)
		(178.0, 5.0)
		(179.0, 6.0)
		(180.0, 4.3999999999999995)
		(181.0, 5.6000000000000005)
		(182.0, 5.2)
		(183.0, 7.000000000000001)
		(184.0, 4.0)
		(185.0, 4.8)
		(186.0, 4.8)
		(187.0, 4.8)
		(188.0, 6.0)
		(189.0, 7.6)
		(190.0, 4.0)
		(191.0, 4.0)
		(192.0, 6.4)
		(193.0, 4.8)
		(194.0, 6.2)
		(195.0, 5.4)
		(196.0, 6.0)
		(197.0, 6.2)
		(198.0, 6.2)
		(199.0, 5.4)
		(200.0, 6.4)
		(201.0, 6.800000000000001)
	};
	\addlegendentry{$g$}
	\end{axis}
	
	\end{tikzpicture}
	
	\caption{Detection rate}
	\label{fig:detection rate} 
\end{figure}

For Fig.~\ref{fig:detection rate}, the blue curve $g_z$ shows the detection rate of the normal operation without replay attack. From this figure, the detection rate is equal to 0, which means there is no attack during all the process. The red curve $g_n$ shows the detection rate of the normal operation when replay attack occurs at time $k = 101$. It is easy to see that the detection rate is around $5.3\%$ staring from time $k = 101$. In other words, the replay attack can be detected. The green curve $g$ shows the scenario when the system parameters are not available to the operator. We use the ``on-line'' learning technique to infer corresponding parameters. One can notice that the detection rate is approximately same as that of the case with known system parameters. Therefore, this technique is effective for the detection of replay attack.

\begin{remark}
For this simulation, when the dimension of the system is $5\times 5\times 5$, i.e., $m = n =p = 5$, the performance of the proposed technique is satisfied when the number of update is in general more than 100,000 where the time update interval is $100$. Furthermore, in the case of $m=n=p=10$, the number will become larger.  
\end{remark}

For the technique proposed at the end of Section \Rmnum{5}, we assume that the real system is 100-dimension and $m = p = 5$, here, a 5-dimension system is employed to evaluate the real system.  The following figures show the performance of this technique.
Here we write legends as $g_{n,r}, g_{z,r}\, \text{and}\, g_{e}$ since they are different from those of the real $n$-dimension system because of the employed method. One can think that $g_{n,r}, g_{z,r}\, \text{and}\, g_{e}$ has corresponding meaning with $g_n, g_z\, \text{and}\, g$, respectively.

\begin{figure}[h!]
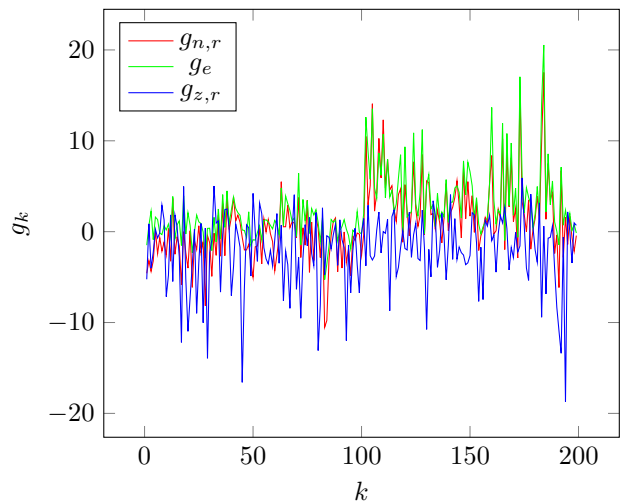


	
	\caption{$g_{n,r}$, $g_e$ and $g_{z,r}$}
	\label{fig:gnrgzrg} 
\end{figure}

From Fig.~\ref{fig:errUer}, we can see that the covariance of watermarking signal converges to the optimal one designed for the real system. From Fig.~\ref{fig:gnrg} and \ref{fig:gnrgzrg}, it is easy to know that $g_k$s of real system in the presence of attack and estimation using the technique proposed with attack are close in particular when replay attack is present.  
\section{Conclusion}
In this paper, the detection problem of replay attack via ``physical watermarking'' with known system parameters is proposed to achieve the desired trade-off between the  detection performance and control performance. Then we provide an on-line ``learning'' technique for determining the optimal detector and watermarking signals without the knowledge of system parameters. The simulation is carried out to verify the effectiveness of the proposed technique. 

There are still some problems that are not covered. For example, one problem of interest is to investigate the rate of the convergence of parameters. Another interesting problem is to study how to improve the performance of the proposed technique for the high dimension case in particular when the time of computation and resources are limited. Also, how to ``learning'' parameters of a linear dynamical system in the presence of attack (The ``learning'' process is assumed without attack in this paper) might be of interest. 
\appendices
\section{Proof of Theorem~\ref{theorem:convergence}}
This section is devoted to proving Theorem~\ref{theorem:convergence}. We only prove it for the case where $\tau = 0$. The $\tau > 0$ case can be proved following similar arguments and the details are omitted due to space constraints. 

Before proving theorem~\ref{theorem:convergence}, we need to prove the following lemmas which will be used in the proof of Theorem~\ref{theorem:convergence}. Lemma~\ref{lemma:Isserlis} can be seen as an extension of Isserlis' theorem~\cite{doi:10.1093/biomet/11.3.185} to the vector case. Lemma~\ref{lemma:MatrixStrongLaw} is introduced to extend ``A strong law of large numbers for martingales''~\cite{Chow1967}, which focuses on the scalar case, to the matrix case. 

\begin{lemma}
\label{lemma:Isserlis}
Suppose that $\omega, \upsilon, \varsigma, \xi$ are four jointly Gaussian random vectors with zero mean and proper dimensions. The following equations are true:
\begin{align}
	\mathbb{E}\left[\omega \upsilon^{T} \varsigma \xi^{{T}}\right] 
	=&\mathbb{E}\left[ \omega\xi^{{T}}\right] \mathbb{E} \left[ \upsilon^{{T}} \varsigma \right]+ \mathbb{E}\left[ \omega \varsigma^{{T}} \right]  \mathbb{E} \left[\upsilon \xi^{{T}} \right]\nonumber\\
	&+\mathbb{E} \left[ \omega\upsilon^{{T}} \right]  \mathbb{E} \left[\varsigma
	\xi^{{T}} \right] ,\label{eq:vectorIsserlis}
\end{align}
and
\begin{align}
	\mathbb{E}\left[ \upsilon^{T} \varsigma \xi^{{T}}\right] = 0.
	\label{eq:isserlis3}
\end{align}
\end{lemma}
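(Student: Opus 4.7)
My plan is to reduce both matrix identities to their scalar counterparts by working componentwise, and then invoke the classical Isserlis/Wick formulas for jointly Gaussian zero-mean scalars. The two scalar facts I will use are: for jointly Gaussian zero-mean $X_1,\ldots,X_4$,
\begin{equation*}
\mathbb{E}[X_1X_2X_3X_4] = \mathbb{E}[X_1X_2]\mathbb{E}[X_3X_4] + \mathbb{E}[X_1X_3]\mathbb{E}[X_2X_4] + \mathbb{E}[X_1X_4]\mathbb{E}[X_2X_3],
\end{equation*}
and $\mathbb{E}[X_1X_2X_3]=0$. These follow from standard moment-generating-function or characteristic-function arguments and are stated in any reference on multivariate Gaussians. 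Since $(\omega,\upsilon,\varsigma,\xi)$ are assumed jointly Gaussian with zero mean, every tuple of scalar components drawn from them is also jointly Gaussian with zero mean, so these scalar identities apply to any combination of coordinates.

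For \eqref{eq:vectorIsserlis}, I would compute the $(i,j)$ entry of the left-hand side:
\begin{equation*}
\bigl(\mathbb{E}[\omega\upsilon^{T}\varsigma\xi^{T}]\bigr)_{ij} = \mathbb{E}\Bigl[\omega_i\Bigl(\sum_{k}\upsilon_k\varsigma_k\Bigr)\xi_j\Bigr] = \sum_{k}\mathbb{E}[\omega_i\upsilon_k\varsigma_k\xi_j].
\end{equation*}
Applying the scalar Isserlis formula to each summand gives three pieces: $\sum_k \mathbb{E}[\omega_i\upsilon_k]\mathbb{E}[\varsigma_k\xi_j]$, $\sum_k \mathbb{E}[\omega_i\varsigma_k]\mathbb{E}[\upsilon_k\xi_j]$, and $\mathbb{E}[\omega_i\xi_j]\sum_k\mathbb{E}[\upsilon_k\varsigma_k]$. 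Recognizing sums of products of matrix entries as matrix multiplications, the first two pieces are the $(i,j)$ entries of $\mathbb{E}[\omega\upsilon^{T}]\mathbb{E}[\varsigma\xi^{T}]$ and $\mathbb{E}[\omega\varsigma^{T}]\mathbb{E}[\upsilon\xi^{T}]$, respectively, while $\sum_k\mathbb{E}[\upsilon_k\varsigma_k]=\mathbb{E}[\upsilon^{T}\varsigma]$ is a scalar, so the third piece equals $\mathbb{E}[\upsilon^{T}\varsigma]\,(\mathbb{E}[\omega\xi^{T}])_{ij}$. Assembling these matches the claimed right-hand side entry by entry.

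For \eqref{eq:isserlis3}, the $j$-th component of the vector $\mathbb{E}[\upsilon^{T}\varsigma\xi^{T}]$ is $\sum_k\mathbb{E}[\upsilon_k\varsigma_k\xi_j]$; each summand is a third moment of jointly Gaussian zero-mean scalars and hence vanishes, giving the zero vector. I do not foresee a genuine obstacle here: the only care needed is bookkeeping of which index becomes a row versus column index so that the sums are correctly identified with matrix products rather than, say, their transposes. Once the componentwise expansion is written out, the two identities follow directly from the scalar Wick formulas.
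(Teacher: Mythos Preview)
Your proposal is correct and matches the paper's approach for \eqref{eq:vectorIsserlis}: both expand the $(i,j)$ entry as $\sum_k\mathbb{E}[\omega_i\upsilon_k\varsigma_k\xi_j]$, apply scalar Isserlis termwise, and reassemble the sums as matrix products. For \eqref{eq:isserlis3} you invoke the vanishing of third moments directly, whereas the paper derives it by substituting $\omega=I$ into the already-proved four-vector identity; your route is slightly cleaner but the difference is cosmetic.
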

\noindent
\textit{Proof:} It is assumed that the vectors in Lemma~\ref{lemma:Isserlis} are $n$-dimension column vectors. Then let us consider the corresponding elements of the matrices on both hand sides in \eqref{eq:vectorIsserlis},
\begin{align*}
	&\mathbb{E}\left[\left[ \omega \upsilon^{T} \varsigma \xi^{T}\right]_{ij} \right] 
	=\sum_{k=1}^{n}\mathbb{E}\left[ \omega_{i} \upsilon^{T}_k  \varsigma_{k} \xi^{T}_{j}   \right]\\
	=&\sum_{k=1}^{n}\bigg(\mathbb{E}\left[ \omega_{i}\xi^{T}_{j}\right]  \mathbb{E} \left[ \upsilon^{T}_{k} \varsigma_{k}\right]+\mathbb{E}\left[ \omega_{i} \varsigma^{T}_{k} \right]  \mathbb{E} \left[\upsilon_{k} \xi^{T}_j \right]\\
	&+\mathbb{E}\left[ \omega_{i}\upsilon^{T}_k \right]  \mathbb{E} \left[\varsigma_{k}
	\xi^{T}_j \right] \bigg)\\
	={}&\mathbb{E}\left[ \omega_{i}\xi^{T}_{j}\right] \sum_{k=1}^{n} \mathbb{E} \left[ \upsilon^{T}_{k} \varsigma_{k} \right]+ \sum_{k=1}^{n}  \mathbb{E}\left[ \omega_{i} \varsigma^{T}_{k} \right]  \mathbb{E} \left[\upsilon_{k} \xi^{T}_j \right]\\
	&+  \sum_{k=1}^{n} \mathbb{E} \left[ \omega_{i}\upsilon^{T}_k \right]  \mathbb{E} \left[\varsigma_{k}
	\xi^{T}_j \right] ,\\
\end{align*}
where $ \left[ \omega \upsilon^{T} \varsigma \xi^{T}\right]_{ij}  $ denotes the $i$th row, $j$th column element of the matrix $ \left[ \omega \upsilon^{T} \varsigma \xi^{T}\right]_{ij} $, 
$ \omega_{i}$, $ \upsilon^{T}_k$, $  \varsigma_{k}$ and $\xi^{T}_{j}$ denote the $i$th, $k$th, $k$th and $j$th element of $\omega$, $\upsilon^{T}$, $  \varsigma $ and $\xi$, respectively.
Hence, the proof of \eqref{eq:vectorIsserlis} is completed.

Next let $ \omega $ to be the identity matrix $ I $ with proper dimension, we have that
\begin{align*}
	&\mathbb{E}\left[\omega \upsilon^{T} \varsigma \xi^{{T}}\right]
	= \mathbb{E}\left[I \upsilon^{T} \varsigma \xi^{{T}}\right]
	= \mathbb{E}\left[\upsilon^{T} \varsigma \xi^{{T}}\right]\\ 
	=&\mathbb{E}\left[\xi^{{T}}\right] \mathbb{E} \left[ \upsilon^{{T}} \varsigma \right]+ \mathbb{E}\left[ \varsigma^{{T}} \right]  \mathbb{E} \left[\upsilon \xi^{{T}} \right]\nonumber +  \mathbb{E} \left[\upsilon^{{T}} \right]  \mathbb{E} \left[\varsigma
	\xi^{{T}} \right]=0,
\end{align*}
which completes the proof of \eqref{eq:isserlis3}.\qed
\begin{lemma}
\label{lemma:MatrixStrongLaw}
If $\varUpsilon_{n}=\varPi_0+\varPi_1+\cdots+\varPi_n$ be a martingale such that 
\begin{align*}
\sum_{k=0}^\infty \frac{\mathbb E\left\|\varPi_k\right\|_F^2}{(k+1)^2}< \infty,
\end{align*}
where $\varPi_k(k=0,1,\cdots,n)$ and $\varUpsilon_n$ are all $m\times l$ matrices, then 
\begin{align*}
	\lim_{n\to\infty}\dfrac{\varUpsilon_n}{n+1} = \textbf{0} \ \text{almost surely}.
\end{align*}
\end{lemma}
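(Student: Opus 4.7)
The natural plan is to reduce the matrix-valued statement to the scalar martingale SLLN of Chow that is already cited in the paper, entry by entry. Fix indices $1 \leq i \leq m$ and $1 \leq j \leq l$, and let $[M]_{ij}$ denote the $(i,j)$ entry of a matrix $M$. Since $\varUpsilon_n$ is a martingale with respect to some filtration $\{\mathcal F_n\}$, each scalar sequence $\{[\varUpsilon_n]_{ij}\}$ is also a martingale with respect to the same filtration, and its increments are the scalars $[\varPi_k]_{ij}$.

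Next I would decompose the summability hypothesis. By definition of the Frobenius norm,
\begin{align*}
\mathbb E\|\varPi_k\|_F^2 = \sum_{i=1}^m\sum_{j=1}^l \mathbb E\bigl([\varPi_k]_{ij}\bigr)^2,
\end{align*}
so the assumption $\sum_{k=0}^\infty \mathbb E\|\varPi_k\|_F^2/(k+1)^2 < \infty$ implies, for every fixed pair $(i,j)$, that
\begin{align*}
\sum_{k=0}^\infty \frac{\mathbb E\bigl([\varPi_k]_{ij}\bigr)^2}{(k+1)^2} < \infty.
\end{align*}
Applying the scalar strong law of large numbers for martingales (Chow \cite{Chow1967}) to each sequence $\{[\varUpsilon_n]_{ij}\}$ then yields $[\varUpsilon_n]_{ij}/(n+1) \to 0$ almost surely.

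Finally I would assemble the entrywise convergences into the desired matrix statement. Because $m$ and $l$ are finite, the intersection of the $ml$ almost-sure events is itself almost sure, so on a single set of full measure every entry of $\varUpsilon_n/(n+1)$ tends to zero. Entrywise convergence of finite-dimensional matrices is equivalent to convergence in Frobenius norm, so $\varUpsilon_n/(n+1) \to \mathbf 0$ almost surely, as claimed.

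I do not anticipate any serious obstacle here: the reduction is routine once one notices that the martingale property is preserved coordinatewise and that the Frobenius norm splits additively over entries. The only subtlety is confirming that the version of Chow's scalar result invoked requires exactly the hypothesis $\sum_k \mathbb E(X_k^2)/(k+1)^2 < \infty$ on the martingale differences, which is the classical form and matches our decomposed condition directly.
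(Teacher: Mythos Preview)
Your proposal is correct and follows essentially the same approach as the paper: both reduce to the scalar martingale SLLN of Chow by observing that the Frobenius norm squared decomposes as a sum over entries, so the summability hypothesis passes to each entry, and then assemble the entrywise almost-sure limits into the matrix conclusion. Your write-up is in fact slightly more careful than the paper's, explicitly noting that each coordinate process is a martingale and that the finite intersection of almost-sure events is almost sure.
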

\noindent
\textit{Proof:}
First, let us consider the martingale law of large numbers~\cite{Chow1967}, it can be rewritten as:

If $Y_n = x_1+x_2+\cdots+x_n$ is a martingale such that $\sum_{k=0}^{\infty}\mathbb{E}|x_k|^{2}/(k+1)^2<\infty$, then
$\lim_{n\to\infty} Y_n/(n+1)=0 $ almost surely.

Let $\varPi_{k,ij}$ denote the $i$th row, $j$th column element of the matrix $\varPi_{k}$, we have:
\begin{align*}
	\mathbb E\left\|\varPi_k\right\|_F^2 = 
	\mathbb{E}\left(\sum_{i=1}^{m}\sum_{j=1}^{l}|\varPi_{k,ij}|^2\right).
\end{align*}

Hence, if $ \sum_{k=0}^\infty \frac{\mathbb E\left\|\varPi_k\right\|_F^2}{(k+1)^2}< \infty$, then for arbitrary $i,j$, 
\begin{align*}
	\sum_{k=0}^\infty \frac{\mathbb{E}|\varPi_{k,ij}|^2}{(k+1)^2}<\sum_{k=0}^\infty \frac{\mathbb E\left\|\varPi_k\right\|_F^2}{(k+1)^2}< \infty,
\end{align*}
then $\lim_{n\to\infty} \varUpsilon_{n,ij}/(n+1)=0 $ almost surely. Hence, 
\begin{align*}
	\lim_{n\to\infty}\dfrac{\varUpsilon_n}{n+1} = \textbf{0} \ \text{almost surely}.
\end{align*}
which completes the proof.\qed

The proof of theorem~\ref{theorem:convergence} is as follows.

\noindent
\textit{Proof:}
Define the filtration $\mathcal F_k$ to be the $\sigma$-algebra which is generated by the following random variables $\{x_0, \phi_0,\phi_0\cdots, \phi_{k-1}, w_0,\cdots, w_{k-1},v_0,\cdots,v_k\}$. It is easy to see that both $U_k$ and $y_k$ are measurable in the $\sigma$-algebra $\mathcal F_k$. 
Let us further define 
\begin{align*}
	\mathcal S_{k} = \sum_{t=0}^k (y_{t}\phi_{t-1}^T U_{t-1}^{-1} - H_0),
\end{align*}
where $\phi_{k-1}=0$ if $k<1$. The proof is divided into steps.

First, we need to prove that $S_{k}$ is a martingale with respect to the filtration $\{\mathcal F_k\}$, i.e.,
\begin{align}
	\mathbb E (\mathcal S_{k+1}|\mathcal F_{k}) = \mathcal S_k,
	\label{eq:martingale}
\end{align}
or in other words,
\begin{align*}
	\mathbb E(y_{k+1}\phi_k^TU_k^{-1} |\mathcal F_k) = H_0.
\end{align*}

Notice that $y_{k+1}$ can be rewritten as
\begin{align*}
	y_{k+1} = \sum_{t=0}^{k} H_t   \phi_{k-t} + \sum_{t=0}^{k}  CA^{t} w_{k-t} + CA^{k+1}x_0+v_{k+1},
\end{align*}
and $\phi_k = U_k^{1/2}\zeta_k$. Therefore,
\begin{align*}
	y_{k+1}\phi_k^TU_k^{-1} =  \left(H_0 U_k^{1/2}\zeta_k  + \psi_{k+1}+v_{k+1}\right)\zeta_k^T U_k^{-1/2},
\end{align*}
where
\begin{align*}
\psi_{k+1} = \sum_{t=1}^{k} H_t   \phi_{k-t} + \sum_{t=0}^{k}  CA^{t} w_{k-t} + CA^{k+1}x_0.
\end{align*}

As it is known that $U_k$ is measurable in the $\sigma$-algebra $\mathcal F_k$ and $\psi_{k+1}$ is independent of $\zeta_k$,
\begin{align*}
	\mathbb E(\psi_{k+1}\zeta_k^TU_k^{-1/2}|\mathcal F_k) =  
	\mathbb E(\psi_{k+1}\zeta_k^T|\mathcal F_k)U_k^{-1/2} =  0.
\end{align*}
Since $v_{k+1}$ is independent of $\mathcal F_k$ and $\zeta_k$, we have
\begin{align*}
	\mathbb E(v_{k+1}\zeta_k^TU_k^{-1/2}|\mathcal F_k) =  0.
\end{align*}
Finally,
\begin{align*}
	\mathbb E( H_0 U_k^{1/2}\zeta_k \zeta_k^T U_k^{-1/2}|\mathcal F_k)  &= H_0 U_k^{1/2}\mathbb E(\zeta_k \zeta_k^T |\mathcal F_k) U_k^{-1/2} =  H_0.
\end{align*}
Therefore, \eqref{eq:martingale} holds and we establish that $\mathcal S_k$ is a martingale with respect to the filtration $\{\mathcal F_k\}$. 

Next we need to prove that
\begin{align}
	\sum_{k=0}^\infty \frac{\mathbb E\left\|y_{k+1}\phi_k^TU_k^{-1}-H_0\right\|_F^2}{(k+1)^2}< \infty.
	\label{eq:boundedsum}
\end{align}

To this end, let us consider
\begin{equation}
	\label{H0}
	\begin{aligned}
	&\left[y_{k+1}\phi_k^TU_k^{-1}-H_0\right]\left[y_{k+1}\phi_k^TU_k^{-1}-H_0\right]^T\\
	=& y_{k+1}\phi_k^TU_k^{-2}\phi_ky_{k+1}^T -H_0 U_k^{-1}\phi_ky_{k+1}^T \\
	&-  y_{k+1}\phi_k^TU_k^{-2}H_0^T + H_0 H_0^T,
	\end{aligned}
\end{equation}
where
\begin{align*}
	y_{k+1}\phi_k^TU_k^{-2}\phi_ky_{k+1}^T &= \Xi_1+ \Xi_2 + \Xi_2^T + \Xi_3,
\end{align*}
where
\begin{align*}
	\Xi_1 &= H_0U_k^{1/2}\zeta_k\zeta_k^TU_k^{-1}\zeta_k \zeta_k^TU_k^{1/2}H_0^T,\\
	\Xi_2 &= (\psi_{k+1}+v_{k+1})\zeta_k^TU_k^{-1}\zeta_k \zeta_k^TU_k^{1/2}H_0^T,\\
	\Xi_3 & = (\psi_{k+1} +v_{k+1})\zeta_k^TU_k^{-1} \zeta_k (\psi_{k+1}+v_{k+1})^T.
\end{align*}

Now by Lemma~\ref{lemma:Isserlis}, we can prove that
\begin{align*}
	\mathbb E(\Xi_1|\mathcal F_k) &= H_0U_kH_0^T\tr(U_k^{-1}) + 2H_0H_0^T,\\
	\mathbb E(\Xi_2|\mathcal F_k) &= 0 ,\\
	\mathbb E(\Xi_3|\mathcal F_k) &= \tr(U_k^{-1}) \mathbb E(\psi_{k+1}  \psi_{k+1}^T) +\tr(U_k^{-1})R.
\end{align*}
Furthermore,
\begin{align*}
	\mathbb E(\psi_{k+1}\psi_{k+1}^T) = C\Sigma C^T + \sum_{t=1}^k H_t\left(\mathbb EU_{k-t}\right)H_t^T .
\end{align*}
Now let us consider the other terms in \eqref{H0},
\begin{align*}
	&\mathbb E(H_0 U_k^{-1}\phi_ky_{k+1}^T|\mathcal F_k) = H_0H_0^T,\\
	&\mathbb E(y_{k+1}\phi_k^TU_k^{-2}H_0^T|\mathcal F_k) = H_0 H_0^T ,\\
	&\mathbb E(H_0 H_0^T|\mathcal F_k) = H_0H_0^T.
\end{align*}
Hence, the following equation holds,
\begin{align*}
	&\mathbb{E}\left(\left[y_{k+1}\phi_k^TU_k^{-1}-H_0\right]\left[y_{k+1}\phi_k^TU_k^{-1}-H_0\right]^T | \mathcal F_k\right)\\
	=&H_0U_kH_0^T\tr(U_k^{-1}) + \tr(U_k^{-1}) \psi_{k+1}  \psi_{k+1}^T +\tr(U_k^{-1})R \\
	&+ H_0H_0^T.
\end{align*}
Now if $\underline M /(k+1)^\beta\leq U_k\leq \overline M$, we can conclude that
\begin{align*}
	&\mathbb E \left( \left\|y_{k+1}\phi_k^TU_k^{-1}-H_0\right\|_F^2 \right)\\
	=&\tr\left(\mathbb{E}\left(\left[y_{k+1}\phi_k^TU_k^{-1}-H_0\right]\left[y_{k+1}\phi_k^TU_k^{-1}-H_0\right]^T \right)\right)\\
	=&O\left((k+1)^{\beta}\right).
\end{align*}
Since $\beta < 1$, according to the convergence condition of infinite series, we know that the infinite sum on LHS of \eqref{eq:boundedsum} is bounded.

Therefore, by Lemma~\ref{lemma:MatrixStrongLaw},
\begin{align*}
	\lim_{k\to\infty}\frac{S_k}{k+1} = \textbf{0}\text{ almost surely},
\end{align*}
which proves that $H_{k,0}$ converges to $H_0$ almost surely.

\section{Proof of Theorem~\ref{theorem:Pconverge}}
In order to prove theorem~\ref{theorem:Pconverge}, we need to make use of the following lemma:
\begin{lemma}
  Suppose that $\rho_k$ converges to $\rho$, where $|\rho|< 1$. Furthermore, assume that $\lim_{k\rightarrow\infty} a_k'-a_k = 0$, where $a_k$ is a bounded sequence. Then we have
  \begin{align*}
		\lim_{k\rightarrow\infty} b_k' - b_k = 0,
  \end{align*}
  where $b_k$ and $b_k'$ satisfy the following recursive equation:
  \begin{align*}
    b_{k+1} = \rho b_k + a_k,\,b'_{k+1} = \rho_k b_k' + a_k',
  \end{align*}
  with initial condition $b_{-1} = b'_{-1} = 0$.
  \label{lemma:convergeexponential}
\end{lemma}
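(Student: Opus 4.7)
My plan is to reduce the problem to a linear recursion with vanishing forcing term and then invoke the standard contraction argument for stable linear systems with decaying inputs. Define the error sequence $e_k \triangleq b_k' - b_k$. Subtracting the two recursions gives
\begin{align*}
e_{k+1} &= \rho_k b_k' + a_k' - \rho b_k - a_k \\
&= \rho_k e_k + (\rho_k - \rho) b_k + (a_k' - a_k).
\end{align*}
Writing $\delta_k \triangleq (\rho_k - \rho) b_k + (a_k' - a_k)$, this is $e_{k+1} = \rho_k e_k + \delta_k$ with $e_{-1} = 0$.

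The first step is to verify $\delta_k \to 0$. Because $a_k$ is bounded and $|\rho| < 1$, an immediate induction on $b_{k+1} = \rho b_k + a_k$ shows $b_k$ is bounded by $\sup_j |a_j|/(1-|\rho|)$. Together with $\rho_k \to \rho$ this gives $(\rho_k - \rho) b_k \to 0$, and $(a_k' - a_k) \to 0$ by hypothesis, so $\delta_k \to 0$.

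The second step is to show that $e_{k+1} = \rho_k e_k + \delta_k$ with $\rho_k \to \rho$, $|\rho|<1$, and $\delta_k \to 0$ forces $e_k \to 0$. Pick $\rho^\star \in (|\rho|, 1)$ and choose $K$ so that $|\rho_k| \leq \rho^\star$ for all $k \geq K$. For $n \geq 0$ one iterates to obtain
\begin{align*}
|e_{K+n}| \leq (\rho^\star)^{n} |e_{K-1}| + \sum_{j=0}^{n-1} (\rho^\star)^{j} |\delta_{K+n-1-j}|.
\end{align*}
The first term decays geometrically. For the sum, given $\varepsilon > 0$ pick $N \geq K$ such that $|\delta_k| < \varepsilon(1-\rho^\star)/2$ for all $k \geq N$; splitting the sum at the index where $K+n-1-j = N$ bounds the tail by $\varepsilon/2$ and bounds the head (finitely many terms multiplied by $(\rho^\star)^{j}$ with $j$ large) by something that vanishes as $n \to \infty$. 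Hence $\limsup_n |e_{K+n}| \leq \varepsilon$, and since $\varepsilon$ was arbitrary, $e_k \to 0$.

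The argument is essentially routine; the only subtlety is that $\rho_k$ is time-varying rather than constant, which is handled by truncating at $K$ where $|\rho_k|$ is uniformly contractive. I do not anticipate a genuine obstacle, only the bookkeeping of the split-sum estimate. Note also that the proof generalizes verbatim when $b_k$, $a_k$ are matrices and $\rho_k$ is scalar (the case actually used in Theorem~\ref{theorem:Pconverge} to propagate $\mathscr U_{k,ij}$), by replacing $|\cdot|$ with $\|\cdot\|_F$ throughout.
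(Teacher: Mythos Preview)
Your proof is correct and follows essentially the same route as the paper: both subtract the recursions to obtain a time-varying linear recursion for the error with forcing term $(\rho_k-\rho)b_k+(a_k'-a_k)$, bound $b_k$ by $\sup_j|a_j|/(1-|\rho|)$ to show the forcing vanishes, and then exploit eventual contractivity of $\rho_k$. The only cosmetic difference is that the paper finishes with a one-line $\limsup$ inequality ($|e_{k+1}|\le(|\rho|+\epsilon)|e_k|+\epsilon(\sup|b_k|+1)$ implies $\limsup|e_k|\le \epsilon(\sup|b_k|+1)/(1-|\rho|-\epsilon)$) instead of writing out the variation-of-constants sum and splitting it; minor index bookkeeping aside (your geometric term should carry $|e_K|$ rather than $|e_{K-1}|$), the arguments are equivalent.
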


\begin{proof}
  Notice that
  \begin{align*}
    b_{k+1}-b_{k+1}' = (\rho-\rho_k)b_k+\rho_k(b_k-b_k')  + (a_k-a_k').
  \end{align*}
Since $|\rho| < 1$ and $a_k$ is bounded. We know that $|b_k| \leq \sup_k |a_k|/(1-|\rho|)$ is also bounded. For any $\epsilon > 0$, there exists $K$, such that for any $k \geq K$, $|\rho-\rho_k|\leq \epsilon$ and $|a_k-a_k'| \leq \epsilon$. Therefore, for $k\geq K$, we have
\begin{align*}
  |b_{k+1}-b_{k+1}'| &\leq |\rho_k|\times |b_k-b_k'| + \epsilon \sup_k|b_k| + \epsilon\\
  &\leq (|\rho|+\epsilon)|b_k-b_k'| + \epsilon \sup_k|b_k| + \epsilon.
\end{align*}
Now since $|\rho|<1$, we can choose $\epsilon$ small enough such that $|\rho|+\epsilon < 1$, therefore,

\begin{align*}
  \limsup_{k\rightarrow\infty}|b_k-b_k'|\leq \frac{\epsilon}{1-|\rho|-\epsilon}(\sup_k|b_k|+1).
\end{align*}
Since $\epsilon$ can be arbitrarily small, $b_k-b_k' \to 0$.
\end{proof}

The proof of Theorem~\ref{theorem:Pconverge} is divided into 2 parts. First, by Lemma~\ref{lemma:finitetoinf}: 
\begin{align*}
	\sum_{t=0}^{k-1} H_t  U_{k-t}  H_t^T
	= \sum_{i=0}^n\sum_{j=0}^n \sum_{t=0}^{k-1} \lambda_i^t\lambda_j^t \Omega_i U_{k-t} \Omega_j^T.
\end{align*}

Therefore, by Lemma~\ref{lemma:convergeexponential}, $\mathscr U_{k,ij}$ converges to $\sum_{t=0}^{k-1} \lambda_i^t\lambda_j^t \Omega_i U_{k-t} \Omega_j^T$, which proves that
\begin{align*}
\lim_{k\rightarrow\infty}  \mathscr U_{k} - \sum_{t=0}^{k-1} H_t  U_k  H_t^T = 0\text{ almost surely}.
\end{align*}

The next step is to prove that
\begin{align*}
\lim_{k\to\infty} Y_k    - \frac{1}{k+1}\sum_{\tau = 0}^k\sum_{t=0}^{\tau-1} H_t  U_{\tau-t} H_t^T = W\text{ almost surely},
\end{align*}
which can be proved by Theorem~6 in \cite{Lyons1988} and examining the expected value of $Y_k$ and its second moment. The details are omitted due to space limit.

\bibliographystyle{IEEEtran}
\bibliography{reference}
%
\end{document}